%% file: main.tex
\documentclass[letterpaper,twocolumn,10pt]{article}
\usepackage{usenix}

\input{packages}

\newcommand{\SADRA}{\textsc{SADRA}}

\begin{document}
 
\date{}

\title{\Large \bf \SADRA{}: Sound Capability-based Access Control  System for Resource-Disaggregated Architectures} 

\author{
{\rm Hamed Rasifard}\\
Saarland University
\and
{\rm Amir Farahani Khojasteh}\\
CISPA Helmholtz Center for Information Security
 \and
 {\rm Hamed Nemati}\\
KTH Royal Institute of Technology
 \and
 {\rm Michael Backes}\\
CISPA Helmholtz Center for Information Security
}

\maketitle

\begin{abstract}
Resource disaggregation separates memory and accelerators from compute nodes and makes them remotely accessible.  This improves resource sharing, but also removes the local kernel from the resource-access path.  Under an untrusted host, compromised host software may use stale authority, exceed delegated authority, or reuse authority provisioned for another process.  Prior work identifies capability-based access control as well suited to these architectures.  Our systematization of twenty-two prior capability systems finds that none combines host-independent validation of process authority, authoritative enforcement at the resource, and revocation that remains effective while remote authorization state is stale.        

We present \textbf{\SADRA{}}, a distributed capability-based access-control system for resource-disaggregated architectures.  SmartNIC hardware isolated from host software independently checks every inter-node request at two points,  first at the compute node against the requesting process's authority and again at the resource against the current authoritative access state.  Linked process, compute, and resource capabilities allow these checks to use local state without coordination on the access path.  When distributed authority is revoked, \SADRA{} denies subsequent dependent accesses at the resource without waiting for remote nodes to update, while stale capability state is reclaimed separately.        

We prove capability safety, authority safety, revocation soundness, and strong isolation for a formal architectural model, and model-check the formalization with SPIN. Our FPGA SmartNIC prototype sustains 89.5\,Gbit/s aggregate throughput, within run-to-run variation of a non-enforcing baseline that peaks at 90--91\,Gbit/s. Cleanup of a 128-capability subtree completes in 528\,ns, while access denial does not depend on completion of that cleanup.
\end{abstract}

\section{Introduction}
Resource disaggregation separates memory and accelerators from compute nodes and pools them over a fabric, improving utilization and elasticity~\cite{keeton2015machine,asanovic2014firebox}.  Prior work identifies capability-based access control as well suited to resource-disaggregated architectures because capabilities combine resource designation,  fine-grained authority,  and delegation in references that remote enforcement points can validate directly~\cite{bresniker2019rack}. In conventional systems, the kernel mediates the resource-access path. However, disaggregation removes the kernel from that path,  while the fabric provides no equivalent trusted mediator, so compromised host software can use stale authority,  exceed delegated authority, or reuse authority provisioned for another process~\cite{rothenberger2021rdmark,taranov2022nevermore}.

The mediation gap is structural.  Complete mediation~\cite{saltzer1975protection} requires every access to be checked against current authority.  Once authority is delegated across nodes, authorization state becomes distributed, and enforcement points may observe revocation at different times.  An enforcement point with stale state may therefore continue accepting requests under revoked authority. The hard problem is denying those requests before distributed state converges.

A concrete example shows how this problem arises.  In a genomics pipeline over disaggregated memory~\cite{becker2020genomics},  a stage delegates read access to a downstream task and later completes its work; the system should then revoke that access (\Cref{fig:gen_ex}).  If requests are not checked against process-specific authority before fabric admission,  compromised host software can continue issuing requests under authority provisioned to the completed stage.  If they are not also checked against authoritative memory-side state before data access, a request admitted under stale compute-side state can still access the protected data after revocation.  The same boundary failures arise in multi-tenant inference over shared embedding tables~\cite{naumov2019dlrm} and kernel-page swapping backed by disaggregated memory~\cite{amaro2020can}.  Neither check alone suffices.  Enforcing both without host-side authorization or coordination on the access path raises three challenges.

The first challenge is restoring complete mediation while enforcing \emph{fine-grained subject authority} under an untrusted host.  A resource-side enforcement point alone can authoritatively decide access to the protected resource, but it cannot bind a request to process-specific authority or stop unauthorized traffic before fabric admission,  because request-origin information comes only from untrusted host.  Thus, fine-grained subject authority requires a second point,  outside host control,  that validates each request against the authority of its issuing process before fabric admission.  That compute-side point,  in turn,  cannot serve as the authoritative decision point for the remote resource.  Each requirement pins enforcement to a different boundary,  and no single placement satisfies both.

The second challenge is revocation.  Once authority is revoked,  one authoritative point must deny every dependent request.  Because authority may be re-delegated across compute nodes,  waiting for remote enforcement points to update or for the affected hierarchy to be reclaimed leaves the propagation-dependent window open.  A compute node may therefore continue issuing requests under stale authority; the resource-side enforcement point must reject them whether or not that node has observed the revocation.

The third challenge is simultaneously supporting the capability-system dimensions that matter in this setting.  We adapt the rack-scale capability design space of Bresniker et al.~\cite{bresniker2019rack} to resource disaggregation under an untrusted host and define six dimensions: inter-node mediation,  subject and object granularity,  delegation,  revocation,  and persistence of capability state across failures.  The difficulty lies in how these dimensions interact; process-granularity validation and authoritative resource decisions require different boundaries,  distributed delegation makes revocation sensitive to propagation,  and persistence must restore delegated authority and prior revocations across failures without reactivating stale authority.  We evaluate twenty-two capability systems against these dimensions (\S\ref{sec:systematization}) and find that none satisfies all six.

The systems most relevant to resource-disaggregated capability enforcement illustrate how enforcement placement,  revocation, and recovery create these gaps.  CEP~\cite{azriel2019memory} enforces only at the memory node and trusts the compute-side OS to assign channels to processes.  A compromised host can therefore reuse a channel provisioned for another process,  while no trusted compute-side point validates process-specific authority before fabric admission.  SemperOS~\cite{hille2019semperos} distributes enforcement across cooperating microkernels,  but cross-kernel revocation requires coordination and delegation-tree traversal.  White et al.~\cite{white2025enabling} provide version-based invalidation,  but their controllers run in trusted host software,  and recovery invalidates stale authority rather than restoring live capability state.  FractOS~\cite{vilanova2022slashing} combines distributed delegation with immediate owner-side revocation in heterogeneous disaggregated systems,  but its trusted software Controllers do not independently validate every inter-node resource request at both the compute and resource boundaries.  Each leaves at least one of these challenges unresolved.

We present \textbf{\SADRA{}},  a distributed capability-based access-control system for resource-disaggregated architectures. \SADRA{} provides host-independent mediation of remote resource access by enforcing requests at both ends of the access path.  Hardware controllers isolated from untrusted host software validate requests against process-granularity authority before fabric admission and against authoritative resource state before resource access.

\subfile{figure_genomeic_example}

Single-boundary enforcement is structurally insufficient: process-granularity authority must be validated before fabric admission,  while the authoritative access decision must be made before resource access.  \SADRA{}'s core insight is that these checks can be enforced independently using local state while forming a \textbf{two-stage enforcement invariant},  without coordination on the access path.  Only the \emph{compute controller} can validate process-granularity authority before fabric admission; only the \emph{resource controller} can make the authoritative access decision before resource access.  Achieving this invariant requires an authority representation that lets each controller perform its check using local state.

\SADRA{} realizes the invariant by decomposing authority into linked \emph{process},  \emph{compute},  and \emph{resource} capabilities,  $c_p \mapsto c_c \mapsto c_r$.  Resource capabilities anchor authoritative access rights at the resource controller.  Compute capabilities derive from that authority and represent the rights delegated to a compute controller.  Process capabilities derive from a compute capability and associate a subset of those rights with an individual process.  Each controller maintains its local delegation state as a rooted \emph{capability tree} and validates the portion of the chain it enforces against that state.  Delegation,  including inter-node delegation,  attenuates existing authority rather than creating independently rooted authority.

The revocation point depends on where authority remains rooted.  A compute controller locally revokes authority confined to its node.  Authority delegated across compute nodes remains rooted at the authoritative resource controller,  enabling \emph{resource-rooted revocation}; the controller fences the revoked subtree and denies every dependent request at the resource boundary,  regardless of whether remote compute controllers have observed the revocation.  Because every resource access crosses that boundary, delayed propagation cannot extend resource access.  A remote compute controller discovers stale re-delegated authority after a denied request,  then fences the stale subtree and reclaims it asynchronously.  The fence determines authorization; cleanup only reclaims state.

Under the untrusted-host model, \SADRA{} satisfies four security properties,  formalized in \S\ref{sec:sec_properties_soundness}: capability safety, authority safety, revocation soundness, and strong isolation.  We prove these properties for a formal architectural model using the assumption-commitment method~\cite{misra1981proofs,de2001concurrency} and model-check the formalization with SPIN.  The argument assumes neither trusted host software nor coordination on the access path.

We implement \SADRA{} on Xilinx Alveo U55C FPGA SmartNICs.  The compute and resource controllers are synthesized hardware logic on the NIC datapath that enforces authorization inline for every protected request in a memory-disaggregated cluster,  without host software on the enforcement path.  This paper makes the following contributions:

\begin{itemize}

\item We introduce a \textbf{two-stage enforcement invariant}: the compute controller validates process-granularity authority before fabric admission,  while the resource controller makes the authoritative access decision before resource access.  Neither check alone suffices,  and the invariant holds without coordination on the access path (\S\ref{sec:sys-design}).

\item We design a two-level revocation mechanism: node-local authority is revoked at its compute controller,  while authority delegated across nodes is fenced at the authoritative resource controller,  protecting the resource without waiting for remote-state propagation.  Decomposing authority into linked process,  compute,  and resource capabilities,  $c_p \mapsto c_c \mapsto c_r$,  keeps remotely delegated authority linked to the resource capability from
which it derives (\S\ref{sec:sys-design}).

\item We prove capability safety,  authority safety,  revocation soundness,  and strong isolation for a formal architectural model using the assumption-commitment method,  and model-check the formalization with SPIN (\S\ref{sec:sec_properties_soundness}).

 \item We implement \SADRA{} on Xilinx Alveo U55C FPGAs and evaluate it on a memory-disaggregated cluster. \SADRA{} sustains 89.5\,Gbit/s, effectively reaching the 90--91\,Gbit/s throughput ceiling of the remote-memory datapath.  Under a Meta-DLRM workload, it remains within 1.8\% of baseline inference throughput while processing up to 1.42 million protected requests/s,  producing outputs bit-identical to the non-enforcing baseline in all 48 paired runs. Cleanup of 128 capabilities completes in 528\,ns and,  for a fixed capability count, is unchanged as the number of subtrees varies from one to eight (\S\ref{sec:evaluation}).
\end{itemize}


\section{Background}
\label{sec:background}

\subsection{Object-Capability Model} 
\label{sec:capability_model}  

Building on the capability model~\cite{dennis1966programming, fabry1974capability}, the object-capability model~\cite{miller2003capability, rasifard2023seal} represents authority through unforgeable references called \emph{capabilities}. A capability binds an object reference to access rights, so a subject invokes an object only by presenting a valid capability. Thus, authority is not ambient in this model; a subject can exercise only the authority conveyed by the capabilities it holds. Delegation transfers no more authority than the delegator holds, allowing authority to be attenuated but not amplified; \SADRA{} formalizes this monotonic derivation relation (\S\ref{sec:sec_properties_soundness}).  

\subfile{figure_obj_cap}

This model handles revocation via indirection (\Cref{fig:cap-obj-mdl}). To grant $B$ access to object  $C$,  $A$ interposes a forwarding proxy $F$ and a revocation point $R$, then revokes the authority by disabling $R$. This indirection suffices when a single enforcement point mediates every access. When several enforcement points check delegated authority but learn of revocation at different times, a revoked capability may remain usable at one that has not yet observed it~\cite{levy1984capability,chase1994capability}. \SADRA{} closes this window by rooting revocation of authority delegated across nodes at the authoritative resource-side controller, which denies dependent requests at the resource boundary whether or not remote compute-side controllers have observed the revocation (\S\ref{sec:sys-design}).

\subsection{Disaggregated Architecture Model} 
\label{sec:disagg-model}  

In a resource-disaggregated architecture, compute nodes access remote memory and accelerators over a high-speed fabric. We consider kernel-bypass datapaths in which requests bypass the local kernel, enter the fabric, and reach the resource node without traversing a software stack. No software mediator lies on this access path, so access control must be enforced explicitly at the interconnect boundaries. Resource-side enforcement also cannot assume the presence of software: some resource nodes expose memory or accelerators through \emph{hardware-only controllers}, with no software stack at the boundary. Each compute and resource node connects to the fabric through a SmartNIC, whose isolated datapath provides an enforcement point outside host-software control.

In multi-tenant deployments, tenants share compute and resource nodes under a hypervisor or container runtime. This creates two trust boundaries: the compute-side boundary, where requests enter the fabric, and the resource-side boundary, where requests leave the fabric before reaching the protected resource. \S\ref{sec:threat-model} defines the trust model, including the untrusted-host assumption. These boundaries provide the setting for \SADRA{}'s enforcement points (\S\ref{sec:sys-design}).

\subsection{SmartNICs as Enforcement Points} 
\label{sec:SmartNICs_bkg}  

Programmable SmartNICs connect each node to the fabric and process datapath traffic at line rate~\cite{liu2019e3,lin2020panic}. Their enforcement logic and security-critical state can be isolated from host software, allowing the host to submit requests through defined interfaces without permitting it to modify that state or bypass datapath checks. This placement provides two structural properties of a reference monitor~\cite{anderson1972reference}: enforcement is invoked on every remote request and is tamper-resistant under the untrusted-host model.  

Deployed systems such as AWS Nitro~\cite{aws2024nitro} and NVIDIA BlueField~\cite{nvidia2023bluefield} use SmartNICs to place security-critical functions outside host software control. Implementing enforcement in programmable SmartNIC hardware also allows capability checks to run as fixed pipelines with bounded latency independent of host load. Prior FPGA work such as $\mu$Shell~\cite{chen2026mushell} implements capability enforcement in programmable hardware for memory access and inter-vFPGA communication.


\section{Threat Model} 
\label{sec:threat-model}  

\noindent\textbf{System Model.} 
We consider a multi-tenant, resource-disaggregated architecture in which compute nodes issue requests over a high-speed fabric to remote resource pools. Each compute and resource node connects to the fabric through a SmartNIC containing a \SADRA{} controller. Every inter-node request to a disaggregated resource traverses the compute controller before fabric admission and the resource controller before resource access.

\noindent\textbf{Adversary.} 
The adversary controls all host-resident software, including tenant processes, kernels, and hypervisors. It may submit arbitrary requests through the host--SmartNIC interface, including requests that omit a capability, modify capability fields, present stale capabilities, reuse capabilities issued to another process, or request operations beyond the permissions a capability conveys. Compromised nodes may also inject, delay, reorder, or drop traffic that they originate. The adversary cannot modify a controller's security-critical state, bypass its datapath enforcement, or forge the integrity protection on a capability or authenticated control-plane message.  

\noindent\textbf{Trust Assumptions.} 
Controllers communicate over authenticated, integrity-protected control-plane channels. The trusted computing base (TCB) comprises the SmartNIC controller logic, its cryptographic keys, and the on-device stores holding capability, delegation, and revocation state. Any firmware or management component that can modify this state is also part of the TCB. We assume that the TCB executes correctly and that the cryptographic primitives remain secure. Host software may interact with the TCB only through defined interfaces.  

\noindent\textbf{Security Goals.} 
\SADRA{} provides \textit{capability safety}, \textit{authority safety}, \textit{revocation soundness}, and \textit{strong isolation} properties, formalized in \S\ref{sec:sec_properties_soundness}. These properties ensure that no request relying on forged, stale, or revoked authority is accepted for resource access and that every accepted inter-node request has passed both controller checks.  

\noindent\textbf{Out of Scope.} We do not address physical attacks, timing or power side channels, compromise of TCB components, or denial-of-service attacks, and provide no availability guarantees under such attacks. Our security claims concern access-control correctness under untrusted hosts.


\section{Security Properties and Soundness}
\label{sec:sec_properties_soundness}

\subsection{Authority Model}  
\SADRA{} models principals in three classes: processes ($\mathbb{P}$), compute controllers ($\mathbb{CC}$), and resource controllers ($\mathbb{RC}$). It partitions resources into disjoint regions, each managed by one resource controller, and presents authority over a region through \emph{capabilities}: integrity-protected tokens that bind a principal, a resource region, and a permission set. Authorization depends on both the validity of the presented capability and the validity of its derivation chain (\S\ref{app:formal-model}).  

\SADRA{} represents delegated authority through linked capability trees maintained by the resource and compute controllers. For each resource region, the resource controller maintains a \emph{resource capability tree} rooted at the region's resource capability. A compute capability derives from exactly one resource capability and delegates a subset of its authority to a compute controller. Each compute controller maintains a local \emph{compute capability tree} rooted at a distinguished virtual root, which serves only as a structural anchor and presents no authority. The controller inserts its compute capabilities below this root and derives process capabilities from the corresponding compute capabilities for individual processes on its node. Thus, every process capability is linked to a compute capability and, through it, to the resource capability from which its authority derives.

Fencing a resource capability in the resource capability tree causes that capability and every capability derived from it to fail validation at the resource controller. After the fence is installed, a compute controller may still retain a compute capability derived from the fenced resource capability in its local state. However, the compute capability can no longer authorize resource access because the resource controller rejects every request that presents it.

\subsection{Two-Stage Enforcement Invariant}

\begin{restatable}[Two-Stage Enforcement]{invariant}{TWOSTENF}
\label{inv:two-stage}

\SADRA{} authorizes an inter-node resource-access request if and only if
both enforcement controllers accept the request.

The compute controller authenticates the process capability, binds it to
the hardware-observed requesting process, validates the
$c_p\mapsto c_c$ authority relation, checks the requested resource extent
and operation, and rejects the request if the corresponding compute
authority is covered by an active local fence.

After this check succeeds, the compute controller replaces the process
capability with the corresponding root-anchored compute capability and
forwards the request.

The resource controller authenticates the forwarded capability, verifies
its association with the corresponding resource capability and the
originating compute node, checks the requested resource extent and
operation, and rejects the request if the corresponding resource
authority is covered by an active resource-side fence.

Resource access is permitted only after both checks succeed.
\end{restatable}

This invariant realizes complete mediation for inter-node resource
access. The compute controller enforces process-specific authority before
fabric admission but cannot make the authoritative resource-access
decision because the resource capability tree and its active fences
reside at the resource controller. Conversely, the resource controller
receives a resource-controller-created, root-anchored compute capability
rather than the process capability and therefore does not independently
establish the process-specific authorization performed at the compute
controller.

The architectural authority chain is
\[
c_p\mapsto c_c\mapsto c_r.
\]
For resource access, the compute controller first validates
$c_p\mapsto c_c$, then forwards the root-anchored compute authority
associated with $c_c$ for the resource controller to validate against
$c_r$.

Both enforcement stages are therefore necessary. A request may pass the
compute-side check while failing the resource-side check, for example
when the compute controller still observes active local authority after
the corresponding resource authority has been fenced at the resource
controller. Such a request is denied before resource access.

Fence installation does not require removal of the corresponding
capability linkage. An active fence independently disables the affected
authority at that controller and causes subsequent requests depending on
that authority to fail its enforcement check.

\subsection{Security Properties} 
\label{sec:security:props}  

The following four properties hold under the threat model (\S\ref{sec:threat-model}). Formal statements and model details appear in Appendix~\ref{app:formal-model}; complete proofs and SPIN model-checking details are deferred to a companion technical report.

\noindent\textbf{Capability Safety.}
A process can acquire a valid capability only through resource allocation or legitimate delegation. Each valid process capability must form a chain $c_p \mapsto c_c \mapsto c_r$ ending at a resource capability in the authoritative resource capability tree.  

\noindent\textbf{Authority Safety.} 
Delegation never amplifies authority. A subject can delegate only rights it already holds, and every delegated capability covers a subset of its parent's permissions and resource extent.  

\noindent\textbf{Revocation Soundness.} For revocation within one compute node, the compute controller invalidates the capability and its descendants; all later compute-controller validations of those capabilities fail. For authority delegated across nodes, revocation takes effect when the resource controller atomically installs a fence on the corresponding resource capability. All later resource-controller validations of dependent capability paths then fail without waiting for propagation to remote compute controllers or for subtree cleanup. Cleanup follows this authoritative denial and does not delay resource protection.  

\noindent\textbf{Strong Isolation.}
When a process allocates a resource region in exclusive mode, \SADRA{} makes the resulting authority non-delegable, including by the allocating process. Since independent allocations are disjoint, no other process can acquire authority over the region through either delegation or a separate allocation.

\subsection{Soundness Argument}  

Using the assumption--commitment (A--C) method~\cite{misra1981proofs,de2001concurrency}, we model the compute and resource controllers as interacting state machines with local commitments. At the compute boundary, the controller accepts a request only after authenticating the process capability, binding it to the hardware-observed requesting process, validating the $c_p \mapsto c_c$ relation, establishing that the corresponding compute authority remains present in controller state and is not covered by an installed local fence, and checking the requested resource extent and operation. At the resource boundary, the controller independently accepts the forwarded request only after authenticating the root-anchored compute capability, validating its association with the corresponding resource capability and originating compute node, establishing that the corresponding resource authority remains present in authoritative resource state and is not covered by an installed resource-side fence, and checking the requested resource extent and operation. Because every protected inter-node resource-access request necessarily traverses both enforcement points under the system model of \S\ref{sec:threat-model}, these local commitments jointly realize the two-stage enforcement invariant.

The two-stage enforcement invariant combines with the capability-creation and delegation rules to establish capability and authority safety. Delegation monotonically attenuates resource extent and permissions, while capability authentication and controller-resident authority state prevent fabricated or reclaimed authority from authorizing access. Strong isolation follows from three properties: independent resource allocations are disjoint, the $\mathtt{exclusive}$ permission is preserved through the initial resource, compute, and process capability chain, and exclusive authority is non-delegable, including by the allocating process.

For revocation rooted in a compute subtree, the compute controller installs a local fence, causing subsequent compute-side checks for the fenced capability and its descendants to fail before structural cleanup. If that subtree contains authority re-delegated to other compute nodes, the corresponding resource-side revocation requests are forwarded to the authoritative resource controller, which installs a fence on each corresponding resource capability. Each such authority is denied at the resource boundary once its resource-side fence is installed. A compute-subtree revocation with no such inter-node descendants is the purely local case.

When a process directly revokes authority that it previously delegated across nodes, the compute controller resolves the process revocation handle to the corresponding compute revocation handle and forwards a copy of that handle to the authoritative resource controller. The compute controller does not install a compute fence on the compute revocation handle because the handle carries no usable authority. The resource controller authenticates the handle, verifies that its source-node binding matches the trusted origin of the revocation request, and installs the corresponding resource-side fence. Knowledge of a resource-side identifier alone does not authorize revocation.

After the resource controller rejects a request that presents stale authority, the compute controller that forwarded the request may install a local fence over the corresponding stale compute capability and its subtree. This containment reduces stale traffic and enables subsequent reclamation, but authoritative resource protection does not depend on its completion. More generally, fence installation determines the security effect at the controller where the fence is installed, while propagation, subtree traversal, and reclamation proceed separately from the resource-access path.      

The formal architectural model and supporting invariants appear in \S\ref{app:formal-model}. Complete proofs of capability safety, authority safety, revocation soundness, and strong isolation are deferred to a companion technical report.


\subfile{table_RW}

\section{Systematization}
\label{sec:systematization}

We systematize twenty-two prior capability systems along six design dimensions tailored to resource disaggregation under an untrusted host. We adapt the dimensions most relevant to our setting from the rack-scale capability design space of Bresniker et al.~\cite{bresniker2019rack} to resource disaggregation under an untrusted host. These dimensions complement the broader taxonomy of capability systems developed by Mauthe et al.~\cite{mauthe:usenix26}. Inter-node mediation, subject granularity, and object granularity characterize whether and at what granularity a system mediates the inter-node resource-access path, while delegation, revocation, and persistence characterize how authority can evolve securely. We classify each system as providing full, partial, or no support for each dimension. No prior system satisfies all six simultaneously.

\subsection{Design Dimensions} 
\label{sec:features-required}  

We define six dimensions and classify each system as providing \emph{full}, \emph{partial}, or \emph{no} support for each.  

\noindent\textbf{Inter-Node Mediation.} Under the untrusted-host model, complete mediation requires request validation before fabric admission and an authoritative access decision before resource access; no single enforcement point can simultaneously provide both. \emph{No}: the design does not mediate an inter-node resource-access path; shared-memory multiprocessor access alone does not qualify. \emph{Partial}: it mediates access at only one boundary or requires coordination between enforcement points on the access path. \emph{Full}: it independently mediates access at both the compute and resource boundaries; authority that never crosses a node boundary may be handled locally.  

\noindent\textbf{Subject Granularity.} The finest unit at which a system distinguishes principals. \emph{No}: the system does not distinguish subjects for access control. \emph{Partial}: authority is associated with a host, VM, tenant, code domain, or principal other than an individual process. \emph{Full}: authority is associated with an individual process.  

\noindent\textbf{Object Granularity.} The finest unit of resource state over which authority can be independently granted and constrained. \emph{No}: the system does not distinguish protected resource objects. \emph{Partial}: authority is limited to predefined objects or fixed-granularity units, such as pages, segments, typed objects, or fixed-size regions. \emph{Full}: authority can independently cover arbitrary sub-object ranges, including byte-granular ranges for byte-addressable resources.  

\noindent\textbf{Delegation.} Whether a principal can transfer a subset of its authority without amplification. \emph{No}: the system does not support authority transfer. \emph{Partial}: it supports transfer but does not enforce monotonic attenuation of permissions or resource extent. \emph{Full}: every delegated capability is restricted to a subset of its parent's permissions and resource extent.

\noindent\textbf{Revocation.} How a system invalidates granted authority. \emph{No}: it provides no revocation mechanism. \emph{Partial}: revocation is coarse-grained, depends on trusted host software, or requires distributed propagation before all dependent access is denied. \emph{Full}: an authoritative revocation action causes later validations of dependent authority to fail before resource access without waiting for propagation.  

\noindent\textbf{Persistence.} Whether the architecture provides recovery semantics that preserve valid authority and prior revocations across failures or reboots without reactivating stale authority. \emph{No}: the architecture does not define recovery of prior authority, or requires capabilities to be reacquired after failure. \emph{Partial}: the architecture preserves some authority state, but capability, delegation, or revocation state may be lost or recovered inconsistently. \emph{Full}: the architecture provides recovery of capability, delegation, and revocation state such that valid authority remains available without reprovisioning and stale authority cannot be reactivated.

\subsection{Systematization Results} 
\label{sec:systematization-results}  

Support across the twenty-two prior systems varies among the six dimensions (\Cref{tab:coveredCriteria}). Subject granularity, object granularity, and delegation receive at least partial support in 95\%, 100\%, and 95\% of the systems, respectively. Revocation is supported by 91\%, but only a small subset provides authoritative revocation independent of distributed-state convergence; the remaining systems rely on coordination, traversal, or coarse OS-mediated invalidation. Inter-node mediation and persistence are substantially less common. No prior system independently validates every inter-node request for resource access at both the compute and resource boundaries. The 45\% with inter-node enforcement mediate at only one boundary, leaving either process-granularity validation before fabric admission or authoritative validation before resource access absent. Persistence receives at least partial support in 27\% of the systems and full support in two.

No prior capability system realizes all six dimensions simultaneously. Inter-node designs commonly rely on coordination, delayed propagation, or centralized state, preventing revocation from taking effect at the authoritative resource boundary independently of distributed-state convergence. Memory-disaggregated designs come closest, but each lacks either dual-boundary enforcement or the required revocation and persistence properties. Thus, we use CEP~\cite{azriel2019memory}, White et al.~\cite{white2025enabling}, SemperOS~\cite{hille2019semperos}, and FractOS~\cite{vilanova2022slashing} as comparison baselines. CEP and White et al.\ target memory-disaggregated access, SemperOS provides the closest comparison for distributed delegation and revocation through software capability management, and FractOS provides the closest comparison for immediate owner-side revocation in disaggregated systems.

\subsection{State of the Art} 
\label{sec:state-of-the-art-comparison}  

\noindent\textbf{CEP.} 
CEP~\cite{azriel2019memory} enforces capabilities at the memory node, where a coprocessor mediates access after the compute node's MMU checks. A process is associated with its CEP context through compute-side fabric identifiers, so CEP relies on the application OS to preserve channel integrity. Compromised host software can therefore misuse a channel provisioned for another process. The prototype dedicates one servlet to each client process and relies on the client OS for fair arbitration, coupling controller resources to host-managed process channels. Revocation recursively traverses a derivation DAG, so its completion depends on the delegation history rather than on a fence checked before each dependent access.  

\noindent\textbf{SemperOS.} 
SemperOS~\cite{hille2019semperos} distributes capability management across independent $\mu$-kernels, each responsible for a group of processing elements. Cross-kernel delegation and revocation require inter-kernel messages. Revocation uses a two-phase mark-and-sweep traversal. A group-spanning revocation takes roughly three times as long as a local one, and although marked capabilities are immediately blocked from further exchange, subtree deletion waits for remote replies. SemperOS limits concurrent revocation work to two threads per kernel, but completion still depends on inter-kernel communication and traversal.  

\noindent\textbf{White et al.} 
White et al.~\cite{white2025enabling} use per-node controllers, per-process capability tables, and versioned references. Invalidating a guard makes later use of guarded capabilities fail without distributed traversal, and delegation is attenuating with explicit withdrawal. The controllers, however, execute as trusted host software rather than as enforcement points isolated from the host kernel, so the design does not provide independent dual-boundary enforcement under an untrusted host. After a controller failure, stale authority is invalidated and applications must reacquire capabilities; live capability, delegation, and revocation state is not restored.

\noindent\textbf{FractOS.} FractOS~\cite{vilanova2022slashing} provides distributed capabilities for memory and service objects in heterogeneous disaggregated systems. Trusted Controllers run as user-level Linux processes on CPUs or SmartNICs, and delegation may span Controllers. FractOS makes revocation effective at the owning Controller by invalidating the referenced object, with separately revocable objects organized in a revocation tree providing selectivity. Capabilities referring to invalidated objects are cleaned up later, so denial does not wait for distributed cleanup. Controller failure revokes the affected Processes' capabilities, and reboot counters detect stale capabilities rather than restore live authority. FractOS does not independently validate every inter-node resource request before fabric admission and again before resource access, so it does not provide dual-boundary enforcement under an untrusted host.

\noindent\textbf{Comparison Summary.} 
Consider a delegated remote-memory capability that is revoked while the host OS is compromised. A resource-boundary-only design cannot establish that a request was validated against the authority of its issuing process because request-origin information comes from untrusted host software. A compute-boundary-only design cannot ensure denial at the resource boundary after revocation if its local state is stale. Sound enforcement therefore requires process-granularity validation before fabric admission and an authoritative access decision before resource access.  

CEP enforces access in memory-side hardware, but trusts compute-side channel assignment and invalidates descendants through derivation-graph traversal. SemperOS distributes authority across kernels, but revocation depends on inter-kernel coordination and traversal. White et al.\ provide version-based invalidation, but retain trusted host software on the enforcement path and do not restore live authority after failure. FractOS provides distributed delegation and selective owner-side revocation without waiting for distributed cleanup, but its trusted software Controllers do not independently validate every inter-node resource request at both the compute and resource boundaries. None combines independent validation at both boundaries under the untrusted-host model with resource-boundary denial that does not wait for propagation to remote compute controllers. Adding a second checkpoint alone is insufficient; the controllers must validate linked authority from local state without coordinating on the access path. \SADRA{} realizes this through the two-stage enforcement invariant (\autoref{inv:two-stage}) and the capability chain $c_p \mapsto c_c \mapsto c_r$ (\S\ref{subsec:our-cap-model}). 

This design places trusted controller logic at both boundaries. It does not reduce the TCB; instead, it moves enforcement outside host software control. Once installed, the resource-controller fence denies dependent requests at the resource boundary, while compute-controller containment and subtree cleanup proceed separately.


\subfile{figure_systems}

\section{System Design}
\label{sec:sys-design}

\SADRA{} realizes the two-stage enforcement invariant (\S\ref{sec:sec_properties_soundness}) through hardware controllers at both boundaries: each inter-node request is validated against local authority state before fabric admission and before data access, with no synchronous coordination between controllers or with host software.

The design maps to the four security properties of \S\ref{sec:security:props}. The capability model (\S\ref{subsec:our-cap-model}) supports capability safety by making every valid capability depend on a resource capability at the resource controller, and authority safety by requiring each delegated capability to encode a subset of the delegator's rights. The controller architecture (\S\ref{subsec:dis-acc-controllers}) supports strong isolation by denying other processes access to exclusively allocated regions. Revocation soundness comes from the resource-side fence: a fixed-latency standing deny over the revoked subtree, independent of delegation depth, while compute-side containment and structural cleanup proceed separately and never delay data protection. The following subsections describe each component.

\subsection{Disaggregated System Model}
\label{subsec:abstract_model}

In a resource-disaggregated system, compute nodes host processes that issue requests, while resource nodes manage and serve hardware resources such as memory or accelerators (\Cref{fig:combined_system}(\subref{fig:rda_system})). Each protected inter-node request crosses two enforcement boundaries: the compute-side boundary before fabric admission and the resource-side boundary before access to the managed resource.

Each node attaches to the fabric through an enforcement point that mediates protected requests crossing its boundary. We instantiate this point as a SmartNIC,  but the model requires only boundary-resident logic whose security-critical state is isolated from host-software control. The host may interact with the enforcement point through defined interfaces, but cannot modify its authority state or bypass its datapath checks. In multi-tenant deployments, tenants may share compute and resource nodes under a hypervisor or container runtime; the authority model applies independently of that partitioning.

We formalize this structure with an abstract model $\mathcal{M} = (\mathbb{N}_r, \mathbb{N}_c, \mathbb{R}, \mathbb{P}, F)$, where $\mathbb{N}_r$ is the set of resource nodes, $\mathbb{N}_c$ is the set of compute nodes, $\mathbb{R}$ is the set of managed resources, $\mathbb{P}$ is the set of processes, and $F$ represents the inter-node fabric. Each resource node $N^j_r \in \mathbb{N}_r$ manages a subset of resources $R_j \subseteq \mathbb{R}$, while each compute node $N^k_c \in \mathbb{N}_c$ hosts a subset of processes $P_k \subseteq \mathbb{P}$. We use this model throughout the paper; its formalization appears in \Cref{app:formal-model}.

\subsection{SADRA Capability Model}
\label{subsec:our-cap-model}

\SADRA{} uses two types of hardware controllers. Compute controllers ($CC_k \in \mathbb{CC}$) reside on compute nodes and enforce authorization before fabric admission, while resource controllers ($RC_j \in \mathbb{RC}$) reside on resource nodes, maintain authoritative resource capability state, and enforce authorization before resource access. \S\ref{subsec:dis-acc-controllers} describes their architectural organization and interaction. The capability model defines the authority structures these controllers maintain and the relationships among them.       

\noindent\textbf{Capability Taxonomy.} \SADRA{} classifies capabilities by residency and scope into \textit{resource} capabilities ($\mathbb{C}_r$), \textit{compute} capabilities ($\mathbb{C}_c$), and \textit{process} capabilities ($\mathbb{C}_p$) (\Cref{fig:caps}). Resource and compute capabilities reside in controller-managed state that host software cannot directly modify. Process capabilities are protected tokens issued to applications and bound by the compute controller to individual processes; fabricated or altered tokens cannot authorize requests.       

Each $c_p$ binds authority to a process, preventing transfer between co-located processes. Each $c_r$ encodes precise access bounds, such as memory offsets or resource slices, supporting isolation below page or segment boundaries.

\noindent\textbf{Capability Chain Invariant.}
The three capability types form a linear chain of authority:
$
c_p \mapsto c_c \mapsto c_r.
$
\SADRA{} authorizes a request only when its presented capability validates against the applicable link of a valid chain rooted at a resource capability managed by a resource controller. The compute controller validates the $c_p \mapsto c_c$ link locally, while the resource controller validates the $c_c \mapsto c_r$ link against authoritative resource-side state. Each controller validates its portion independently.

Invalidating a link invalidates authority derived through that link. For resource-capability revocation, the resource controller first installs a fence that denies accesses through the revoked subtree at the authoritative resource boundary; compute-side containment and structural materialization then proceed separately. Delegation can only narrow authority: every derived capability encodes a subset of its parent's rights over a subset of its parent's resource region.

\subfile{figure_sadra_caps}
\subfile{figure_example_trees}

\subsubsection{Indicator Flags and Delegation Tracking}
\label{subsec:indicator-flag}

\SADRA{} tracks delegation relationships through a one-bit indicator flag in each compute and process capability descriptor. An unset flag denotes usable authority; a set flag denotes a revocation handle rather than a usable right. The flag preserves delegator-side revocation references for both local process-to-compute links $c_p \mapsto c_c$ and inter-node compute-to-resource links $c_c \mapsto c_r$.

When a process delegates authority, \SADRA{} creates the capability issued to the recipient and a separate capability retained for revocation control by the delegator. The retained capability has its indicator flag set and therefore acts as a revocation handle. The delegator's original usable capability remains unchanged. Hardware rejects attempts to use the handle for reads, writes, or further delegation. The handle carries no operational authority; it retains the linkage needed to identify the delegated authority and serves only as the reference through which the delegator initiates revocation.

\subsubsection{Capability Trees}
\label{subsec:capability-tree}

\SADRA{} organizes capabilities into rooted trees maintained by compute  and resource controllers. These trees represent delegation and revocation dependencies that controllers consult when validating protected requests.

Each resource controller roots its tree at a resource capability representing authority over a resource region. Before data access, it validates that the presented compute capability in a request maps to a resource-side capability under this root and that no active fence covers its delegation path. Each compute controller maintains a tree of compute capabilities derived from resource capabilities and tracks local delegation among processes. Process capabilities derive from compute capabilities and represent the authority exercised by applications.

The tree structure captures dependency relationships among delegated capabilities. For resource-capability revocation, the resource controller installs a fence at the authoritative resource boundary that denies accesses through the revoked subtree. This is a fixed-latency authorization check and takes effect independently of delegation depth. Structural cleanup of the affected subtree then proceeds separately and does not delay data protection. Because the fence already denies every access through the revoked subtree, not-yet-reclaimed capabilities cannot authorize access until cleanup removes them, so deferring removal does not affect authorization decisions.

\subsubsection{Capability Operations}
\label{subsec:cap-ops}

\SADRA{} manages resource access through three operations: allocation, delegation, and revocation. \Cref{fig:cap-trees1} illustrates the capability tree state at each step. We describe each operation for a representative process $p \in P_k$ accessing resource $r \in R_j$.      

\noindent\textbf{Resource Allocation.} 
$RC_j$ creates $c_r^1$, encodes the granted rights, and inserts it into its resource tree. It derives $c_c^1$ ($c_c^1 \mapsto c_r^1$) and sends it to $CC_k$ (\Cref{fig:cap-trees1}\ref{fig:tree-2}), which inserts it, derives $c_p^1$ ($c_p^1 \mapsto c_c^1$), and returns it to $p$ (\Cref{fig:cap-trees1}\ref{fig:tree-3}).

\noindent\textbf{Intra-Node Delegation.} 
$CC_k$ derives $c_c^2$ from $c_c^1$ with a subset of its rights over a subset of its resource region, derives $c_p^2$ for recipient $p'$, and issues a tagged capability to $p$ as a revocation handle (\Cref{fig:cap-trees1}\ref{fig:tree-4}). The compute controller performs this operation without contacting the resource controller.

\noindent\textbf{Intra-Node Revocation.} 
$p$ invokes its revocation handle. $CC_k$ invalidates $c_c^2$ in its tree, breaking the chain for all process capabilities that depend on it (\Cref{fig:cap-trees1}\ref{fig:tree-5}). Subsequent requests from $p'$ fail at the compute controller. This revocation is local to $CC_k$ and does not involve the resource controller.

\noindent\textbf{Inter-Node Delegation.} 
$RC_j$ creates a new resource capability granting a subset of the delegating capability's rights over a subset of its resource region. It then derives a compute capability for the recipient node and a compute revocation handle for the delegating node, and sends each to the corresponding compute controller (\Cref{fig:cap-trees1}\ref{fig:tree-6}). The recipient compute controller derives and issues a process capability to the recipient process, while the delegating compute controller creates a corresponding process revocation handle for the delegating process (\Cref{fig:cap-trees1}\ref{fig:tree-7}).

\noindent\textbf{Inter-Node Revocation.}
$p$ invokes its revocation handle. $CC_k$ resolves it to the corresponding compute revocation handle, removes that handle from its tree, and forwards the revocation request to $RC_j$
(\Cref{fig:cap-trees1}\ref{fig:tree-8}). $RC_j$ then installs a fence at the resource boundary, denying accesses through the revoked subtree (\Cref{fig:cap-trees1}\ref{fig:tree-9}).  Once installed, the resource controller performs a fixed-latency fence check for each request and denies access without waiting for descendant-state propagation.  A request may still pass compute-side validation if $CC_{k'}$ retains stale state, but fails at the resource boundary because the fence covers the subtree. On learning of the revocation or observing the denial, $CC_{k'}$ installs a local fence over the stale compute capability and derived subtree, then prunes that state asynchronously.

The formal model underlying these operations appears in \Cref{app:formal-model}.

\subsection{Persistent Capability State and Recovery} \label{subsec:persistent-capability-state}  \SADRA{}'s controller-maintained capability state provides the basis for persistent recovery. In conventional systems, authority state is often maintained by host software, requiring recovery to reconstruct authorization relationships from software-managed state. In contrast, \SADRA{} records delegation and revocation state explicitly in the capability trees maintained by the compute and resource controllers. Recovery therefore restores the authority state enforced by the trusted controllers rather than reconstructing it from an untrusted host.

Persistent controller state includes capability-tree relationships, controller-local authority identifiers and their allocation counters, delegation and revocation-control state, and active fences. The controller-local identifier counters are monotonically increasing, non-wrapping persistent state. A controller durably advances the corresponding counter before authority carrying a newly allocated $\mathit{cid}$ or $\mathit{rid}$ becomes usable. Recovery therefore cannot cause a reclaimed identifier to be reassigned to different authority.  

Restoring the capability trees and fence state preserves the authority relationships required by two-stage enforcement. Process capabilities remain associated with compute-side authority through $c_p\mapsto c_c$, while compute-side authority remains associated with resource-side authority through $c_c\mapsto c_r$. A process token retained by untrusted software does not regain authority merely because it survives a failure: it can authorize a request only if it resolves to active controller-resident authority and the request passes both enforcement stages.

\SADRA{} orders capability creation so that newly created authority becomes usable only after the controller state required to establish its authority chain has been durably committed. Incomplete capability creation is therefore not made usable after recovery. This applies to initial allocation as well as intra-node and inter-node delegation.

Revocation requires stronger recovery ordering because cleanup may span multiple controllers. A fence is durably established before capability state protected by that fence can be reclaimed. Consequently, a failure during local cleanup restores the fence before reclamation continues, so authority disabled before the failure does not become active again.

For a fenced compute subtree that contains inter-node revocation handles, recovery must also preserve outstanding resource-side revocation obligations. A compute revocation handle is retained until the authoritative resource controller confirms that the corresponding resource authority has been fenced or already reclaimed. Thus, a crash after installation of a local compute fence but before completion of remote revocation cannot erase the information needed to finish that revocation. After recovery, the compute controller replays every unresolved resource-side revocation before retiring the corresponding handle or reclaiming the fenced subtree.

For resource-rooted revocation, the resource-controller fence is the authoritative security-effect point for resource access. The resource controller durably commits this fence before acknowledging completion to the source compute controller. If either controller fails after the resource fence is installed, recovery restores the fence before affected resource authority can become active. If the source fails before receiving the acknowledgment, it may safely resend the retained revocation request; resource-side revocation is idempotent when the target is already fenced or has already been reclaimed.

After recovery, controllers restore committed capability and fence state, discard incomplete authority-creation transitions, and replay unresolved revocation transitions. The recovery invariant is therefore that recovery neither creates authority that was not previously committed nor reactivates authority already disabled by a committed fence. Pending inter-node revocations remain represented by their retained compute handles until the authoritative resource-side state has been secured.   

\subsection{SADRA Architectural Model}
\label{subsec:dis-acc-controllers}

\SADRA{} places hardware-resident controllers at the compute-side boundary before fabric admission and the resource-side boundary before data access to realize the two-stage enforcement invariant (\autoref{inv:two-stage}), as shown in~\Cref{fig:sadra_system}. A \textit{compute controller} resides on each compute node, while a \textit{resource controller} resides on each resource node. Both reside in SmartNIC hardware whose enforcement state is isolated from host-software control. Every protected inter-node request is validated at both points on the data path.  

\noindent\textbf{Compute Controller.} The compute controller performs two functions on the hardware datapath. First, \textbf{validation before fabric admission}: it checks that the requesting process holds a capability locally valid with respect to its delegation and revocation state, and drops requests that fail at the source. Second, \textbf{intra-node delegation}: it transfers attenuated authority between co-located processes ($p, p' \in P_k$) without remote-node involvement or network coordination.       

\noindent\textbf{Resource Controller.} The resource controller is the authoritative enforcement point for its local resource subset $R_j$ and maintains authoritative resource capability state. It handles resource allocation across the fabric and manages inter-node delegation and revocation when participating processes reside on different compute nodes. Before resource access, it independently validates every arriving request against the corresponding resource-side capability state and active fences. The resource controller rejects requests that fail.      

When the resource controller revokes a resource capability, it installs a fence that denies every access through the revoked subtree at the resource boundary. This denial takes effect independently of whether corresponding compute controllers have fenced or pruned stale local state.      

Each controller makes its authorization decision locally. For resource-capability revocation, the resource-side fence is the authoritative effect; structural cleanup and compute-side containment proceed separately. This organization removes coordination bottlenecks on the access path while preserving authoritative resource-side protection.


\section{Implementation} 
\label{sec:implementation}  

\SADRA{} implements the compute and resource controllers as FPGA logic in the data path (\Cref{fig:impl}), placing authorization checks on the request path at both interconnect boundaries. This implementation realizes the two-stage enforcement invariant (\autoref{inv:two-stage}) in hardware. Each controller validates every inter-node request at its boundary, while controller-local enforcement state remains isolated from host software.  

\noindent\textbf{Controller Architecture.} Each compute node hosts a compute controller ($CC$), while each resource node
hosts a resource controller ($RC$). Both controllers are implemented on Xilinx Alveo U55C FPGAs and synthesized as timing-closed designs at a 250\,MHz operating frequency. The implementation comprises approximately 11\,K lines of VHDL for the resource controller and 9\,K lines of VHDL for the compute controller.  The compute controller integrates with the Xilinx QDMA software stack, whose PF/VF kernel drivers configure the SR-IOV virtual functions (VFs) shown in \Cref{fig:impl}.  These drivers take no part in capability validation or security-critical state updates.  

\noindent\textbf{Capability Representation.} Each capability is a 256-bit token. Its identity comprises a controller-local slot identifier, a generation number, and a MAC tag. The remaining fields encode the permission mask, resource extent, capability type, and indicator flag. The controller verifies the tag before accepting the capability, so a token fabricated or modified by host software cannot authorize a request.

\noindent\textbf{Datapath Integration.} 
The implementation uses QDMA VFs to provide tenants with a kernel-bypass path to the compute controller's enforcement pipeline before the networking stack. The compute controller validates the process capability in a request by checking its MAC tag, permissions, resource range, active local fences, and binding to the hardware-observed QDMA queue identifier. These checks execute in bounded pipeline stages without host-dependent execution delays.

\noindent\textbf{Capability State Management.} Each controller stores its capability trees and revocation state in on-device SmartNIC memory and consults this state on every protected request.  The active fence check is a fixed-latency operation within the authorization pipeline. Subtree cleanup proceeds separately and does not delay resource protection.  The controllers perform all security-critical capability, delegation, and revocation updates, preserving the TCB and isolation assumptions of \S\ref{sec:threat-model}.  

\noindent\textbf{Networking Stack Integration.}  A customized EasyNet network stack~\cite{he2021EasyNet} provides TCP/IP processing on the FPGAs, while Mellanox ConnectX-5 NICs provide the 100\,Gbps physical links through a non-blocking optical switch.

\section{Evaluation}
\label{sec:evaluation}

We evaluate \SADRA{} on FPGA hardware, focusing on datapath performance,  capability cleanup,  and implementation-level security validation. We measure the cost of enforcing the two-stage invariant (\autoref{inv:two-stage}) at both boundaries and verify through adversarial scenarios that invalid and revoked authority is rejected at the expected enforcement point.  

We organize the quantitative evaluation around two questions. \textbf{RQ1 (Datapath performance):} What throughput can \SADRA{} sustain, and what latency and throughput overhead does enforcement introduce? \textbf{RQ2 (Comparison):} How does \SADRA{} compare with prior capability systems in enforcement architecture, revocation, failure recovery, and performance?

\subfile{figure_impl}

In addition, we evaluate representative scenarios involving forged, over-privileged, stale, and revoked authority. For each scenario, we record the boundary that rejects the request and whether any unauthorized request reaches the protected resource (\S\ref{sec:security-validation}).

\noindent\textbf{Experimental Setup.}
We evaluate \SADRA{} on three nodes: two compute nodes and one memory node. Each compute node runs up to 32 tenants in Podman containers, with one SR-IOV virtual function assigned to each tenant. Experiments using both compute nodes therefore include up to 64 concurrent tenants.  The baseline retains the QDMA, networking, transport, and resource-access paths but without \SADRA{} capability processing. We compare each \SADRA{} configuration with its corresponding measured baseline to determine the overall runtime overhead of enforcement. We repeat each configuration three times on each compute node and all measurements cover the complete end-to-end datapath.

\subfile{figure_throughput}
\subfile{figure_latency}

\subsection{Datapath Performance (RQ1)} 
\label{subsec:rq1}  

We measure absolute throughput, throughput relative to the baseline defined above, and per-request latency as the window size, tenant count, and payload size vary.  

The application-agnostic microbenchmark generates controlled remote-memory traffic over the parameter ranges defined in the setup, isolating the datapath cost of \SADRA{}'s two-stage validation and linked capability chain from application-specific behavior. The multi-tenant embedding-lookup workload is representative of deep-learning recommendation systems~\cite{naumov2019dlrm}. It exercises the enforcement datapath under application-driven access patterns, including concurrent access to a shared embedding table in disaggregated memory, private per-tenant write buffers, and skewed memory accesses.  

In both workloads, \SADRA{} validates requests at the compute and resource controllers.

\subsubsection{Application-Agnostic Microbenchmarks}
\label{subsec:appagnostic}

We use application-agnostic microbenchmarks to measure the end-to-end datapath cost of \SADRA{} independently of application computation. The baseline uses the same QDMA, network, transport, and remote-memory datapaths without capability enforcement. In the \SADRA{} configuration, every protected request additionally passes through the compute and resource controllers. The comparison therefore measures the performance cost of \SADRA{} relative to the corresponding non-enforcing remote-memory datapath.        

\noindent\textbf{Workload and methodology.} The microbenchmark uses a closed-loop, windowed request pattern. In each iteration, every tenant submits up to $W$ writes and waits for all responses, then submits up to $W$ reads and again waits for all responses. Each tenant transfers 1\,GiB of write data and 1\,GiB of read data over the complete run. We evaluate $W\in\{1,2,4,8\}$ and payload sizes of 512\,B, 1\,KiB, 2\,KiB, and 4\,KiB. The two compute nodes generate traffic concurrently, with 2, 4, 8, 16, 32, or 64 total tenants distributed equally between them. Each tenant accesses a distinct region of the 16\,GiB remote-memory space. $W=1$ corresponds to synchronous request--response execution, while larger windows increase the number of outstanding requests and drive the datapath toward saturation.

We evaluated each configuration in three synchronized rounds on both compute nodes. For throughput, we sum the measurements from the two nodes within each round and report the mean aggregate throughput across rounds.  Throughput overhead is computed relative to the paired baseline; a negative value therefore means only that \SADRA{} measured slightly faster in that configuration. For latency measurements, request construction, metadata preparation, response-mailbox clearing, and required cache preparation occur before the timer starts. RTT is measured from request submission until the corresponding completion response is received. We report the difference in mean RTT between \SADRA{} and the paired baseline, with paired 95\% confidence intervals.       

\noindent\textbf{Throughput.} \Cref{fig:aae_throughput} shows that \SADRA{} closely tracks the baseline throughout the evaluated operating range. Of the 96 configurations, 94 differ from the baseline by at most 1\% in either direction. Throughput overhead ranges from $-0.78\%$ to $1.56\%$, and 32 configurations have a negative value. The presence of both small positive and negative differences indicates that the enforcement cost is comparable to run-to-run variation; the negative values should not be interpreted as an inherent performance benefit from capability checking.         

Throughput increases with payload size, tenant count, and window size until the underlying remote-memory datapath reaches saturation. At the highest-load configuration, with 64 tenants, 4\,KiB requests, and $W=8$, \SADRA{} reaches 89.542\,Gbit/s, with a throughput overhead $<0.01\%$,  compared with 89.545\,Gbit/s for the baseline. Both are effectively at the measured 90--91\,Gbit/s throughput ceiling of the remote-memory datapath. \SADRA{} therefore reaches the platform's saturation region without capability enforcement becoming a measurable throughput bottleneck.

\noindent\textbf{Request latency.} \Cref{fig:aae_latency} shows the corresponding request RTTs. Absolute RTT increases with larger windows and tenant counts as queueing increases, while the difference between \SADRA{} and the paired baseline remains small across the evaluated operating range. The increase in mean RTT remains below 1\,\textmu s in every configuration and is typically only a few tenths of a microsecond. The paired 95\% confidence intervals further show that many of the measured differences are comparable to run-to-run variation. These results indicate that the additional validation performed by \SADRA{} adds little to end-to-end request latency even as the underlying datapath approaches saturation.

The application-agnostic measurements show that \SADRA{} drives the remote-memory datapath to its throughput ceiling while adding only a small latency cost. These microbenchmarks isolate enforcement behavior under controlled request patterns; the application-driven evaluation examines the same enforcement datapath under an application workload.

\subsubsection{Multi-Tenant ML Workload}
\label{subsec:mlworkload}

We complement the application-agnostic microbenchmarks with a multi-tenant recommendation-inference workload based on Meta's DLRM architecture~\cite{naumov2019dlrm}. The purpose of this experiment is to evaluate \SADRA{} under an application-driven access pattern rather than to compare remote and local memory. The baseline therefore executes the same remote-memory workload without capability enforcement.

\noindent\textbf{Workload and methodology.} The model contains 13 dense features, a $13$--$64$--$32$ bottom MLP, 26 categorical features with 26 embedding tables, dot-product feature interaction, and a $383$--$256$--$64$--$1$ top MLP. Dense computation and input data remain local, while embedding-table accesses and result writes use remote memory. We use batch size 32 and 100,000 test records per tenant.

Each inference generates 26 distinct remote embedding reads and one result write. Each embedding lookup is issued as a separate 512\,B read request, the minimum payload granularity supported by the remote-memory datapath; the required 128-byte embedding vector is extracted locally from the returned block. A tenant writes the inference result with one separate 4\,KiB write request. Thus, every inference generates 27 protected remote requests and transfers 17,408\,B. The window size $W\in\{1,2,4,8\}$ controls only the number of outstanding remote operations and does not coalesce requests.

\subfile{figure_ml}

We evaluate 2, 4, 8, and 16 total tenants across the two compute nodes; the application-agnostic microbenchmarks separately evaluate datapath scaling up to 64 tenants. We measure each ML configuration in three synchronized paired baseline--\SADRA{} rounds. A two-node round serves as the statistical unit, with inference throughput summed across the two nodes and remote-operation latency weighted by the number of completed remote read and write operations. Because the model uses deterministic parameters rather than a trained checkpoint, we evaluate systems behavior and output equivalence rather than model accuracy.

\noindent\textbf{Application correctness.} Across all 48 paired baseline--\SADRA{} rounds, the prediction hashes match. Thus, every evaluated \SADRA{} execution produces bit-identical prediction outputs to its corresponding non-enforcing baseline. This provides application-level evidence that routing the embedding accesses and result writes through the capability-enforcement path preserves the computation's outputs.

\noindent\textbf{Inference throughput.} \Cref{fig:ml_performance} shows the signed inference-throughput overhead relative to the baseline. Across the 16 evaluated configurations, 12 are within 1\% of the baseline and all are within 2\%. The largest positive overhead is 1.76\%. Ten configurations have negative signed overhead; as in the application-agnostic measurements, these small differences should be interpreted as measurement, scheduling, and queueing variation rather than as an inherent performance benefit from enforcement.

The highest inference throughput occurs with 16 tenants and $W=2$. The baseline reaches 52,824 inferences/s, while \SADRA{} reaches 52,678 inferences/s, a difference of 0.276\%. Thus, at the workload's highest-throughput operating point, capability enforcement has little effect on the throughput.

\noindent\textbf{Authorization rate and remote-operation latency.} Although this workload does not saturate network bandwidth, it generates a substantial rate of protected operations because every inference produces 27 separate remote requests. At the peak \SADRA{} operating point, the system processes approximately 1.42 million protected remote requests/s, comprising about 1.37 million embedding read operations per second and 52.7 thousand result write operations per second, while transferring 7.34\,Gbit/s. The corresponding baseline traffic is 7.36\,Gbit/s. The middle panel of \Cref{fig:ml_performance} reports the protected-request rate across the evaluated configurations.

The bottom panel reports \SADRA{}'s mean remote-operation RTT. Increasing the window primarily increases the number of outstanding requests and therefore queueing latency. Across all configurations, the largest positive increase in mean remote-operation RTT relative to the paired baseline is 0.638\,\textmu s. The small application-level throughput differences therefore coincide with a similarly small increase on the remote-access path itself, rather than being hidden solely by local ML computation.

\noindent\textbf{Effect of request concurrency.} Increasing $W$ beyond 2 does not improve inference throughput. At 16 tenants, $W=2$ reaches approximately 52.8 thousand inferences/s, whereas $W=4$ and $W=8$ fall to approximately 48 thousand inferences/s while remote-operation RTT increases. The largest observed throughput overhead, 1.76\%, occurs at 16 tenants and $W=4$, after the workload has already passed its highest-throughput operating point. We therefore report this value as the maximum observed overhead without attributing it to a specific enforcement or queueing mechanism.

The ML experiment exercises a different operating regime from the application-agnostic microbenchmarks. The latter drives the remote-memory datapath to its approximately 90\,Gbit/s throughput ceiling, whereas the ML workload generates smaller requests at up to approximately 1.42 million protected operations per second. Across this application-driven workload, \SADRA{} preserves prediction outputs, remains within 1.8\% of baseline inference throughput in every evaluated configuration, and adds at most 0.638\,\textmu s to mean remote-operation RTT.

\subsection{Comparison with Prior Capability Systems (RQ2)}
\label{subsec:comparison}

We compare \SADRA{} with CEP~\cite{azriel2019memory}, SemperOS~\cite{hille2019semperos}, White et al.~\cite{white2025enabling}, and FractOS~\cite{vilanova2022slashing}. These systems address different points in the capability design space: CEP targets rack-scale disaggregated memory, SemperOS provides distributed capability management across microkernels, White et al. explore version-based capability invalidation, and FractOS provides distributed delegation and owner-side revocation for heterogeneous disaggregated systems. We compare them along the properties most relevant to resource-disaggregated enforcement including enforcement architecture, revocation, failure recovery, and performance.

\noindent\textbf{Enforcement architecture.} The primary distinction among these systems is where they establish authorization and where they make the authoritative decision. CEP validates capabilities at the memory side, but relies on compute-side software to bind requests to processes and preserve channel integrity. SemperOS distributes capability management across cooperating $\mu$-kernels and enforces authorization through a software-managed capability hierarchy. White et al. use per-node controllers and versioned references, but their controllers execute as trusted software components. FractOS moves capability management into distributed Controllers for heterogeneous disaggregation, but its Controllers remain software components rather than hardware enforcement.

\subfile{table_revocation}

\SADRA{} instead places hardware controllers isolated from host software at both ends of the remote access path. The compute controller validates process-granularity authority before fabric admission, while the resource controller performs the authoritative access decision before resource access. These checks use linked local capability state rather than coordination on the access path.

\noindent\textbf{Revocation.} Prior systems provide different mechanisms for invalidating delegated authority. CEP and SemperOS rely on traversal-based revocation, where completion depends on processing the affected delegation structure and, for distributed cases, coordination across enforcement domains. White et al. avoid distributed traversal through versioned references, but stale authority after failure is invalidated rather than restored. FractOS provides immediate owner-side revocation through separately invalidatable delegation structures, with cleanup of derived state proceeding separately.

\SADRA{} separates the security effect of revocation from state reclamation. For revocation at the resource boundary, the resource controller installs a fence on the authoritative resource capability, and every dependent request is denied at the resource boundary regardless of whether remote compute controllers have observed the revocation. Similarly, compute controllers use fences when revoking authority within their local capability trees, ensuring that dependent accesses fail before cleanup completes. Cleanup then proceeds separately by traversing the affected capability state. As shown in \Cref{tab:revoc_result}, for a fixed capability count, cleanup latency is unchanged as the number of affected subtrees varies from one to eight. A 128-capability cleanup completes in 528\,ns. The fence, rather than cleanup traversal, determines authorization.

\noindent\textbf{Failure recovery.} Persistence across failures provides another distinction among capability systems. White et al.\ invalidate stale authority after controller failure and require applications to reacquire capabilities rather than resuming from restored authority state. FractOS similarly treats controller failure as capability invalidation rather than restoration of live delegated authority. \SADRA{} instead defines recovery semantics as part of its capability-chain architecture, preserving persistent capability relationships and revocation state without reactivating stale authority. The complete persistence comparison is provided in the systematization (\S\ref{sec:systematization}).

\noindent\textbf{Performance.} CEP shows that capability validation can be integrated into the memory-side datapath, while SemperOS and FractOS demonstrate distributed capability management in software. Because these systems target different platforms and report different metrics, we compare design properties rather than directly comparing measured performance.

\SADRA{} combines hardware enforcement with distributed capability management. The application-agnostic evaluation shows that the two validation stages preserve throughput up to the measured remote-memory datapath ceiling of 89.5\,Gbit/s. The ML evaluation shows at most 1.8\% inference-throughput degradation relative to the baseline across all evaluated configurations, where twelve of the sixteen configurations remain within 1\%, and all paired runs produce bit-identical outputs.

\subsection{Prototype Security Validation}
\label{sec:security-validation}

\Cref{sec:sec_properties_soundness} establishes capability safety, authority safety, revocation soundness, and strong isolation for the \SADRA{} architectural model and verifies the formalization using SPIN. The deployed resource and compute controllers contain approximately 11\,K and 9\,K lines of VHDL, respectively, which are outside the scope of the formal analysis. Thus,  we complement the formal results with implementation-level tests that exercise a corresponding adversarial scenario for each property and record both the enforcement point that rejects the request and whether the request reaches the protected resource (\Cref{tab:security-validation}).

The scenarios map directly to the four properties. Forged capabilities test capability safety, rights amplification tests authority safety, revoked authority tests revocation soundness, and cross-tenant requests test strong isolation. Forged process capabilities and rights amplification are rejected by the compute controller before entering the fabric. For revocation, authority confined to one compute node is revoked and rejected by its compute controller, while authority delegated across compute nodes is fenced by the authoritative resource controller and rejected before resource access. Cross-tenant requests without valid authority are likewise rejected before accessing the protected resource.

In every trial reported in \Cref{tab:security-validation}, \SADRA{} rejected the unauthorized request at the expected enforcement point, and no unauthorized request reached the protected resource. These tests do not replace the formal analysis, but provide implementation-level evidence that the FPGA prototype behaves consistently with the architectural model for the evaluated attack scenarios.

\begin{table}[t]
\centering
\caption{Implementation-level validation of the four security properties.
All attack attempts were rejected before resource access.}
\label{tab:security-validation}
\small
\begin{tabular}{@{}lll@{}}
\toprule
Attack & Property & Boundary \\
\midrule
Forgery     & Capability safety    & Compute \\
Amplification & Authority safety   & Compute \\
Revocation  & Revocation soundness & Resource \& Compute \\
Cross-tenant & Strong isolation    & Resource \& Compute  \\
\bottomrule
\end{tabular}
\end{table}

\section{Conclusion} 
\label{sec:conclusion}     

Resource disaggregation separates computation from memory and accelerator resources, but it also removes the local protection boundary traditionally provided by the host operating system. This separation requires authorization mechanisms that can enforce access control across disaggregated trust boundaries without placing coordination or trusted host software on the resource-access path.

This paper presented \SADRA{}, a distributed capability-based access-control architecture for resource-disaggregated systems. \SADRA{} introduces a two-stage enforcement invariant in which isolated SmartNIC hardware validates authority at both the compute and resource boundaries. A linked chain of process, compute, and resource capabilities enables fine-grained delegation while allowing each enforcement point to validate requests from local state. For revocation, \SADRA{} separates the security effect from state reclamation: a resource-rooted fence immediately prevents dependent authority from reaching the resource, while capability-state cleanup proceeds independently. The same capability-chain representation also defines recovery semantics that preserve authority relationships without allowing stale authority to be reactivated.

We formally establish capability safety, authority safety, revocation soundness, and strong isolation for the \SADRA{} model and evaluate the implementation through FPGA-based adversarial security experiments. The prototype evaluation demonstrates that these guarantees can be achieved without making authorization the performance bottleneck. The FPGA SmartNIC implementation sustains 89.5\,Gbit/s, with a throughput overhead < $0.01\%$,  effectively reaching the 90--91\,Gbit/s throughput ceiling of the evaluated remote-memory datapath. Under a Meta-DLRM workload, \SADRA{} maintains inference throughput within 1.8\% of the baseline while processing up to 1.42 million protected remote operations per second and producing bit-identical outputs across all paired runs. Revocation cleanup of a 128-capability subtree completes in 528\,ns, while denial is determined by the resource-side fence rather than by completion of cleanup traversal.

By combining distributed capability management with hardware-isolated authorization at the compute and resource boundaries, \SADRA{} demonstrates that resource-disaggregated systems can retain strong access-control guarantees without sacrificing the performance characteristics that motivate disaggregation.

\bibliographystyle{plain}
\bibliography{ref}

\appendix
\subfile{appendix}

\end{document}

%% file: packages.tex
\usepackage{filecontents}
\usepackage{transparent}
\usepackage{xcolor}
\usepackage{nicefrac}
\usepackage{siunitx}
\usepackage{array}
\usepackage{etoolbox}
\usepackage{amsthm}
\usepackage{amsfonts} 
\usepackage{amsmath}  
\usepackage{amssymb}  
\usepackage{thmtools}
\usepackage{tabularx}
\usepackage{booktabs}
\usepackage{pgfplots}
\usepgfplotslibrary{groupplots}
\pgfplotsset{compat=1.18}
\usepackage{tikz}
\usetikzlibrary{positioning, fit, backgrounds, arrows.meta, shapes.geometric,calc,decorations.pathreplacing}

\usepackage{subcaption} 
\usepackage{multirow}
 \usepackage{paralist} 
\usepackage{color, colortbl}
\usepackage{filecontents}
\usepackage{transparent}
\usepackage{xcolor}
\usepackage{nicefrac}
\usepackage{siunitx}
\usepackage{thmtools}
\usepackage{mathtools}
\usepackage{bm}
\usepackage{xargs} 
\usepackage{subfiles}
\usetikzlibrary{plotmarks}
\usetikzlibrary{shapes.misc, tikzmark}
\usepackage{semantic}

\usepackage[scr=boondox,  
            cal=esstix]   
           {mathalpha}
 
\usepackage{booktabs} 
\usepackage{wasysym}
\usepackage{float}
\usepackage{diagbox}
\usepackage{hyphenat}

\newcommand{\FULLY}{$\CIRCLE$}
\newcommand{\PART}{$\LEFTcircle$}
\newcommand{\NO}{$\Circle$}

\newcommand{\splitatcommas}[1]{%
  \begingroup
  \begingroup\lccode`~=`, \lowercase{\endgroup
    \edef~{\mathchar\the\mathcode`, \penalty0 \noexpand\hspace{0pt plus 1em}}%
  }\mathcode`,="8000 #1%
  \endgroup
}

\providecommand\longrightarrowRHD{\relbar\joinrel\relbar\joinrel\mathrel\RHD}
\providecommand\longrightarrowrhd{\relbar\joinrel\relbar\joinrel\mathrel\rhd}

\makeatletter
\providecommand*\xrightarrowRHD[2][]{\ext@arrow 0055{\arrowfill@\relbar\relbar\longrightarrowRHD}{#1}{#2}}
\providecommand*\xrightarrowrhd[2][]{\ext@arrow 0055{\arrowfill@\relbar\relbar\longrightarrowrhd}{#1}{#2}}
\makeatother

\mathchardef\mhyphen="2D
\newtheorem{definition}{Definition}

\newtheorem{lemma}{Lemma}

\newtheorem{remark}{Remark}
\newtheorem{invariant}{Invariant}

\newcommandx{\unsure}[2][1=]{\todo[linecolor=red,backgroundcolor=red!25,bordercolor=red,#1]{#2}}
\newcommandx{\change}[2][1=]{\todo[linecolor=blue,backgroundcolor=blue!25,bordercolor=blue,#1]{#2}}
\newcommandx{\info}[2][1=]{\todo[linecolor=OliveGreen,backgroundcolor=OliveGreen!25,bordercolor=OliveGreen,#1]{#2}}
\newcommandx{\improvement}[2][1=]{\todo[fancyline,linecolor=Plum,backgroundcolor=Plum!25,bordercolor=Plum,#1]{#2}}
\newcommandx{\thiswillnotshow}[2][1=]{\todo[disable,#1]{#2}}

\newcommandx{\rescont}{resource controller}
\newcommandx{\intcont}{intermediate controller}

\newcommandx{\pcap}{p-cap}
\newcommandx{\icap}{i-cap}
\newcommandx{\rcap}{r-cap}
\newcommandx{\indcap}{indicator capability}

\newcommandx{\pnode}{p-node}
\newcommandx{\rnode}{r-node}

\newcommandx{\rtree}{r-tree}
\newcommandx{\itree}{i-tree}

\newcommand{\RNum}[1]{\uppercase\expandafter{\romannumeral #1\relax}}

\definecolor{bgcolor}{RGB}{248,248,248}
\newcommandx{\ce}[1]{\begin{tcolorbox}[ ams align*, top=0mm, width=(\columnwidth),  center, enlarge top initially by=0mm, enlarge bottom finally by=0mm, left=1mm, colback=bgcolor, colframe=black!30 ]
#1
\end{tcolorbox}}

\newcommandx{\cefrac}[1]{\begin{tcolorbox}[ ams align*, width=(\columnwidth),  center, enlarge top initially by=0mm, enlarge bottom finally by=0mm, left=1mm, colback=bgcolor, colframe=black!30 ]
#1
\end{tcolorbox}}

\definecolor{pcapcolor}{RGB}{85, 107, 47}
\DeclareRobustCommand\pcapcircle{
\begin{tikzpicture}[inner sep=0pt,baseline=(base)]
\filldraw[fill=pcapcolor!40, draw=black] (0,0) circle (0.14cm);
\node (base) at (0,-.5ex) {};
\end{tikzpicture}     }

\definecolor{icapcolor}{RGB}{70, 130, 180}
\DeclareRobustCommand\icapcircle{
\begin{tikzpicture}[inner sep=0pt,baseline=(base)]
\filldraw[fill=icapcolor!65, draw=black] (0,0) circle (0.14cm);
\node (base) at (0,-.5ex) {};
\end{tikzpicture}     }

\definecolor{rcapcolor}{RGB}{233, 116, 81}
\DeclareRobustCommand\rcapcircle{
\begin{tikzpicture}[inner sep=0pt,baseline=(base)]
\filldraw[fill=rcapcolor!80, draw=black] (0,0) circle (0.14cm);
\node (base) at (0,-.5ex) {};
\end{tikzpicture}     }

\DeclareRobustCommand\solidcircle{
\begin{tikzpicture}[inner sep=0pt,baseline=(base)]
\filldraw[line width=.5pt, fill=white,draw=black] (0,0) circle (0.14cm);
\node (base) at (0,-.5ex) {};
\end{tikzpicture}     }

\DeclareRobustCommand\dashedcircle{
\begin{tikzpicture}[inner sep=0pt,baseline=(base)]
\filldraw[line width=.5pt, fill=white,draw=black,densely dashed] (0,0) circle (0.14cm);
\node (base) at (0,-.5ex) {};
\end{tikzpicture}     }

\definecolor{trcolor}{RGB}{207,0,0}

\DeclareRobustCommand\circledashedarrow{
\begin{tikzpicture}
\filldraw[fill=black, draw=black] (0ex,-.09) rectangle (.05,.09);
\draw[->,line width=0.4mm,black, -latex] (0.4ex,0) -- (5.2ex,0);
\end{tikzpicture}}

\DeclareRobustCommand\captreedarrow{
\begin{tikzpicture}
\filldraw[fill=white, draw=white] (0ex,0) circle (.45ex);
\draw[->,line width=0.4mm,black, -latex] (0.0ex,0) -- (5.0ex,0);
\end{tikzpicture}}

\DeclareRobustCommand\directmemory{
\begin{tikzpicture}
\filldraw[fill=white, draw=white] (0ex,0) circle (.45ex);
\draw[->,line width=0.4mm,black, -latex] (0.0ex,0) -- (5.0ex,0);
\end{tikzpicture}}

\DeclareRobustCommand\indirectmemory{
\begin{tikzpicture}
\filldraw[fill=white, draw=white] (0ex,0) circle (.45ex);
\draw[->,line width=0.4mm,black,dashed, -latex] (0.0ex,0) -- (5.0ex,0);
\end{tikzpicture}}
\definecolor{backc}{RGB}{242,242,242}

\usetikzlibrary{decorations.pathreplacing}

\def\niterate{128}
\def\rolldice{
    \pgfmathsetmacro\a{(1+rnd)/1}
    \pgfmathsetmacro\b{5+5*rnd}
    \pgfmathsetmacro\c{1+rnd}
    \pgfmathsetmacro\d{rnd*3}
    \pgfmathsetmacro\dark{rnd*50+50}
}
\tikzset{
    put dots/.style={
        /utils/exec=\rolldice,
        line width=2pt,
        dash pattern=on \b off \c,
        dash phase=\c*rnd,
        shift={(rnd*360:\d pt)},
        line cap=round,
        red!\dark,
        opacity=.8
    },
    chalk/.style={
        decorate,
        decoration={
            show path construction,
            lineto code={
                \foreach\i in{1,...,\niterate}{
                    \draw[put dots]
                        (\tikzinputsegmentfirst)--(\tikzinputsegmentlast);
                }
            },
            curveto code={
                \foreach\i in{1,...,\niterate}{
                    \draw[put dots]
                        (\tikzinputsegmentfirst)..controls
                        (\tikzinputsegmentsupporta)and(\tikzinputsegmentsupportb)
                        ..(\tikzinputsegmentlast);
                }
            },
            closepath code={
                \foreach\i in{1,...,\niterate}{
                \draw[put dots]
                    (\tikzinputsegmentfirst)--(\tikzinputsegmentlast);
                }
            }
        }
    }
}

\definecolor{xcrosscolor}{RGB}{197, 24, 27}

\usepackage{array,multirow,graphicx}
\usetikzlibrary{calc,arrows.meta,positioning}
\usetikzlibrary{shapes}
\usepackage[most]{tcolorbox}

\definecolor{silvercolr}{RGB}{217, 218, 219}

\usepackage{courier}

\usepackage[T1]{fontenc}
\usepackage{listings}

\usepackage{hyperref}
\hypersetup{
    colorlinks=false,
    pdfborder={0 0 0},
}

\usepackage{subcaption}
\usepackage[noabbrev, capitalise]{cleveref}
\crefname{lemma}{Lemma}{Lemmas}
\crefname{rul}{Rule}{Rules}

\definecolor{dkgreen}{rgb}{0,0.6,0}
\definecolor{gray}{rgb}{0.5,0.5,0.5}
\definecolor{mauve}{rgb}{0.58,0,0.82}

\crefformat{section}{\S#2#1#3} 
\crefformat{subsection}{\S#2#1#3}
\crefformat{subsubsection}{\S#2#1#3}

%% file: figure_genomeic_example.tex
\newcommand{\fsnodeName}{\fontsize{14}{14}\selectfont}
\newcommand{\fsProcess}{\fontsize{14}{14}\selectfont}

\newcommand{\pxStep}{2.3}       
\newcommand{\pyStep}{1.6}       
\newcommand{\pMemX}{0}          
\newcommand{\pCompOneX}{\pxStep}
\newcommand{\pCompTwoX}{2*\pxStep}
\newcommand{\pCompThreeX}{3*\pxStep}
\newcommand{\pMemFourX}{4*\pxStep}

\newcommand{\pNodeSize}{1}    
\newcommand{\pMemSize}{1.4}     
\newcommand{\pLineWeight}{0.7pt}
\newcommand{\pDottedWeight}{.9pt}
\newcommand{\pIconSizeW}{.95}   
\newcommand{\pIconSizeH}{1}   
\newcommand{\pIconSizeR}{1.15}   

\colorlet{p1Col}{blue!40}
\colorlet{p2Col}{green!60!black!50}
\colorlet{p3Col}{orange!60}
\colorlet{p4Col}{red!50}
\colorlet{p5Col}{teal!50}
\colorlet{p6Col}{orange!40}
\colorlet{memDraw}{teal}
\colorlet{compDraw}{blue!50}

\newcommand{\loadMemIcon}[1]{\includegraphics[width=\pIconSizeW cm, height=\pIconSizeH cm]{#1}}
\newcommand{\loadMemIconResult}[1]{\includegraphics[width=\pIconSizeR cm, height=\pIconSizeR cm]{#1}}

\begin{figure}[!t]
\centering

\resizebox{0.75\columnwidth}{!}{
\begin{tikzpicture}[
    scale=0.75, transform shape,
    mem_node/.style={draw=memDraw, dashed, line width=\pLineWeight, fill=white, minimum size=\pMemSize cm},
    comp_node/.style={draw=compDraw, dotted, line width=\pDottedWeight, fill=white, minimum width=1.8cm},
    proc/.style={
    	circle, 
    draw=black!20, 
    line width=0.5pt, 
    minimum size=\pNodeSize cm, 
    font=\large\bfseries,
    anchor=center,      
    inner sep=0pt,      
    text height=1.5ex,  
    text depth=0.5ex,   
    align=center        
},
    arrow/.style={-Stealth, line width=\pLineWeight},
    delegation/.style={-Stealth, dashed, line width=\pLineWeight}
]

    \foreach \i/\img in {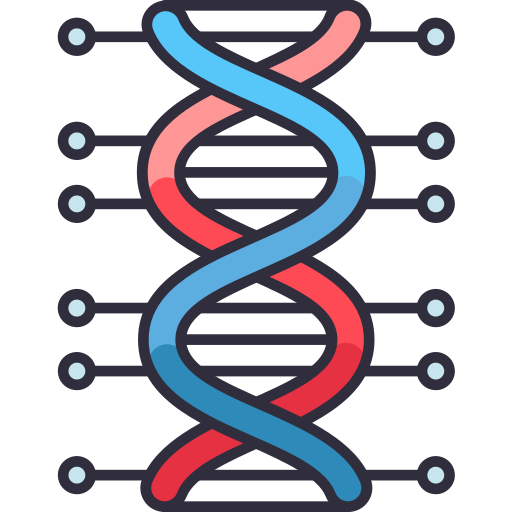, 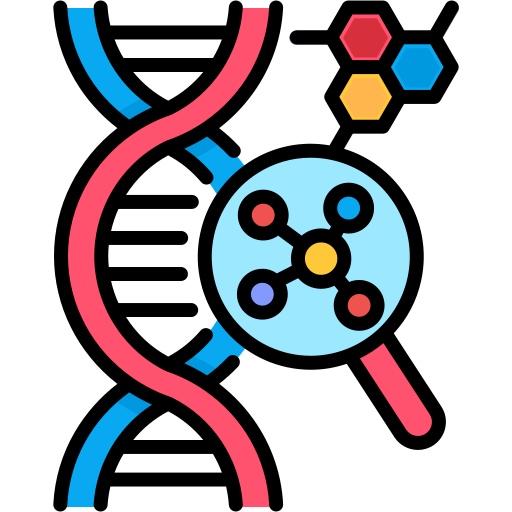, 3/gen_1.png} {
        \node[mem_node] (M\i) at (\pMemX, {-\i*\pyStep + \pyStep}) {\loadMemIcon{\img}};
        \node[left=0.15cm of M\i, align=right, font=\fsnodeName] {Memory\\[.4ex]Node$_{\,\i}$};
    }

    \node[comp_node, minimum height=4.6cm] (C1_box) at (\pCompOneX, -\pyStep) {};
    \node[proc, fill=p1Col] (P1) at (\pCompOneX, 0) {{p\raisebox{-0.7ex}{$_1$}}};
    \node[proc, fill=p2Col] (P2) at (\pCompOneX, -\pyStep) {{p\raisebox{-0.7ex}{$_2$}}};
    \node[proc, fill=p3Col] (P3) at (\pCompOneX, -2*\pyStep) {{p\raisebox{-0.7ex}{$_3$}}};
    \node[below=0.1cm of C1_box, font=\fsnodeName, align=center] {Compute \\[.6ex]Node$_{\,1}$};

    \node[comp_node, minimum height=4.6cm] (C2_box) at (\pCompTwoX, -\pyStep) {};
    \node[proc, fill=p4Col] (P4) at (\pCompTwoX, 0) {{p\raisebox{-0.7ex}{$_4$}}};
    \node[proc, fill=p5Col] (P5) at (\pCompTwoX, -2*\pyStep) {{p\raisebox{-0.7ex}{$_5$}}};
    \node[below=0.1cm of C2_box, font=\fsnodeName, align=center] {Compute \\[.6ex]Node$_{\,2}$};

    \node[comp_node, minimum height=4.6cm] (C3_box) at (\pCompThreeX, -\pyStep) {};
    \node[proc, fill=p6Col] (P6) at (\pCompThreeX, -\pyStep) {{p\raisebox{-0.7ex}{$_6$}}};
    \node[below=0.1cm of C3_box, font=\fsnodeName, align=center] {Compute \\[.6ex]Node$_{\,3}$};

    \node[mem_node] (M4) at (\pMemFourX, -\pyStep) {\loadMemIconResult{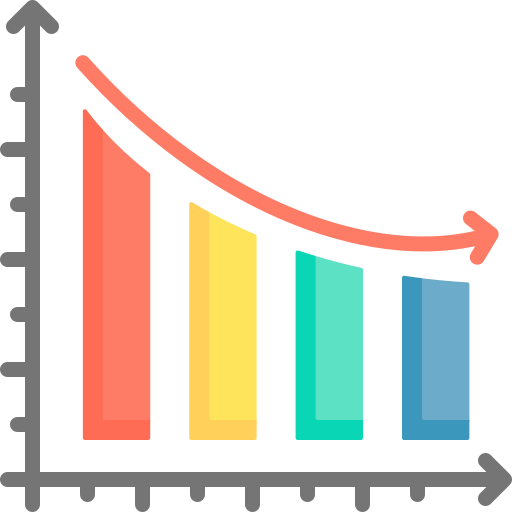}};
    \node[below=0.15cm of M4, align=center, font=\fsnodeName] {Memory\\[.4ex]Node$_{\,4}$};

    \draw[arrow] (M1) -- (P1);
    \draw[arrow] (M2) -- (P2);
    \draw[arrow] (M3) -- (P3);
    \draw[arrow] (P6) -- (M4);

    \draw[delegation] (P1) -- (P4);
    \draw[delegation] (P3) -- (P5);
    \draw[delegation] (P4) -- (P6);
    \draw[delegation] (P2) -- (P6);
    \draw[delegation] (P5) -- (P6);

\end{tikzpicture}
}
\caption{Genome data processing scenario. Legend: \directmemory =direct read/write to memory; \indirectmemory =write to memory followed by a read from the same location.}
\label{fig:gen_ex}

\end{figure}

%% file: figure_obj_cap.tex

\newcommand{\figScaleOcap}{0.45}

\newcommand{\ocapNodeSize}{1.45cm}

\colorlet{ocapObj}{cyan!20}
\colorlet{ocapProxy}{yellow!35}

\newcommand{\xB}{0}
\newcommand{\xF}{2.5}
\newcommand{\xA}{5}
\newcommand{\xR}{7.5}
\newcommand{\xC}{10}

\newcommand{\yMain}{0}

\newcommand{\edgeWidth}{0.75pt}

\newcommand{\ocapACStart}{north east}
\newcommand{\ocapACEnd}{north west}

\newcommand{\ocapFRStart}{south east}
\newcommand{\ocapFREnd}{south west}

\newcommand{\ocapACBend}{25}
\newcommand{\ocapFRBend}{25}

\begin{figure}[t]
\centering
\begin{tikzpicture}[
    scale=\figScaleOcap,
    every node/.style={transform shape},
    >=Stealth,
    ocapedge/.style={->, line width=\edgeWidth},
    ocapobj/.style={
        circle,
        draw,
        line width=.7pt, 
        fill=ocapObj,
        minimum size=\ocapNodeSize,
        inner sep=0pt,
        font=\fontsize{22}{22}\selectfont
    },
    ocapproxy/.style={
        circle,
        draw,
        dashed,
        line width=.7pt, 
        fill=ocapProxy,
        minimum size=\ocapNodeSize,
        inner sep=0pt,
        font=\fontsize{22}{22}\selectfont
    }
]

\node[ocapobj]   (B) at (\xB,\yMain) {B};
\node[ocapproxy] (F) at (\xF,\yMain) {F};
\node[ocapobj]   (A) at (\xA,\yMain) {A};
\node[ocapproxy] (R) at (\xR,\yMain) {R};
\node[ocapobj]   (C) at (\xC,\yMain) {C};

\draw[ocapedge] (B) -- (F);
\draw[ocapedge] (A) -- (F);   
\draw[ocapedge] (A) -- (R);
\draw[ocapedge] (R) -- (C);

\draw[ocapedge] (A.\ocapACStart) to[bend left=\ocapACBend] (C.\ocapACEnd);
\draw[ocapedge] (F.\ocapFRStart) to[bend right=\ocapFRBend] (R.\ocapFREnd);

\end{tikzpicture}

\caption{Capability-Object Model. In this model, F is the forwarding proxy and R is the revoking proxy.}
\label{fig:cap-obj-mdl}
\end{figure}

%% file: table_RW.tex
\newcolumntype{T}{>{\centering\arraybackslash}m{8.5mm}}
\newcolumntype{S}{>{\columncolor[HTML]{EFEFEF}\centering\arraybackslash}m{8.5mm}}

\newcommand{\syshead}[1]{%
  \rotatebox{80}{\fontsize{9.6pt}{9.6pt}\selectfont #1}%
}

\begin{table*}[!t]
\centering
\caption{Capability architectures taxonomized by disaggregation requirements.
Legend: \FULLY\  = Full support; \PART\  = Partial support; \NO\  = No support.
Enforcement: C = Compiler; Co-P = Co-Processor; FPGA = FPGA logic;
HW = Hardware logic; ISA = ISA; K = Kernel; OS = Operating System; SW = Software (host-trusted); SNIC = SmartNIC controller.}
\label{tab:coveredCriteria}

\setlength{\tabcolsep}{1.2pt}
\renewcommand{\arraystretch}{1.1}

\resizebox{\linewidth}{!}{%
\fontsize{10pt}{10pt}\selectfont
\begin{tabular}{@{}>{\raggedright\arraybackslash}m{29mm} *{22}{T} S@{}}
\toprule

& \multicolumn{6}{c}{\textbf{Architectural / Hardware-Assisted}}
& \multicolumn{4}{c}{\textbf{OS Capability Systems}}
& \multicolumn{11}{c}{\textbf{Multiprocessor, Distributed, and Disaggregated Systems}}
& \cellcolor[HTML]{FFFFFF} \\

\cmidrule(lr){2-7}
\cmidrule(lr){8-11}
\cmidrule(lr){12-23}

& \syshead{CAP~\cite{needham1977cambridge}}
& \syshead{IBM/38~\cite{houdek1981ibm}}
& \syshead{Hardbound~\cite{devietti2008hardbound}}
& \syshead{CHERI~\cite{woodruff2014cheri}}
& \syshead{M-Machine~\cite{carter1994hardware}}
& \syshead{CODOMs~\cite{vilanova2014codoms}}

& \syshead{Hydra~\cite{wulf1974hydra}}
& \syshead{KeyKOS~\cite{hardy1985keykos}}
& \syshead{EROS~\cite{shapiro1999eros}}
& \syshead{seL4~\cite{klein2009sel4}}

& \syshead{StarOS~\cite{jones1979staros}}
& \syshead{iAPX432~\cite{cox1981unified}}
& \syshead{Chorus~\cite{rozier1992overview}}
& \syshead{Amoeba~\cite{mullender1990amoeba}}
& \syshead{Accent~\cite{rashid1981accent}}
& \syshead{Mach~\cite{accetta1986mach}}
& \syshead{Barrelfish~\cite{baumann2009multikernel}}
& \syshead{Zeno~\cite{ehret2022zeno}}
& \syshead{CEP~\cite{azriel2019memory}}
& \syshead{SemperOS~\cite{hille2019semperos}}
& \syshead{FractOS~\cite{vilanova2022slashing}}
& \syshead{White et al.~\cite{white2025enabling}}

& \cellcolor[HTML]{EFEFEF}\syshead{\textbf{SADRA}} \\

\midrule

Inter-Node Med.
& \NO & \NO & \NO & \NO & \NO & \NO
& \NO & \NO & \NO & \NO
& \NO & \NO & \PART & \PART & \PART & \PART & \PART & \PART
& \PART & \PART & \PART & \PART
& \FULLY \\

\rowcolor{gray!5}
Subject Gran.
& \FULLY & \FULLY & \NO & \FULLY & \PART & \PART
& \FULLY & \FULLY & \FULLY & \FULLY
& \FULLY & \FULLY & \FULLY & \PART & \FULLY & \FULLY & \FULLY
& \PART & \FULLY & \FULLY & \FULLY & \FULLY
& \FULLY \\

Object Gran.
& \PART & \PART & \FULLY & \FULLY & \FULLY & \FULLY
& \PART & \PART & \PART & \PART
& \PART & \PART & \PART & \PART & \PART & \PART & \PART
& \FULLY & \FULLY & \FULLY & \FULLY & \FULLY
& \FULLY \\

\rowcolor{gray!5}
Delegation
& \PART & \PART & \NO & \PART & \PART & \FULLY
& \PART & \FULLY & \FULLY & \FULLY
& \PART & \PART & \PART & \PART & \PART & \PART & \FULLY
& \FULLY & \FULLY & \FULLY & \FULLY & \FULLY
& \FULLY \\

Revocation
& \PART & \PART & \NO & \PART & \PART & \FULLY
& \PART & \PART & \FULLY & \FULLY
& \NO & \PART & \PART & \PART & \PART & \PART & \PART
& \PART & \PART & \PART & \FULLY & \FULLY 
& \FULLY \\

\rowcolor{gray!5}
Persistence
& \NO & \PART & \NO & \NO & \PART & \NO
& \NO & \FULLY & \FULLY & \NO
& \NO & \PART & \NO & \NO & \NO & \NO & \NO
& \NO& \PART & \NO & \NO & \NO 
& \FULLY \\

\midrule

Enforcement
& \fontsize{9.1pt}{9.1pt}\selectfont ISA/K
& \fontsize{9.1pt}{9.1pt}\selectfont ISA
& \fontsize{9.1pt}{9.1pt}\selectfont ISA/C
& \fontsize{9.1pt}{9.1pt}\selectfont ISA/C
& \fontsize{9.1pt}{9.1pt}\selectfont HW
& \fontsize{9.1pt}{9.1pt}\selectfont HW/K
& \fontsize{9.1pt}{9.1pt}\selectfont K
& \fontsize{9.1pt}{9.1pt}\selectfont K
& \fontsize{9.1pt}{9.1pt}\selectfont K
& \fontsize{9.1pt}{9.1pt}\selectfont K
& \fontsize{9.1pt}{9.1pt}\selectfont ISA/K
& \fontsize{9.1pt}{9.1pt}\selectfont ISA
& \fontsize{9.1pt}{9.1pt}\selectfont K
& \fontsize{9.1pt}{9.1pt}\selectfont K
& \fontsize{9.1pt}{9.1pt}\selectfont K
& \fontsize{9.1pt}{9.1pt}\selectfont K
& \fontsize{9.1pt}{9.1pt}\selectfont K
& \fontsize{9.1pt}{9.1pt}\selectfont FPGA
& \fontsize{9.1pt}{9.1pt}\selectfont Co-P
& \fontsize{9.1pt}{9.1pt}\selectfont HW/K
& \fontsize{9.1pt}{9.1pt}\selectfont SW
& \fontsize{9.1pt}{9.1pt}\selectfont SW
& \fontsize{9.1pt}{9.1pt}\selectfont \textbf{SNIC} \\

\bottomrule
\end{tabular}%
}
\end{table*}

%% file: figure_systems.tex

\newcommand{\fsNodeLabelsCombined}{\fontsize{18}{18}\selectfont}
\newcommand{\fsProcessTxtCombined}{\fontsize{19}{19}\selectfont}
\newcommand{\fsControllerTxtCombined}{\fontsize{19}{19}\selectfont}
\newcommand{\fsNicTxtCombined}{\fontsize{13}{13}\selectfont}
\newcommand{\fsFabricTxtCombined}{\fontsize{18}{18}\selectfont}
\newcommand{\fsSnLabelTxtCombined}{\fontsize{21}{21}\selectfont}

\newcommand{\xCompOneC}{0.0}
\newcommand{\xCompTwoC}{4.5}
\newcommand{\xCompEllipsisC}{7.4}
\newcommand{\xCompMC}{10.2}
\newcommand{\xResOneC}{14.8}
\newcommand{\xResTwoC}{19.3}
\newcommand{\xResEllipsisC}{22.2}
\newcommand{\xResNC}{25.0}

\newcommand{\yEllipsisC}{7.5}
\newcommand{\szEllipsisC}{30}
\newcommand{\wtEllipsisC}{\bfseries}

\newcommand{\arrowWeightC}{0.5pt}
\newcommand{\boxWeightC}{0.5pt}
\newcommand{\dottedWeightC}{0.7pt}
\newcommand{\fabricWeightC}{0.4pt}

\begin{figure}[t] 
    \centering 
    
    \begin{subfigure}{\columnwidth} 
        \centering
        \begin{tikzpicture}[
            scale=0.288, 
            every node/.style={transform shape},
            proc_style/.style={rectangle, draw, line width=\boxWeightC, fill=teal!20, minimum width=3.6cm, minimum height=1.35cm, align=center, font=\fsProcessTxtCombined},
            res_style/.style={rectangle, draw, line width=\boxWeightC, fill=blue!20, minimum width=3.6cm, minimum height=1.35cm, align=center, font=\fsProcessTxtCombined},
            rw_style/.style={rectangle, draw, line width=\boxWeightC, fill=orange!30, minimum width=3.8cm, minimum height=1.35cm, align=center, font=\fsControllerTxtCombined},
            nic_style/.style={rectangle, draw, line width=\boxWeightC, fill=gray!50, minimum width=2.0cm, minimum height=0.6cm, font=\fsNicTxtCombined},
            fabric_style/.style={rectangle, draw, line width=\fabricWeightC, fill=blue!75!black, text=white, minimum width=28.2cm, minimum height=1.2cm, font=\fsFabricTxtCombined},
            bidir/.style={line width=\arrowWeightC, {Stealth[length=3pt,width=2.5pt]}-{Stealth[length=3pt,width=2.5pt]}},
            dotted_style/.style={draw=black, dotted, line width=\dottedWeightC, minimum width=4.15cm, minimum height=6.2cm}
        ]
            \foreach \x/\sub/\idx in {\xCompOneC/1/1,\xCompTwoC/2/2,\xCompMC/m/3} {
                \node[dotted_style] (contC\idx) at (\x,5.3) {};
                \node[above,font=\fsNodeLabelsCombined] at (\x,8.45) {Compute Node$_\sub$};
                \node[proc_style] (cp\idx) at (\x,7.5) {Processes P$_\sub$};
                \node[nic_style] (cnic\idx) at (\x,3) {NIC};
                \draw[bidir] (cp\idx.south) -- (cnic\idx.north);
                \draw[bidir] (cnic\idx.south) -- (\x,1.1+1.2/2);
            }
            \foreach \x/\sub/\idx in {\xResOneC/1/1,\xResTwoC/2/2,\xResNC/n/3} {
                \node[dotted_style] (contR\idx) at (\x,5.3) {};
                \node[above,font=\fsNodeLabelsCombined] at (\x,8.45) {Resource Node$_\sub$};
                \node[res_style] (rp\idx) at (\x,7.5) {Resources R$_\sub$};
                \node[rw_style] (rw\idx) at (\x,5.0) {Read / Write\\Ctrl.};
                \node[nic_style] (rnic\idx) at (\x,3) {NIC};
                \draw[bidir] (rp\idx.south) -- (rw\idx.north);
                \draw[bidir] (rw\idx.south) -- (rnic\idx.north);
                \draw[bidir] (rnic\idx.south) -- (\x,1.1+1.2/2);
            }
            \node[font=\fontsize{\szEllipsisC}{\szEllipsisC}\selectfont\wtEllipsisC] at (\xCompEllipsisC,\yEllipsisC) {$\cdots$};
            \node[font=\fontsize{\szEllipsisC}{\szEllipsisC}\selectfont\wtEllipsisC] at (\xResEllipsisC,\yEllipsisC) {$\cdots$};
            \node[fabric_style] (fabric) at ({(\xCompOneC+\xResNC)/2},1.1) {Inter-node Network Fabric};
        \end{tikzpicture}
        \caption{Abstract architecture of a resource-disaggregated system.}
        \label{fig:rda_system}
    \end{subfigure}

    \vspace{.4cm} 

 \begin{subfigure}{\columnwidth}
        \centering
        \begin{tikzpicture}[
            scale=0.278, 
            every node/.style={transform shape},
            proc_style/.style={rectangle, draw, line width=0.35pt, fill=teal!20, minimum width=3.6cm, minimum height=1.35cm, align=center, font=\fsProcessTxtCombined},
            res_style/.style={rectangle, draw, line width=0.35pt, fill=blue!20, minimum width=3.6cm, minimum height=1.35cm, align=center, font=\fsProcessTxtCombined},
            rw_style/.style={rectangle, draw, line width=0.35pt, fill=orange!30, minimum width=3.6cm, minimum height=1.3cm, align=center, font=\fsControllerTxtCombined},
            c_ctrl_style/.style={rectangle, draw, line width=0.35pt, fill=yellow!60, minimum width=3.6cm, minimum height=2.25cm},
            r_ctrl_style/.style={rectangle, draw, line width=0.35pt, fill=orange!80!black, minimum width=3.6cm, minimum height=2.25cm},
            nic_style/.style={rectangle, draw, line width=0.35pt, fill=gray!60, minimum width=1.8cm, minimum height=0.5cm, font=\fsNicTxtCombined},
            fabric_style/.style={rectangle, draw, line width=\fabricWeightC, fill=blue!75!black, text=white, thick, minimum width=28.5cm, minimum height=1.2cm, font=\fsFabricTxtCombined},
            bidir/.style={line width=\arrowWeightC, {Stealth[scale=0.8]}-{Stealth[scale=0.8]}},
            dotted_style/.style={draw=black, dotted, line width=\dottedWeightC, minimum width=4.1cm, minimum height=8.2cm},
            sn_shade/.style={draw=black!65, dashed, line width=0.6pt, fill=blue!7}
        ]
            \begin{scope}[on background layer]
                \draw[sn_shade] ({(\xCompOneC+\xResNC)/2 - 29.5/2}, -0.1) rectangle ({(\xCompOneC+\xResNC)/2 + 29.5/2}, 3.68);
            \end{scope}
            \node[anchor=south, rotate=90, font=\fsSnLabelTxtCombined] at (-2.55, 1.75) {SmartNICs};

            \foreach \x/\sub/\idx in {\xCompOneC/1/1, \xCompTwoC/2/2, \xCompMC/m/3} {
                \node[dotted_style] (contC\idx) at (\x, 4.3) {};
                \node[above, font=\fsNodeLabelsCombined] at (\x, 8.45) {Compute Node$_\sub$};
                \node[proc_style] (cp\idx) at (\x, 7.5) {Processes P$_\sub$};
                \node[c_ctrl_style] (cc\idx) at (\x, 1.9) {};
                \node[font=\fsControllerTxtCombined, align=center] at (\x, 2.2) {Compute\\Controller};
                \node[nic_style] (cnic\idx) at (\x, 1.05) {NIC};
                \draw[bidir] (cp\idx.south) -- (cc\idx.north);
                \draw[bidir] (cnic\idx.south) -- (\x, -1.35 + 1.2/2);
            }
            \foreach \x/\sub/\idx in {\xResOneC/1/1, \xResTwoC/2/2, \xResNC/n/3} {
                \node[dotted_style] (contR\idx) at (\x, 4.3) {};
                \node[above, font=\fsNodeLabelsCombined] at (\x, 8.45) {Resource Node$_\sub$};
                \node[res_style] (rp\idx) at (\x, 7.5) {Resources R$_\sub$};
                \node[rw_style] (rw\idx) at (\x, 5.0) {Read/Write\\Ctrl.};
                \node[r_ctrl_style] (rc\idx) at (\x, 1.9) {};
                \node[font=\fsControllerTxtCombined, align=center, text=white] at (\x, 2.2) {Resource\\Controller};
                \node[nic_style, fill=gray!40] (rnic\idx) at (\x, 1.05) {NIC};
                \draw[bidir] (rp\idx.south) -- (rw\idx.north);
                \draw[bidir] (rw\idx.south) -- (rc\idx.north);
                \draw[bidir] (rnic\idx.south) -- (\x, -1.35 + 1.2/2);
            }
            \node[font=\fontsize{\szEllipsisC}{\szEllipsisC}\selectfont\wtEllipsisC] at (\xCompEllipsisC, \yEllipsisC) {$\cdots$};
            \node[font=\fontsize{\szEllipsisC}{\szEllipsisC}\selectfont\wtEllipsisC] at (22.2, \yEllipsisC) {$\cdots$};
            \node[fabric_style] (fabric) at ({(\xCompOneC+\xResNC)/2}, -1.35) {Inter-node Network Fabric};
        \end{tikzpicture}
        \caption{SADRA capability-based access control design.}
        \label{fig:sadra_system}
    \end{subfigure}

    \caption{Overview of resource disaggregation and SADRA's architectural intervention.}
    \label{fig:combined_system}
\end{figure}

%% file: figure_sadra_caps.tex
\providecommand{\boxWeight}{0.8pt}
\providecommand{\arrowWeight}{1pt}
\providecommand{\arrowHeadSize}{0.8}

\newcommand{\fsCapTxt}{\fontsize{24}{24}\selectfont}

\definecolor{cProcRGB}{RGB}{85,107,47}
\definecolor{cCompRGB}{RGB}{70,130,180}
\definecolor{cResRGB}{RGB}{233,116,81}

\colorlet{cProc}{cProcRGB!40}
\colorlet{cComp}{cCompRGB!65}
\colorlet{cRes}{cResRGB!80}
\newcommand{\xCapProc}{0}
\newcommand{\xCapComp}{8}
\newcommand{\xCapRes}{16}

\newcommand{\dCapObj}{5.2cm}

\newcommand{\barSize}{.94}
\newcommand{\barWeight}{1.2pt}
\newcommand{\xBarOne}{2.15}
\newcommand{\xBarTwo}{10.15}


\begin{figure}[t]
\centering
\begin{tikzpicture}[
    scale=0.3,
    every node/.style={transform shape},
    font=\fontsize{26}{26}\selectfont,
    proc_node/.style={
        circle,
        draw,
        line width=\boxWeight,
        fill=cProc,
        text=black,
        minimum size=\dCapObj,
        align=center
    },
    comp_node/.style={
        circle,
        draw,
        line width=\boxWeight,
        fill=cComp,
        text=black!90,
        minimum size=\dCapObj,
        align=center
    },
    res_node/.style={
        circle,
        draw,
        line width=\boxWeight,
        fill=cRes,
        text=black!90,
        minimum size=\dCapObj,
        align=center
    },
    cap_arrow/.style={
        line width=\arrowWeight,
        -{Stealth[scale=\arrowHeadSize]}
    },
    boundary_bar/.style={
        line width=\barWeight
    }
]

\node[proc_node] (P) at (\xCapProc,0) {Process\\Capability};
\node[comp_node] (M) at (\xCapComp,0) {Compute\\Capability};
\node[res_node]  (R) at (\xCapRes,0)  {Resource\\Capability};

\draw[cap_arrow] (\xBarOne,0) -- (M.west);
\draw[boundary_bar] (\xBarOne,0.5*\barSize) -- (\xBarOne,-0.5*\barSize);
\draw[line width=\arrowWeight] (P.east) -- (\xBarOne,0);

\draw[cap_arrow] (\xBarTwo,0) -- (R.west);
\draw[boundary_bar] (\xBarTwo,0.5*\barSize) -- (\xBarTwo,-0.5*\barSize);
\draw[line width=\arrowWeight] (M.east) -- (\xBarTwo,0);

\end{tikzpicture}

\caption{Capability-object model with process ($\mathbb{C}_p$), compute
($\mathbb{C}_c$), and resource ($\mathbb{C}_r$) capabilities. Edges
represent the $\mapsto$ relation linking corresponding capabilities
$c_p \mapsto c_c \mapsto c_r$. Each capability represents authority
over $r \in R_j$, with permissions and resource extent monotonically
attenuated along the chain for enforcement at the compute and resource
controllers.}
\label{fig:caps}
\end{figure}

%% file: figure_example_trees.tex

\begin{figure*}[!t]
\centering

\definecolor{ResourceCapColor}{RGB}{233,116,81}
\definecolor{ComputeCapColor}{RGB}{70,130,180}
\definecolor{ProcessCapColor}{RGB}{85,107,47}
\definecolor{IndicatorColor}{RGB}{230,230,230}
\definecolor{SubfigureBoxDrawColor}{RGB}{220,220,220}
\definecolor{SubfigureBoxFillColor}{RGB}{248,248,248}
\definecolor{RevokedColor}{RGB}{204,0,0}

\providecommand{\rcapcircle}{\tikz[baseline=-0.5ex]\draw[draw=black!70,fill=ResourceCapColor!80] (0,0) circle (0.5ex);}
\providecommand{\icapcircle}{\tikz[baseline=-0.5ex]\draw[draw=black!70,fill=ComputeCapColor!65] (0,0) circle (0.5ex);}
\providecommand{\pcapcircle}{\tikz[baseline=-0.5ex]\draw[draw=black!70,fill=ProcessCapColor!40] (0,0) circle (0.5ex);}
\providecommand{\solidcircle}{\tikz[baseline=-0.5ex]\draw[thick] (0,0) circle (0.5ex);}
\providecommand{\dashedcircle}{\tikz[baseline=-0.5ex]\draw[thick,dashed] (0,0) circle (0.5ex);}
\providecommand{\circledashedarrow}{\tikz[baseline=-0.5ex]\draw[-{Stealth[scale=0.5]},dashed] (0,0) -- (1em,0);}
\providecommand{\captreedarrow}{\tikz[baseline=-0.5ex]\draw[-{Stealth[scale=0.5]}] (0,0) -- (1em,0);}

\tikzset{
  sadrafig/.cd,
  figure width/.initial=\textwidth,
  subfig scale/.initial=14.5cm,
  grid x/.initial=16cm,
  grid y/.initial=6.5cm,
  box width/.initial=4.5cm,
  box height/.initial=4.6cm,
  box rounded corners/.initial=5pt,
  box draw/.initial=SubfigureBoxDrawColor,
  box fill/.initial=SubfigureBoxFillColor,
  root minimum size/.initial=1.25cm,
  cap minimum size/.initial=1.15cm,
  revoked minimum size/.initial=1.15cm,
  root inner sep/.initial=0pt,
  cap inner sep/.initial=0pt,
  revoked inner sep/.initial=0pt,
  domain font size/.initial=12pt,
  domain baseline/.initial=14.4pt,
  root font size/.initial=12pt,
  root baseline/.initial=14.4pt,
  cap font size/.initial=18pt,
  cap baseline/.initial=20.4pt,
  title font size/.initial=16pt,
  title baseline/.initial=18.4pt,
  node draw/.initial=black!70,
  indicator draw/.initial=black,
  indicator line width/.initial=1.75pt,
  link line width/.initial=1.75pt,
  arrow scale/.initial=0.75,
  elbow dx/.initial=-0.025,
  revoked line width/.initial=0.95pt,
  compute fill/.initial=ComputeCapColor!65,
  resource fill/.initial=ResourceCapColor!80,
  process fill/.initial=ProcessCapColor!40,
  indicator fill/.initial=IndicatorColor,
  revoked draw/.initial=RevokedColor!80,
  coord/CNk/.initial={(0.15856225,0.15000000)},
  coord/RNj/.initial={(0.49991045,0.15000000)},
  coord/CNkp/.initial={(0.84133720,0.15000000)},
  coord/label CNk/.initial={(0.15856225,-0.04000000)},
  coord/label RNj/.initial={(0.49991045,-0.04000000)},
  coord/label CNkp/.initial={(0.84133720,-0.04000000)},
  coord/subcaption/.initial={(0.49991045,-0.08800000)},
  coord/root Rck/.initial={(0.15866293,0.25976539)},
  coord/root Rrj/.initial={(0.49976471,0.26004780)},
  coord/root Rckp/.initial={(0.84110190,0.25985064)},
  coord/b/cr1/.initial={(0.46490988,0.14496360)},
  coord/b/cc1r/.initial={(0.60094508,0.10454627)},
  coord/c/cr1/.initial={(0.46490988,0.14496360)},
  coord/c/cc1/.initial={(0.11730794,0.14765438)},
  coord/c/cp1/.initial={(0.25343849,0.11857253)},
  coord/d/cr1/.initial={(0.46490988,0.14496360)},
  coord/d/cc1/.initial={(0.11730794,0.14765448)},
  coord/d/cc2/.initial={(0.07841979,0.03948421)},
  coord/d/cp3/.initial={(0.05081004,0.20124818)},
  coord/d/cp2/.initial={(0.22402007,0.04177109)},
  coord/e/cr1/.initial={(0.46490988,0.14496360)},
  coord/e/cc1/.initial={(0.11730794,0.14765438)},
  coord/e/cp3/.initial={(0.05081004,0.20124809)},
  coord/e/cc2/.initial={(0.07841978,0.03948412)},
  coord/f/cc1/.initial={(0.11085391,0.14648996)},
  coord/f/cr1/.initial={(0.46490958,0.14496360)},
  coord/f/cr2/.initial={(0.42721021,0.03613987)},
  coord/f/cc4r/.initial={(0.38524297,0.17280628)},
  coord/f/cc3r/.initial={(0.57328281,0.03923910)},
  coord/g/cc1/.initial={(0.11085424,0.14648974)},
  coord/g/cp5/.initial={(0.22498138,0.15400000)},
  coord/g/cc4/.initial={(0.15404872,0.03777645)},
  coord/g/cr1/.initial={(0.46490996,0.14496362)},
  coord/g/cr2/.initial={(0.42721046,0.03613940)},
  coord/g/cc3/.initial={(0.79540034,0.14601672)},
  coord/g/cp4/.initial={(0.91883190,0.06519552)},
  coord/h/cc1/.initial={(0.11085424,0.14648974)},
  coord/h/cp5/.initial={(0.22498138,0.15400000)},
  coord/h/cc4/.initial={(0.15404872,0.03777645)},
  coord/h/cr1/.initial={(0.46490996,0.14496362)},
  coord/h/cr2/.initial={(0.42721046,0.03613940)},
  coord/h/cc3/.initial={(0.79540034,0.14601672)},
  coord/i/cc1/.initial={(0.11470205,0.14896360)},
  coord/i/cr1/.initial={(0.46490961,0.14496337)},
  coord/i/cr2/.initial={(0.42721023,0.03613965)},
  coord/i/cc3/.initial={(0.79927076,0.14645602)},
  coord/i/cc4r/.initial={(0.38524300,0.17280606)}
}

\newcommand{\SFkey}[1]{\pgfkeysvalueof{/tikz/sadrafig/#1}}
\newcommand{\SFcoord}[1]{\pgfkeysvalueof{/tikz/sadrafig/coord/#1}}
\newcommand{\SFelbow}{\SFkey{elbow dx}}

\newcommand{\SadraDomainFont}{\fontsize{\SFkey{domain font size}}{\SFkey{domain baseline}}\selectfont}
\newcommand{\SadraTitleFont}{\sffamily\bfseries\fontsize{\SFkey{title font size}}{\SFkey{title baseline}}\selectfont}
\newcommand{\SadraRootFont}{\bfseries\fontsize{\SFkey{root font size}}{\SFkey{root baseline}}\selectfont}
\newcommand{\SadraCapFont}{\fontsize{\SFkey{cap font size}}{\SFkey{cap baseline}}\selectfont}

\newcommand{\SadraCapLabel}[1]{{\SadraCapFont\ensuremath{#1}}}
\newcommand{\SadraRootLabel}[1]{{\SadraRootFont\ensuremath{#1}}}

\tikzset{
  sadra subfigure box/.style={
    draw=\SFkey{box draw},
    fill=\SFkey{box fill},
    rounded corners=\SFkey{box rounded corners},
    minimum width=\SFkey{box width},
    minimum height=\SFkey{box height}
  },
  sadra domain label/.style={
    font=\SadraDomainFont,
    execute at begin node=\SadraDomainFont
  },
  sadra subcaption/.style={
    font=\SadraTitleFont,
    execute at begin node=\SadraTitleFont
  },
  color_compute/.style={draw=\SFkey{node draw}, fill=\SFkey{compute fill}},
  color_resource/.style={draw=\SFkey{node draw}, fill=\SFkey{resource fill}},
  color_process/.style={draw=\SFkey{node draw}, fill=\SFkey{process fill}},
  color_indicator_c/.style={draw=\SFkey{indicator draw}, dashed, line width=\SFkey{indicator line width}, fill=\SFkey{compute fill}},
  color_indicator_p/.style={draw=\SFkey{indicator draw}, dashed, line width=\SFkey{indicator line width}, fill=\SFkey{process fill}},
  root_node/.style={
    circle,
    minimum size=\SFkey{root minimum size},
    inner sep=\SFkey{root inner sep},
    font=\SadraRootFont,
    execute at begin node=\SadraRootFont
  },
  cap_node/.style={
    circle,
    minimum size=\SFkey{cap minimum size},
    inner sep=\SFkey{cap inner sep},
    font=\SadraCapFont,
    execute at begin node=\SadraCapFont
  },
  link_to_arrow/.style={
    {|[width=9pt]}-{Stealth[length=7pt,width=7pt]},
    line width=1.6pt
  },
  link_arrow/.style={-{Stealth[scale=\SFkey{arrow scale}]}, line width=\SFkey{link line width}},
  revoked_x/.style={
    draw=\SFkey{revoked draw},
    line width=\SFkey{revoked line width},
    cross out,
    minimum size=\SFkey{revoked minimum size},
    inner sep=\SFkey{revoked inner sep}
  }
}

\newcounter{sadraSubfigure}

\renewcommand{\thesadraSubfigure}{\alph{sadraSubfigure}}

\newcommand{\SADRASubfigure}[4]{%
  \refstepcounter{sadraSubfigure}%
  \begin{scope}[x=\SFkey{subfig scale},y=\SFkey{subfig scale}]
    \node[sadra subfigure box] (CNk)  at \SFcoord{CNk} {};
    \node[sadra subfigure box] (RNj)  at \SFcoord{RNj} {};
    \node[sadra subfigure box] (CNkp) at \SFcoord{CNkp} {};

    \node[sadra domain label] at \SFcoord{label CNk}  {Compute Node$_k$};
    \node[sadra domain label] at \SFcoord{label RNj}  {Resource Node$_j$};
    \node[sadra domain label] at \SFcoord{label CNkp} {Compute Node$_{k'}$};

    \node[sadra subcaption] at \SFcoord{subcaption} {(#1) #2};
    \label{#3}%

    \node[root_node, color_compute]  (Rck)  at \SFcoord{root Rck}  {\SadraRootLabel{\mathrm{Root}_{c}^{k}}};
    \node[root_node, color_resource] (Rrj)  at \SFcoord{root Rrj}  {\SadraRootLabel{\mathrm{Root}_{r}^{j}}};
    \node[root_node, color_compute]  (Rckp) at \SFcoord{root Rckp} {\SadraRootLabel{\mathrm{Root}_{c}^{k'}}};

    #4
  \end{scope}%
}

\resizebox{\SFkey{figure width}}{!}{%
\begin{tikzpicture}

\begin{scope}[shift={(0,0)}]
  \SADRASubfigure{a}{Initial state.}{fig:tree-1}{}
\end{scope}

\begin{scope}[shift={(\SFkey{grid x},0)}]
  \SADRASubfigure{b}{Resource allocation -- Resource Node.}{fig:tree-2}{
    \node[cap_node, color_resource] (cr1)  at \SFcoord{b/cr1}  {\SadraCapLabel{c_r^1}};
    \node[cap_node, color_compute]  (cc1r) at \SFcoord{b/cc1r} {\SadraCapLabel{c_c^1}};
    \draw[link_arrow] (cr1) -- (Rrj);
    \draw[link_to_arrow] ([xshift=-10pt,yshift=2pt]cc1r.center) -- (cr1);
  }
\end{scope}

\begin{scope}[shift={(2*\SFkey{grid x},0)}]
  \SADRASubfigure{c}{Resource allocation -- Compute Nodes.}{fig:tree-3}{
    \node[cap_node, color_resource] (cr1) at \SFcoord{c/cr1} {\SadraCapLabel{c_r^1}};
    \node[cap_node, color_compute]  (cc1) at \SFcoord{c/cc1} {\SadraCapLabel{c_c^1}};
    \node[cap_node, color_process]  (cp1) at \SFcoord{c/cp1} {\SadraCapLabel{c_p^1}};
    \draw[link_arrow] (cr1) -- (Rrj);
    \draw[link_arrow] (cc1) -- (Rck);
    \draw[link_to_arrow] ([xshift=-10pt,yshift=2pt]cp1.center) -- (cc1);
  }
\end{scope}

\begin{scope}[shift={(0,-\SFkey{grid y})}]
  \SADRASubfigure{d}{Intra-Node Delegation.}{fig:tree-4}{
    \node[cap_node, color_resource]  (cr1) at \SFcoord{d/cr1} {\SadraCapLabel{c_r^1}};
    \node[cap_node, color_compute]   (cc1) at \SFcoord{d/cc1} {\SadraCapLabel{c_c^1}};
    \node[cap_node, color_compute]   (cc2) at \SFcoord{d/cc2} {\SadraCapLabel{c_c^2}};
    \node[cap_node, color_indicator_p] (cp3) at \SFcoord{d/cp3} {\SadraCapLabel{c_p^3}};
    \node[cap_node, color_process]   (cp2) at \SFcoord{d/cp2} {\SadraCapLabel{c_p^2}};
    \draw[link_arrow] (cr1) -- (Rrj);
    \draw[link_arrow] (cc1) -- (Rck);
    \draw[link_arrow] (cc2) -- (cc1);
    \draw[link_to_arrow] ([xshift=3pt,yshift=-12pt]cp3.center) -- (cc2);
    \draw[link_to_arrow] ([xshift=-10pt]cp2.center) -- (cc2);
  }
\end{scope}

\begin{scope}[shift={(\SFkey{grid x},-\SFkey{grid y})}]
  \SADRASubfigure{e}{Intra-Node Revocation.}{fig:tree-5}{
    \node[cap_node, color_resource]  (cr1) at \SFcoord{e/cr1} {\SadraCapLabel{c_r^1}};
    \node[cap_node, color_compute]   (cc1) at \SFcoord{e/cc1} {\SadraCapLabel{c_c^1}};
    \node[cap_node, color_indicator_p] (cp3) at \SFcoord{e/cp3} {\SadraCapLabel{c_p^3}};
    \node[cap_node, color_compute]   (cc2) at \SFcoord{e/cc2} {\SadraCapLabel{c_c^2}};
    \draw[link_arrow] (cr1) -- (Rrj);
    \draw[link_arrow] (cc1) -- (Rck);
    \draw[link_to_arrow] ([xshift=3pt,yshift=-12pt]cp3.center) -- (cc2);
    \node[revoked_x] at (cc2) {};
  }
\end{scope}

\begin{scope}[shift={(2*\SFkey{grid x},-\SFkey{grid y})}]
  \SADRASubfigure{f}{Inter-Node Delegation -- Resource Node.}{fig:tree-6}{
    \node[cap_node, color_compute]   (cc1)  at \SFcoord{f/cc1}  {\SadraCapLabel{c_c^1}};
    \node[cap_node, color_resource]  (cr1)  at \SFcoord{f/cr1}  {\SadraCapLabel{c_r^1}};
    \node[cap_node, color_resource]  (cr2)  at \SFcoord{f/cr2}  {\SadraCapLabel{c_r^2}};
    \node[cap_node, color_indicator_c] (cc4r) at \SFcoord{f/cc4r} {\SadraCapLabel{c_c^4}};
    \node[cap_node, color_compute]   (cc3r) at \SFcoord{f/cc3r} {\SadraCapLabel{c_c^3}};
    \draw[link_arrow] (cc1) -- (Rck);
    \draw[link_arrow] (cr1) -- (Rrj);
    \draw[link_arrow] (cr2) -- (cr1);
    \draw[link_to_arrow] ([xshift=4pt,yshift=-11pt]cc4r.center) -- (cr2);
    \draw[link_to_arrow] ([xshift=-10pt]cc3r.center) -- (cr2);
  }
\end{scope}

\begin{scope}[shift={(0,-2*\SFkey{grid y})}]
  \SADRASubfigure{g}{Inter-Node Delegation -- Compute Nodes.}{fig:tree-7}{
    \node[cap_node, color_compute]   (cc1) at \SFcoord{g/cc1} {\SadraCapLabel{c_c^1}};
    \node[cap_node, color_indicator_p] (cp5) at \SFcoord{g/cp5} {\SadraCapLabel{c_p^5}};
    \node[cap_node, color_compute]   (cc4) at \SFcoord{g/cc4} {\SadraCapLabel{c_c^4}};
    \node[cap_node, color_resource]  (cr1) at \SFcoord{g/cr1} {\SadraCapLabel{c_r^1}};
    \node[cap_node, color_resource]  (cr2) at \SFcoord{g/cr2} {\SadraCapLabel{c_r^2}};
    \node[cap_node, color_compute]   (cc3) at \SFcoord{g/cc3} {\SadraCapLabel{c_c^3}};
    \node[cap_node, color_process]   (cp4) at \SFcoord{g/cp4} {\SadraCapLabel{c_p^4}};
    \draw[link_arrow] (cc1) -- (Rck);
    \draw[link_arrow] (cc4) -- (cc1);
    \draw[link_to_arrow] ([xshift=-4.5pt,yshift=-10pt]cp5.center) -- (cc4);
    \draw[link_arrow] (cr1) -- (Rrj);
    \draw[link_arrow] (cr2) -- (cr1);
    \draw[link_arrow] (cc3) -- (Rckp);
    \draw[link_to_arrow] ([xshift=-10pt,yshift=4.5pt]cp4.center) -- (cc3);
  }
\end{scope}

\begin{scope}[shift={(\SFkey{grid x},-2*\SFkey{grid y})}]
  \SADRASubfigure{h}{Inter-Node Revocation -- Compute Nodes.}{fig:tree-8}{
    \node[cap_node, color_compute]   (cc1) at \SFcoord{h/cc1} {\SadraCapLabel{c_c^1}};
    \node[cap_node, color_indicator_p] (cp5) at \SFcoord{h/cp5} {\SadraCapLabel{c_p^5}};
    \node[cap_node, color_compute]   (cc4) at \SFcoord{h/cc4} {\SadraCapLabel{c_c^4}};
    \node[cap_node, color_resource]  (cr1) at \SFcoord{h/cr1} {\SadraCapLabel{c_r^1}};
    \node[cap_node, color_resource]  (cr2) at \SFcoord{h/cr2} {\SadraCapLabel{c_r^2}};
    \node[cap_node, color_compute]   (cc3) at \SFcoord{h/cc3} {\SadraCapLabel{c_c^3}};
    \draw[link_arrow] (cc1) -- (Rck);
    \draw[link_to_arrow] ([xshift=-4.5pt,yshift=-10pt]cp5.center) -- (cc4);
    \draw[link_arrow] (cr1) -- (Rrj);
    \draw[link_arrow] (cr2) -- (cr1);
    \draw[link_arrow] (cc3) -- (Rckp);
    \node[revoked_x] at (cc4) {};
  }
\end{scope}

\begin{scope}[shift={(2*\SFkey{grid x},-2*\SFkey{grid y})}]
  \SADRASubfigure{i}{Inter-Node Revocation -- Resource Node.}{fig:tree-9}{
    \node[cap_node, color_compute]   (cc1)  at \SFcoord{i/cc1}  {\SadraCapLabel{c_c^1}};
    \node[cap_node, color_resource]  (cr1)  at \SFcoord{i/cr1}  {\SadraCapLabel{c_r^1}};
    \node[cap_node, color_resource]  (cr2)  at \SFcoord{i/cr2}  {\SadraCapLabel{c_r^2}};
    \node[cap_node, color_compute]   (cc3)  at \SFcoord{i/cc3}  {\SadraCapLabel{c_c^3}};
    \node[cap_node, color_indicator_c] (cc4r) at \SFcoord{i/cc4r} {\SadraCapLabel{c_c^4}};
    \draw[link_arrow] (cc1) -- (Rck);
    \draw[link_arrow] (cr1) -- (Rrj);
    \draw[link_arrow] (cc3) -- (Rckp);
    \draw[link_to_arrow] ([xshift=4pt,yshift=-11pt]cc4r.center) -- (cr2);
    \node[revoked_x] at (cr2) {};
  }
\end{scope}

\end{tikzpicture}%
}

\caption{Capability Trees. Legend: \protect\rcapcircle{} = resource capability; \protect\icapcircle{} = compute capability; \protect\pcapcircle{} = process capability; \\
  \protect\solidcircle{} = indicator flag is unset (usable capability); \protect\dashedcircle{} = indicator flag is set (revocation handle);
  \protect\circledashedarrow{}: link-to connection in our capability-object model; 
  \protect\captreedarrow{}: child-parent relation in a capability tree.}
\label{fig:cap-trees1}
\end{figure*}

%% file: figure_impl.tex

\newcommand{\xMemNode}{0}         
\newcommand{\xCompOne}{3.7}       
\newcommand{\xCompTwo}{7.4}         

\newcommand{\yApp}{9.95}          
\newcommand{\yVfs}{8.7}           
\newcommand{\yQdma}{7.45}          
\newcommand{\ySadra}{6.85}         
\newcommand{\yStack}{6.25}         
\newcommand{\ySwitch}{4.75}        

\newcommand{\wBox}{2.7}           
\newcommand{\hBox}{0.6}           
\newcommand{\wSwitch}{10.5}       
\newcommand{\hSwitch}{0.6}        

\newcommand{\wFpga}{3}          
\newcommand{\hFpga}{2.6}          
\newcommand{\yFpgaMid}{6.8}      
\newcommand{\offFpgaX}{-0.03}      
\newcommand{\offFpgaY}{0.03}      

\newcommand{\wCont}{3.3}          
\newcommand{\hCont}{4.95}          
\newcommand{\yContMid}{7.88}      
\newcommand{\offNodeLabelY}{-0.07}  

\newcommand{\szNodeHeader}{9}     
\newcommand{\szBoxLabel}{9}       
\newcommand{\szFpgaLabel}{7}      
\newcommand{\szSwitchLabel}{10}   

\newcommand{\fsNodeHeader}{\fontsize{\szNodeHeader}{\szNodeHeader}\selectfont}
\newcommand{\fsBoxLabel}{\fontsize{\szBoxLabel}{\szBoxLabel}\selectfont}
\newcommand{\fsFpgaLabel}{\fontsize{\szFpgaLabel}{\szFpgaLabel}\selectfont}
\newcommand{\fsSwitchLabel}{\fontsize{\szSwitchLabel}{\szSwitchLabel}\selectfont}

\providecommand{\boxWeight}{0.7pt}    
\providecommand{\arrowWeight}{0.6pt}  
\providecommand{\dottedWeight}{0.8pt} 


\begin{figure}[t]
\centering
\begin{tikzpicture}[
    scale=0.74, every node/.style={transform shape},
    font=\sffamily\bfseries,
    box/.style={rectangle, draw, line width=\boxWeight, minimum width=\wBox cm, minimum height=\hBox cm, align=center, font=\fsBoxLabel},
    node_container/.style={rectangle, draw, dashed, line width=\dottedWeight, minimum width=\wCont cm, minimum height=\hCont cm},
    fpga_card/.style={rectangle, draw, fill=blue!7.5, line width=\boxWeight, minimum width=\wFpga cm, minimum height=\hFpga cm},
    arrow/.style={-{Stealth[scale=0.8]}, line width=\arrowWeight},
    bi_arrow/.style={{Stealth[scale=0.8]}-{Stealth[scale=0.8]}, line width=\arrowWeight},
    fabric_arrow/.style={{Stealth[scale=0.9]}-{Stealth[scale=0.9]}, line width=\arrowWeight, cyan!80!black}
]

\node[box, fill=yellow!15] (mem) at (\xMemNode, \yApp) {Memory Module};
\node[box, fill=purple!20] (rw)  at (\xMemNode, \yQdma) {Read / Write Ctrl.};
\node[box, fill=orange!30] (rc)  at (\xMemNode, \ySadra) {SADRA (RC)};
\node[box, fill=teal!20]   (tm)  at (\xMemNode, \yStack) {TCP/IP Stack};

\begin{scope}[on background layer]
    \node[fpga_card] (fpgam) at (\xMemNode, \yFpgaMid) {};
    \node[anchor=south west, font=\bfseries\fsFpgaLabel] at ([xshift=\offFpgaX cm, yshift=\offFpgaY cm]fpgam.south west) {FPGA Card};
    
    \node[node_container] (mn) at (\xMemNode, \yContMid) {};
    \node[above, font=\fsNodeHeader] at ([yshift=\offNodeLabelY cm]mn.north) {Memory Node};
\end{scope}

\foreach \x/\idx in {\xCompOne/1, \xCompTwo/2} {
    \node[box, fill=green!15]  (p\idx) at (\x, \yApp) {Podman};
    \node[box, fill=orange!15] (v\idx) at (\x, \yVfs) {SR-IOV VFs};
    \node[box, fill=blue!30]   (q\idx) at (\x, \yQdma) {QDMA};
    \node[box, fill=orange!30] (s\idx) at (\x, \ySadra) {SADRA (CC)};
    \node[box, fill=teal!20]   (t\idx) at (\x, \yStack) {TCP/IP Stack};

    \begin{scope}[on background layer]
        \node[fpga_card] (fpga\idx) at (\x, \yFpgaMid) {};
        \node[anchor=south west, font=\bfseries\fsFpgaLabel] at ([xshift=\offFpgaX cm, yshift=\offFpgaY cm]fpga\idx.south west) {FPGA Card};
        
        \node[node_container] (cn\idx) at (\x, \yContMid) {};
        \node[above, font=\fsNodeHeader] at ([yshift=\offNodeLabelY cm]cn\idx.north) {Compute Node \idx};
    \end{scope}

    \draw[bi_arrow] (p\idx.south) -- (v\idx.north);
    \draw[bi_arrow] (v\idx.south) -- (q\idx.north);
}

\draw[bi_arrow] (mem.south) -- (rw.north);
\node[below=0.1cm] at (rc.north) {}; 

\node[box, fill=gray!70, text=white, minimum width=\wSwitch cm, minimum height=\hSwitch cm, font=\fsSwitchLabel] 
    (switch) at ({(\xMemNode+\xCompTwo)/2}, \ySwitch) {100 GbE Optical Switch};

\foreach \t in {tm, t1, t2} {
    \draw[fabric_arrow] (\t.south) -- (\t.south |- switch.north);
}

\end{tikzpicture}
\caption{Prototype deployment of \SADRA{} in a memory-disaggregated architecture with two compute controllers, one resource controller, and a 100~GbE switch.}
\label{fig:impl}
\end{figure}

%% file: figure_throughput.tex
\definecolor{aaeWOne}{HTML}{0072B2}
\definecolor{aaeWTwo}{HTML}{E69F00}
\definecolor{aaeWFour}{HTML}{009E73}
\definecolor{aaeWEight}{HTML}{D55E00}
\begin{figure*}[t]
\centering
\begin{tikzpicture}
\begin{groupplot}[
group style={group size=4 by 2,horizontal sep=0.85cm,vertical sep=0.3cm},
width=0.165\textwidth,
height=0.116\textwidth,
scale only axis,
xmode=log,
log basis x=2,
xtick={1,2,4,8,16,32},
xticklabels={2,4,8,16,32,64},
xmin=0.82,
xmax=39,
grid=major,
grid style={draw=black!12},
axis line style={black!65},
tick align=inside,
tick label style={font=\scriptsize},
label style={font=\scriptsize},
title style={font=\scriptsize\bfseries},
every axis plot/.append style={line width=0.65pt,mark size=1.25pt},
clip mode=individual,
]

\nextgroupplot[title={(a) 512 B},height=0.058\textwidth,ymin=-1,ymax=2,ytick={-1,0,1,2},xticklabels=\empty,ylabel={\shortstack{Throughput \\overhead\\(\%)}},legend columns=-1,legend cell align={left},legend style={at={(2.44,-3.10)},anchor=north,draw=none,font=\scriptsize,/tikz/every even column/.append style={column sep=3pt},/tikz/every odd column/.append style={column sep=1pt}}]
\addlegendimage{aaeWOne,solid,mark=*,mark size=1.7pt}
\addlegendentry{$W=1$}
\addlegendimage{aaeWTwo,solid,mark=square*,mark size=1.7pt}
\addlegendentry{$W=2$}
\addlegendimage{aaeWFour,solid,mark=triangle*,mark size=1.7pt}
\addlegendentry{$W=4$}
\addlegendimage{aaeWEight,solid,mark=diamond*,mark size=1.7pt}
\addlegendentry{$W=8$}
\addplot+[black!65,densely dotted,no marks,forget plot] coordinates {(1,0) (32,0)};
\addplot+[aaeWOne,solid,mark=*,mark options={fill=aaeWOne},forget plot] coordinates {(1,0.736964) (2,-0.628525) (4,-0.784916) (8,0.735841) (16,0.241845) (32,0.240368)};
\addplot+[aaeWTwo,solid,mark=square*,mark options={fill=aaeWTwo},forget plot] coordinates {(1,0.102288) (2,-0.381736) (4,-0.480434) (8,0.263064) (16,0.041425) (32,0.056268)};
\addplot+[aaeWFour,solid,mark=triangle*,mark options={fill=aaeWFour},forget plot] coordinates {(1,-0.278937) (2,-0.384029) (4,-0.506653) (8,0.134344) (16,0.513185) (32,0.080014)};
\addplot+[aaeWEight,solid,mark=diamond*,mark options={fill=aaeWEight},forget plot] coordinates {(1,-0.366646) (2,-0.650771) (4,-0.555103) (8,0.770807) (16,1.259390) (32,0.592451)};

\nextgroupplot[title={(b) 1 KiB},height=0.058\textwidth,ymin=-1,ymax=2,ytick={-1,0,1,2},xticklabels=\empty,yticklabels=\empty]
\addplot+[black!65,densely dotted,no marks,forget plot] coordinates {(1,0) (32,0)};
\addplot+[aaeWOne,solid,mark=*,mark options={fill=aaeWOne},forget plot] coordinates {(1,0.879161) (2,-0.519978) (4,-0.297783) (8,0.589082) (16,0.277641) (32,0.793182)};
\addplot+[aaeWTwo,solid,mark=square*,mark options={fill=aaeWTwo},forget plot] coordinates {(1,0.135054) (2,-0.566469) (4,-0.485363) (8,0.438987) (16,0.169029) (32,-0.079229)};
\addplot+[aaeWFour,solid,mark=triangle*,mark options={fill=aaeWFour},forget plot] coordinates {(1,-0.242647) (2,-0.090978) (4,-0.069701) (8,0.088135) (16,0.376878) (32,0.318335)};
\addplot+[aaeWEight,solid,mark=diamond*,mark options={fill=aaeWEight},forget plot] coordinates {(1,-0.213045) (2,-0.779442) (4,-0.480018) (8,0.594983) (16,-0.037909) (32,0.173787)};

\nextgroupplot[title={(c) 2 KiB},height=0.058\textwidth,ymin=-1,ymax=2,ytick={-1,0,1,2},xticklabels=\empty,yticklabels=\empty]
\addplot+[black!65,densely dotted,no marks,forget plot] coordinates {(1,0) (32,0)};
\addplot+[aaeWOne,solid,mark=*,mark options={fill=aaeWOne},forget plot] coordinates {(1,-0.424295) (2,0.164638) (4,0.352258) (8,0.811308) (16,0.108812) (32,0.328401)};
\addplot+[aaeWTwo,solid,mark=square*,mark options={fill=aaeWTwo},forget plot] coordinates {(1,-0.317020) (2,0.182787) (4,0.408708) (8,0.785093) (16,-0.092835) (32,0.572363)};
\addplot+[aaeWFour,solid,mark=triangle*,mark options={fill=aaeWFour},forget plot] coordinates {(1,-0.074835) (2,0.397076) (4,-0.100323) (8,0.567514) (16,0.411552) (32,0.088489)};
\addplot+[aaeWEight,solid,mark=diamond*,mark options={fill=aaeWEight},forget plot] coordinates {(1,0.039371) (2,0.128316) (4,0.367611) (8,1.560232) (16,0.024044) (32,0.905971)};

\nextgroupplot[title={(d) 4 KiB},height=0.058\textwidth,ymin=-1,ymax=2,ytick={-1,0,1,2},xticklabels=\empty,yticklabels=\empty]
\addplot+[black!65,densely dotted,no marks,forget plot] coordinates {(1,0) (32,0)};
\addplot+[aaeWOne,solid,mark=*,mark options={fill=aaeWOne},forget plot] coordinates {(1,-0.251435) (2,0.160939) (4,-0.157020) (8,0.702150) (16,0.393767) (32,0.341536)};
\addplot+[aaeWTwo,solid,mark=square*,mark options={fill=aaeWTwo},forget plot] coordinates {(1,-0.074770) (2,0.040884) (4,0.112585) (8,0.837160) (16,0.078064) (32,0.243224)};
\addplot+[aaeWFour,solid,mark=triangle*,mark options={fill=aaeWFour},forget plot] coordinates {(1,-0.087645) (2,-0.226379) (4,0.480243) (8,0.086086) (16,0.061370) (32,0.334699)};
\addplot+[aaeWEight,solid,mark=diamond*,mark options={fill=aaeWEight},forget plot] coordinates {(1,0.136586) (2,0.147226) (4,0.118921) (8,0.054061) (16,0.454377) (32,0.002956)};

\nextgroupplot[ymin=0,ymax=100,ytick={0,20,40,60,80,100},xlabel={Total tenants},ylabel={\shortstack{SADRA\\throughput\\(Gbit/s)}}]
\addplot+[aaeWOne,solid,mark=*,mark options={fill=aaeWOne},forget plot] coordinates {(1,0.404075) (2,0.799286) (4,1.550541) (8,2.324578) (16,3.705523) (32,4.450488)};
\addplot+[aaeWTwo,solid,mark=square*,mark options={fill=aaeWTwo},forget plot] coordinates {(1,0.522171) (2,1.034664) (4,2.008350) (8,3.304300) (16,5.743809) (32,7.247472)};
\addplot+[aaeWFour,solid,mark=triangle*,mark options={fill=aaeWFour},forget plot] coordinates {(1,0.609479) (2,1.194934) (4,2.295249) (8,4.291151) (16,7.760533) (32,10.449349)};
\addplot+[aaeWEight,solid,mark=diamond*,mark options={fill=aaeWEight},forget plot] coordinates {(1,0.661637) (2,1.288607) (4,2.499222) (8,4.742566) (16,9.268809) (32,12.870198)};

\nextgroupplot[ymin=0,ymax=100,ytick={0,20,40,60,80,100},xlabel={Total tenants},yticklabels=\empty]
\addplot+[aaeWOne,solid,mark=*,mark options={fill=aaeWOne},forget plot] coordinates {(1,0.806163) (2,1.599623) (4,3.085335) (8,4.626438) (16,7.401212) (32,8.845011)};
\addplot+[aaeWTwo,solid,mark=square*,mark options={fill=aaeWTwo},forget plot] coordinates {(1,1.043113) (2,2.068902) (4,4.004401) (8,6.539766) (16,11.430372) (32,14.526041)};
\addplot+[aaeWFour,solid,mark=triangle*,mark options={fill=aaeWFour},forget plot] coordinates {(1,1.217470) (2,2.385159) (4,4.581328) (8,8.484415) (16,15.520339) (32,20.784515)};
\addplot+[aaeWEight,solid,mark=diamond*,mark options={fill=aaeWEight},forget plot] coordinates {(1,1.322566) (2,2.582748) (4,5.049625) (8,9.541981) (16,18.625211) (32,25.688474)};

\nextgroupplot[ymin=0,ymax=100,ytick={0,20,40,60,80,100},xlabel={Total tenants},yticklabels=\empty]
\addplot+[aaeWOne,solid,mark=*,mark options={fill=aaeWOne},forget plot] coordinates {(1,1.619240) (2,3.167194) (4,6.103192) (8,9.132163) (16,14.698097) (32,17.629543)};
\addplot+[aaeWTwo,solid,mark=square*,mark options={fill=aaeWTwo},forget plot] coordinates {(1,2.084165) (2,4.126042) (4,7.944972) (8,12.876266) (16,22.584703) (32,28.580584)};
\addplot+[aaeWFour,solid,mark=triangle*,mark options={fill=aaeWFour},forget plot] coordinates {(1,2.433403) (2,4.763723) (4,9.263396) (8,16.589191) (16,30.348477) (32,41.263701)};
\addplot+[aaeWEight,solid,mark=diamond*,mark options={fill=aaeWEight},forget plot] coordinates {(1,2.641330) (2,5.161870) (4,10.242288) (8,18.623208) (16,35.772089) (32,51.023818)};

\nextgroupplot[ymin=0,ymax=100,ytick={0,20,40,60,80,100},xlabel={Total tenants},yticklabels=\empty]
\addplot+[aaeWOne,solid,mark=*,mark options={fill=aaeWOne},forget plot] coordinates {(1,3.158100) (2,6.212840) (4,11.802967) (8,17.681812) (16,28.554928) (32,34.770833)};
\addplot+[aaeWTwo,solid,mark=square*,mark options={fill=aaeWTwo},forget plot] coordinates {(1,4.145584) (2,8.183178) (4,15.422845) (8,24.732946) (16,42.587259) (32,56.233338)};
\addplot+[aaeWFour,solid,mark=triangle*,mark options={fill=aaeWFour},forget plot] coordinates {(1,4.843863) (2,9.544695) (4,18.299052) (8,28.146569) (16,51.929286) (32,78.296909)};
\addplot+[aaeWEight,solid,mark=diamond*,mark options={fill=aaeWEight},forget plot] coordinates {(1,5.273206) (2,10.409466) (4,20.433483) (8,28.350087) (16,53.005410) (32,89.542011)};
\end{groupplot}
\end{tikzpicture}
\caption{Throughput and overhead across payload sizes. Top: mean signed throughput overhead relative to baseline; negative values indicate that SADRA measured faster. Bottom: achieved SADRA throughput.}
\label{fig:aae_throughput}
\end{figure*}

%% file: figure_latency.tex
\definecolor{aaeWOne}{HTML}{0072B2}
\definecolor{aaeWTwo}{HTML}{E69F00}
\definecolor{aaeWFour}{HTML}{009E73}
\definecolor{aaeWEight}{HTML}{D55E00}
\begin{figure*}[t]
\centering
\begin{tikzpicture}
\begin{groupplot}[
group style={group size=4 by 2,horizontal sep=0.85cm,vertical sep=0.3cm},
width=0.165\textwidth,
height=0.116\textwidth,
scale only axis,
xmode=log,
log basis x=2,
xtick={1,2,4,8,16,32},
xticklabels={2,4,8,16,32,64},
xmin=0.82,
xmax=39,
grid=major,
grid style={draw=black!12},
axis line style={black!65},
tick align=inside,
tick label style={font=\scriptsize},
label style={font=\scriptsize},
title style={font=\scriptsize\bfseries},
every axis plot/.append style={line width=0.65pt,mark size=1.25pt},
clip mode=individual,
]
\nextgroupplot[title={(a) 512 B},height=0.058\textwidth,ymin=0,ymax=2,ytick={0,1,2},xticklabels=\empty,ylabel={\shortstack{Mean RTT\\increase\\($\mu$s)}},legend columns=-1,legend cell align={left},legend style={at={(2.44,-3.10)},anchor=north,draw=none,font=\scriptsize,/tikz/every even column/.append style={column sep=3pt},/tikz/every odd column/.append style={column sep=1pt}}]
\addlegendimage{aaeWOne,solid,mark=*,mark size=1.7pt}
\addlegendentry{$W=1$}
\addlegendimage{aaeWTwo,solid,mark=square*,mark size=1.7pt}
\addlegendentry{$W=2$}
\addlegendimage{aaeWFour,solid,mark=triangle*,mark size=1.7pt}
\addlegendentry{$W=4$}
\addlegendimage{aaeWEight,solid,mark=diamond*,mark size=1.7pt}
\addlegendentry{$W=8$}
\addlegendimage{black,dashed,no marks}
\addlegendentry{Baseline}
\addlegendimage{black,solid,no marks}
\addlegendentry{SADRA}

\addplot+[black!65,densely dotted,no marks,forget plot] coordinates {(1,0) (32,0)};
\addplot+[aaeWOne,solid,mark=*,mark options={fill=aaeWOne},forget plot] coordinates {(1,0.091314) (2,-0.115837) (4,-0.168525) (8,0.013655) (16,0.014425) (32,0.182612)};
\addplot+[aaeWTwo,solid,mark=square*,mark options={fill=aaeWTwo},forget plot] coordinates {(1,0.020104) (2,-0.127174) (4,-0.117956) (8,-0.016375) (16,0.009022) (32,0.136595)};
\addplot+[aaeWFour,solid,mark=triangle*,mark options={fill=aaeWFour},forget plot] coordinates {(1,-0.054037) (2,-0.105839) (4,0.044791) (8,0.138293) (16,0.008273) (32,0.326474)};
\addplot+[aaeWEight,solid,mark=diamond*,mark options={fill=aaeWEight},forget plot] coordinates {(1,-0.070360) (2,-0.337821) (4,-0.331730) (8,0.489520) (16,0.312905) (32,0.425022)};
\nextgroupplot[title={(b) 1 KiB},height=0.058\textwidth,ymin=0,ymax=2,ytick={0,1,2},xticklabels=\empty,yticklabels=\empty]
\addplot+[black!65,densely dotted,no marks,forget plot] coordinates {(1,0) (32,0)};
\addplot+[aaeWOne,solid,mark=*,mark options={fill=aaeWOne},forget plot] coordinates {(1,0.079263) (2,-0.106728) (4,-0.062114) (8,-0.004848) (16,-0.049608) (32,0.265856)};
\addplot+[aaeWTwo,solid,mark=square*,mark options={fill=aaeWTwo},forget plot] coordinates {(1,0.019219) (2,-0.140021) (4,-0.074862) (8,-0.067522) (16,-0.027142) (32,-0.203038)};
\addplot+[aaeWFour,solid,mark=triangle*,mark options={fill=aaeWFour},forget plot] coordinates {(1,-0.096939) (2,-0.051686) (4,0.166912) (8,-0.140696) (16,-0.009861) (32,0.100124)};
\addplot+[aaeWEight,solid,mark=diamond*,mark options={fill=aaeWEight},forget plot] coordinates {(1,0.004336) (2,-0.397447) (4,-0.261278) (8,0.287068) (16,-0.018221) (32,0.008201)};
\nextgroupplot[title={(c) 2 KiB},height=0.058\textwidth,ymin=0,ymax=2,ytick={0,1,2},xticklabels=\empty,yticklabels=\empty]
\addplot+[black!65,densely dotted,no marks,forget plot] coordinates {(1,0) (32,0)};
\addplot+[aaeWOne,solid,mark=*,mark options={fill=aaeWOne},forget plot] coordinates {(1,-0.016195) (2,-0.023150) (4,0.095931) (8,0.116873) (16,-0.002335) (32,0.044376)};
\addplot+[aaeWTwo,solid,mark=square*,mark options={fill=aaeWTwo},forget plot] coordinates {(1,0.019823) (2,0.013007) (4,0.074117) (8,0.023943) (16,0.013643) (32,0.003057)};
\addplot+[aaeWFour,solid,mark=triangle*,mark options={fill=aaeWFour},forget plot] coordinates {(1,-0.048252) (2,0.072642) (4,0.008353) (8,0.120760) (16,0.142354) (32,0.094235)};
\addplot+[aaeWEight,solid,mark=diamond*,mark options={fill=aaeWEight},forget plot] coordinates {(1,-0.042819) (2,-0.048845) (4,0.230405) (8,0.4) (16,0.099660) (32,-0.373024)};
\nextgroupplot[title={(d) 4 KiB},height=0.058\textwidth,ymin=0,ymax=2,ytick={0,1,2},xticklabels=\empty,yticklabels=\empty]
\addplot+[black!65,densely dotted,no marks,forget plot] coordinates {(1,0) (32,0)};
\addplot+[aaeWOne,solid,mark=*,mark options={fill=aaeWOne},forget plot] coordinates {(1,-0.043523) (2,-0.013836) (4,0.027209) (8,0.032077) (16,0.020865) (32,0.340951)};
\addplot+[aaeWTwo,solid,mark=square*,mark options={fill=aaeWTwo},forget plot] coordinates {(1,0.009948) (2,0.014785) (4,0.091961) (8,0.103845) (16,0.023823) (32,0.068870)};
\addplot+[aaeWFour,solid,mark=triangle*,mark options={fill=aaeWFour},forget plot] coordinates {(1,-0.037283) (2,-0.065120) (4,0.128702) (8,-0.079485) (16,0.050898) (32,0.041910)};
\addplot+[aaeWEight,solid,mark=diamond*,mark options={fill=aaeWEight},forget plot] coordinates {(1,-0.050270) (2,-0.041539) (4,0.103254) (8,-0.019929) (16,0.271187) (32,-0.270142)};

\nextgroupplot[ymin=0,ymax=90.0,xlabel={Total tenants},ylabel={\shortstack{Mean\\RTT\\($\mu$s)}}]
\addplot+[aaeWOne,dashed,mark=*,mark options={fill=white},forget plot] coordinates {(1,11.290784) (2,11.421808) (4,11.749284) (8,14.544440) (16,18.782297) (32,37.500188)};
\addplot+[aaeWTwo,dashed,mark=square*,mark options={fill=white},forget plot] coordinates {(1,16.764330) (2,16.932984) (4,17.180482) (8,19.096035) (16,22.056403) (32,38.363600)};
\addplot+[aaeWFour,dashed,mark=triangle*,mark options={fill=white},forget plot] coordinates {(1,28.025569) (2,28.534086) (4,29.071544) (8,29.039531) (16,31.607534) (32,49.157335)};
\addplot+[aaeWEight,dashed,mark=diamond*,mark options={fill=white},forget plot] coordinates {(1,50.361853) (2,51.978886) (4,52.809540) (8,53.233644) (16,52.534113) (32,76.851557)};
\addplot+[aaeWOne,solid,mark=*,mark options={fill=aaeWOne},forget plot] coordinates {(1,11.382098) (2,11.305971) (4,11.580759) (8,14.558095) (16,18.796723) (32,37.682800)};
\addplot+[aaeWTwo,solid,mark=square*,mark options={fill=aaeWTwo},forget plot] coordinates {(1,16.784434) (2,16.805810) (4,17.062526) (8,19.079660) (16,22.065426) (32,38.500195)};
\addplot+[aaeWFour,solid,mark=triangle*,mark options={fill=aaeWFour},forget plot] coordinates {(1,27.971533) (2,28.428246) (4,29.116335) (8,29.177824) (16,31.615807) (32,49.483808)};
\addplot+[aaeWEight,solid,mark=diamond*,mark options={fill=aaeWEight},forget plot] coordinates {(1,50.291493) (2,51.641065) (4,52.477810) (8,53.723164) (16,52.847018) (32,77.276579)};

\nextgroupplot[ymin=0,ymax=90.0,xlabel={Total tenants},yticklabels=\empty]
\addplot+[aaeWOne,dashed,mark=*,mark options={fill=white},forget plot] coordinates {(1,11.298455) (2,11.452841) (4,11.740923) (8,14.671090) (16,18.925817) (32,37.473912)};
\addplot+[aaeWTwo,dashed,mark=square*,mark options={fill=white},forget plot] coordinates {(1,16.761895) (2,16.943805) (4,17.198037) (8,19.297708) (16,22.228488) (32,38.335511)};
\addplot+[aaeWFour,dashed,mark=triangle*,mark options={fill=white},forget plot] coordinates {(1,28.068933) (2,28.500088) (4,28.931914) (8,29.338547) (16,31.822936) (32,49.340667)};
\addplot+[aaeWEight,dashed,mark=diamond*,mark options={fill=white},forget plot] coordinates {(1,50.270159) (2,51.861494) (4,52.079993) (8,52.101608) (16,53.014827) (32,77.323500)};
\addplot+[aaeWOne,solid,mark=*,mark options={fill=aaeWOne},forget plot] coordinates {(1,11.377718) (2,11.346112) (4,11.678809) (8,14.666243) (16,18.876209) (32,37.739768)};
\addplot+[aaeWTwo,solid,mark=square*,mark options={fill=aaeWTwo},forget plot] coordinates {(1,16.781114) (2,16.803784) (4,17.123175) (8,19.230186) (16,22.201346) (32,38.132473)};
\addplot+[aaeWFour,solid,mark=triangle*,mark options={fill=aaeWFour},forget plot] coordinates {(1,27.971994) (2,28.448401) (4,29.098826) (8,29.197851) (16,31.813074) (32,49.440791)};
\addplot+[aaeWEight,solid,mark=diamond*,mark options={fill=aaeWEight},forget plot] coordinates {(1,50.274495) (2,51.464047) (4,51.818715) (8,52.388677) (16,52.996606) (32,77.331701)};

\nextgroupplot[ymin=0,ymax=90.0,xlabel={Total tenants},yticklabels=\empty]
\addplot+[aaeWOne,dashed,mark=*,mark options={fill=white},forget plot] coordinates {(1,11.355033) (2,11.427035) (4,11.727854) (8,14.897025) (16,19.160907) (32,37.607292)};
\addplot+[aaeWTwo,dashed,mark=square*,mark options={fill=white},forget plot] coordinates {(1,16.753411) (2,16.804755) (4,17.188997) (8,19.488385) (16,22.578069) (32,38.549866)};
\addplot+[aaeWFour,dashed,mark=triangle*,mark options={fill=white},forget plot] coordinates {(1,27.960521) (2,28.357036) (4,28.617981) (8,29.253061) (16,32.266975) (32,49.733285)};
\addplot+[aaeWEight,dashed,mark=diamond*,mark options={fill=white},forget plot] coordinates {(1,50.267388) (2,51.454597) (4,50.864875) (8,49.091129) (16,53.969667) (32,77.756798)};
\addplot+[aaeWOne,solid,mark=*,mark options={fill=aaeWOne},forget plot] coordinates {(1,11.338839) (2,11.403885) (4,11.823786) (8,15.013897) (16,19.158571) (32,37.651668)};
\addplot+[aaeWTwo,solid,mark=square*,mark options={fill=aaeWTwo},forget plot] coordinates {(1,16.773234) (2,16.817763) (4,17.263114) (8,19.512328) (16,22.591712) (32,38.552923)};
\addplot+[aaeWFour,solid,mark=triangle*,mark options={fill=aaeWFour},forget plot] coordinates {(1,27.912269) (2,28.429678) (4,28.626334) (8,29.373821) (16,32.409329) (32,49.827520)};
\addplot+[aaeWEight,solid,mark=diamond*,mark options={fill=aaeWEight},forget plot] coordinates {(1,50.224570) (2,51.405752) (4,51.095280) (8,50.798725) (16,54.069327) (32,77.383775)};

\nextgroupplot[ymin=0,ymax=90.0,xlabel={Total tenants},yticklabels=\empty]
\addplot+[aaeWOne,dashed,mark=*,mark options={fill=white},forget plot] coordinates {(1,11.712073) (2,11.758081) (4,12.280680) (8,15.909329) (16,20.257359) (32,37.873915)};
\addplot+[aaeWTwo,dashed,mark=square*,mark options={fill=white},forget plot] coordinates {(1,16.743206) (2,16.895620) (4,17.826282) (8,20.278452) (16,24.291881) (32,39.328845)};
\addplot+[aaeWFour,dashed,mark=triangle*,mark options={fill=white},forget plot] coordinates {(1,27.944784) (2,28.303292) (4,28.561229) (8,33.839333) (16,36.856045) (32,51.911864)};
\addplot+[aaeWEight,dashed,mark=diamond*,mark options={fill=white},forget plot] coordinates {(1,50.259971) (2,50.808168) (4,50.964941) (8,65.414342) (16,68.365592) (32,87.363693)};
\addplot+[aaeWOne,solid,mark=*,mark options={fill=aaeWOne},forget plot] coordinates {(1,11.668550) (2,11.744245) (4,12.307888) (8,15.941406) (16,20.278224) (32,38.214866)};
\addplot+[aaeWTwo,solid,mark=square*,mark options={fill=aaeWTwo},forget plot] coordinates {(1,16.753154) (2,16.910405) (4,17.918243) (8,20.382297) (16,24.315704) (32,39.397715)};
\addplot+[aaeWFour,solid,mark=triangle*,mark options={fill=aaeWFour},forget plot] coordinates {(1,27.907501) (2,28.238171) (4,28.689931) (8,33.759848) (16,36.906943) (32,51.953774)};
\addplot+[aaeWEight,solid,mark=diamond*,mark options={fill=aaeWEight},forget plot] coordinates {(1,50.209701) (2,50.766629) (4,51.068195) (8,65.394413) (16,68.636778) (32,87.093551)};
\end{groupplot}
\end{tikzpicture}
\caption{Request RTT across payload sizes. Top: difference in mean RTT
between \SADRA{} and the paired baseline. Bottom: absolute mean RTT. Each point reports the mean across
three synchronized rounds; run-to-run variation is discussed in the text.}
\label{fig:aae_latency}
\end{figure*}

%% file: figure_ml.tex
\definecolor{mlWOne}{HTML}{0072B2}
\definecolor{mlWTwo}{HTML}{E69F00}
\definecolor{mlWFour}{HTML}{009E73}
\definecolor{mlWEight}{HTML}{D55E00}
\begin{figure}[t]
\centering
\begin{tikzpicture}
\begin{groupplot}[
group style={group size=1 by 3,vertical sep=.5cm},
width=0.76\columnwidth,
height=0.15\columnwidth,
scale only axis,
xmode=log,
log basis x=2,
xtick={2,4,8,16},
xticklabels={2,4,8,16},
xmin=1.739130,
xmax=18.400000,
grid=major,
grid style={draw=black!12},
axis line style={black!65},
tick align=inside,
tick label style={font=\scriptsize},
label style={font=\scriptsize},
every axis plot/.append style={line width=0.65pt,mark size=1.35pt},
clip mode=individual,
]

\nextgroupplot[height=0.1125\columnwidth,ymin=-2.000000,ymax=2.000000,ylabel={\shortstack{Throughput\\overhead\\(\%)}},xticklabels=\empty]
\addplot+[black!65,densely dotted,no marks,forget plot] coordinates {(2,0) (16,0)};
\addplot+[mlWOne,solid,mark=*,mark options={fill=mlWOne},forget plot] coordinates {(2,-0.320850) (4,-0.862953) (8,0.041281) (16,0.165864)};
\addplot+[mlWTwo,solid,mark=square*,mark options={fill=mlWTwo},forget plot] coordinates {(2,-0.416203) (4,-0.874102) (8,-1.131193) (16,0.276299)};
\addplot+[mlWFour,solid,mark=triangle*,mark options={fill=mlWFour},forget plot] coordinates {(2,-0.487318) (4,-1.526919) (8,-1.037871) (16,1.764024)};
\addplot+[mlWEight,solid,mark=diamond*,mark options={fill=mlWEight},forget plot] coordinates {(2,-0.638167) (4,-0.283245) (8,0.894257) (16,0.445688)};

\nextgroupplot[ymin=0,ymax=1.600000,ylabel={\shortstack{Authorized req.\\per second\\(millions)}},xticklabels=\empty]
\addplot+[mlWOne,solid,mark=*,mark options={fill=mlWOne},forget plot] coordinates {(2,0.174545) (4,0.344935) (8,0.680408) (16,1.412023)};
\addplot+[mlWTwo,solid,mark=square*,mark options={fill=mlWTwo},forget plot] coordinates {(2,0.174918) (4,0.345860) (8,0.678514) (16,1.422301)};
\addplot+[mlWFour,solid,mark=triangle*,mark options={fill=mlWFour},forget plot] coordinates {(2,0.174876) (4,0.340613) (8,0.654647) (16,1.275173)};
\addplot+[mlWEight,solid,mark=diamond*,mark options={fill=mlWEight},forget plot] coordinates {(2,0.172617) (4,0.319167) (8,0.626696) (16,1.282560)};

\nextgroupplot[ymin=0,ymax=70.000000,ylabel={\shortstack{SADRA\\mean RTT\\($\mu$s)}},xlabel={Total tenants},legend columns=-1,legend cell align={left},legend style={at={(0.5,-0.66)},anchor=north,draw=none,font=\scriptsize,/tikz/every even column/.append style={column sep=3pt},/tikz/every odd column/.append style={column sep=1pt}}]
\addplot+[mlWOne,solid,mark=*,mark options={fill=mlWOne},forget plot] coordinates {(2,10.458163) (4,10.589175) (8,10.692023) (16,10.008086)};
\addplot+[mlWTwo,solid,mark=square*,mark options={fill=mlWTwo},forget plot] coordinates {(2,16.176624) (4,16.326180) (8,16.606262) (16,15.117620)};
\addplot+[mlWFour,solid,mark=triangle*,mark options={fill=mlWFour},forget plot] coordinates {(2,27.722175) (4,28.398501) (8,29.412641) (16,29.291844)};
\addplot+[mlWEight,solid,mark=diamond*,mark options={fill=mlWEight},forget plot] coordinates {(2,50.988440) (4,55.101055) (8,55.859303) (16,53.159078)};
\addlegendimage{mlWOne,solid,mark=*,mark size=1.7pt}
\addlegendentry{$W=1$}
\addlegendimage{mlWTwo,solid,mark=square*,mark size=1.7pt}
\addlegendentry{$W=2$}
\addlegendimage{mlWFour,solid,mark=triangle*,mark size=1.7pt}
\addlegendentry{$W=4$}
\addlegendimage{mlWEight,solid,mark=diamond*,mark size=1.7pt}
\addlegendentry{$W=8$}
\end{groupplot}
\end{tikzpicture}
\caption{Meta-DLRM performance for batch size 32 on the test split.
Top: mean signed inference-throughput overhead relative to baseline; negative
values indicate that \SADRA{} measured faster. Middle: protected remote
request rate under \SADRA{}, including 26 embedding reads and one result write
per inference. Bottom: \SADRA{} mean remote-operation RTT.}
\label{fig:ml_performance}
\end{figure}

%% file: table_revocation.tex
\begin{table}[!t]
  \centering
  \caption{Capability Cleanup Time. Legend: C = Cycles.}
  \label{tab:revoc_result}
  {\fontsize{8pt}{9.5pt}\selectfont
  \setlength{\tabcolsep}{2.2pt}%
  \renewcommand{\arraystretch}{0.95}%

  \begin{tabular*}{\columnwidth}{@{\extracolsep{\fill}} l rrrrrr }
    \toprule
    \multicolumn{1}{l}{} &
    \multicolumn{6}{c}{\textbf{Capability Count}} \\
    \cmidrule(lr){2-7}
    \multicolumn{1}{c}{\textbf{No. of Sub-trees}} &
    \multicolumn{1}{r}{128} &
    \multicolumn{1}{r}{256} &
    \multicolumn{1}{r}{512} &
    \multicolumn{1}{r}{1024} &
    \multicolumn{1}{r}{2048} &
    \multicolumn{1}{r}{4096} \\
    \midrule
    One   & 132C & 260C & 516C & 1028C & 2052C & 4100C \\
    Two   & 132C & 260C & 516C & 1028C & 2052C & 4100C \\
    Four  & 132C & 260C & 516C & 1028C & 2052C & 4100C \\
    Eight & 132C & 260C & 516C & 1028C & 2052C & 4100C \\
    \bottomrule
  \end{tabular*}
  }

\end{table}

%% file: appendix.tex
\section{Formal Model}
\label{app:formal-model}
This appendix defines the formal model of \SADRA{} and the reasoning used to establish its security properties. The model captures capability relationships, delegation, validation, revocation, and the interaction between the compute and resource controllers. It underlies the four security properties established in \S\ref{sec:sec_properties_soundness}: capability safety, authority safety, revocation soundness, and strong isolation.            

The model represents authority through the linked capability chain $c_p \mapsto c_c \mapsto c_r$. Resource capabilities anchor authoritative resource rights at the resource controller, compute capabilities represent authority delegated to a compute controller, and process capabilities associate subsets of that authority with individual processes. Each controller maintains its local delegation state as a rooted capability tree. These linked trees determine how authority is allocated, attenuated, validated, delegated, and revoked.            

We model the compute and resource controllers as independent state machines operating at their respective enforcement boundaries. The compute controller validates process-granularity authority before fabric admission, while the resource controller makes the authoritative decision before resource access. An inter-node request is authorized exactly when both validation stages succeed. Revocation is represented through controller-local fences that cause dependent authority to fail validation before the corresponding capability state is reclaimed.             

\noindent\textbf{Basic notation.} Let $\mathbb{C} = \mathbb{C}_r \cup \mathbb{C}_c \cup \mathbb{C}_p$ denote the set of all capabilities. Let $x \in \{r,c\}$ identify the resource or compute controller, so that $\mathbb{C}_x$ and $\preceq_x$ denote the capability set and delegation relation associated with controller $x$. Finally, let $\mathbb{B} = \{\mathit{true},\mathit{false}\}$ denote the Boolean domain.

\subsection{Formal Definitions}
\label{app:formal-definitions}

We first define the principals, resources, permissions, capability representation, trusted request context, and capability authenticity used throughout the formal model.

\begin{definition}[Principals]
\label{def:principals}

Let
$
\mathbb{P}
$
denote the set of processes,
$
\mathbb{CC}
$
the set of compute controllers, and
$
\mathbb{RC}
$
the set of resource controllers. These sets are pairwise disjoint.

Each compute and resource controller has a globally unique node identifier. Let
$
\operatorname{NID}_c :
\mathbb{CC}\rightarrow\mathbb{N}_c
$
and
$
\operatorname{NID}_r :
\mathbb{RC}\rightarrow\mathbb{N}_r
$
be injective mappings assigning identifiers to compute and resource controllers, respectively.

The identifier value $0$ is reserved and is not in the range of either mapping.

\end{definition}

\begin{definition}[Resources]
\label{def:resources}

Let
$
\mathbb{R}
$
be the set of disaggregated resources managed by SADRA. Resource ownership is partitioned across resource controllers:
\[
\mathbb{R}
=
\biguplus_{rc\in\mathbb{RC}}
\mathbb{R}_{rc},
\]
where
$
\mathbb{R}_{rc}
$
is the resource extent managed by resource controller $rc$.

A capability may designate any subset of the resource extent managed by the resource controller from which its authority derives.

\end{definition}

\begin{definition}[Permissions]
\label{def:permissions}

The permission set is
\[
\mathbb{V}
=
\{
\mathtt{read},
\mathtt{write},
\mathtt{delegate},
\mathtt{exclusive}
\}.
\]

The permissions that directly authorize resource-access operations are
\[
\mathbb{V}_{\mathsf{acc}}
=
\{
\mathtt{read},
\mathtt{write}
\}.
\]

The permissions $\mathtt{delegate}$ and $\mathtt{exclusive}$ have different semantics.

The permission
$
\mathtt{delegate}
$
allows the holder of a non-exclusive capability to derive authority for another principal. A delegator may omit this permission from a delegated capability to prevent the recipient from subsequently re-delegating that authority.

The permission
$
\mathtt{exclusive}
$
records that strong isolation was requested when the resource was initially allocated. Exclusive authority cannot be delegated, including by the process that requested the allocation. Hence, $\mathtt{exclusive}$ overrides $\mathtt{delegate}$ if both are present.

The $\mathtt{exclusive}$ permission is introduced only by initial resource allocation. SADRA preserves it through the corresponding resource, root-anchored compute, and process capabilities. It cannot be introduced by a delegation operation.

Creating the process capability corresponding to an initially allocated root-anchored compute capability is not delegation and is therefore permitted even when that authority is exclusive.

\end{definition}

\begin{definition}[Capabilities]
\label{def:capability}

A capability is represented abstractly as
\[
c
=
\langle
s,r,v,t,
\mathit{rnid},
\mathit{cnid},
\mathit{rid},
\mathit{cid}
\rangle,
\]
where:

\begin{itemize}
    \item
    $s$ is the capability subject;

    \item
    $r\subseteq\mathbb{R}$ is the resource extent;

    \item
    $v\subseteq\mathbb{V}$ is the permission set;

    \item
    $
    t\in
    \{
    \mathtt{usable},
    \mathtt{handle}
    \}
    $
    is the capability type;

    \item
    $\mathit{rnid}$ identifies the resource node;

    \item
    $\mathit{cnid}$ identifies the compute node;

    \item
    $\mathit{rid}$ is an authority identifier allocated locally by the resource controller identified by $\mathit{rnid}$;

    \item
    $\mathit{cid}$ is an authority identifier allocated locally by the compute controller identified by $\mathit{cnid}$.
\end{itemize}

We use the projections
\[
\operatorname{subj}(c),
\quad
\operatorname{res}(c),
\quad
\operatorname{perm}(c),
\quad
\operatorname{type}(c),
\]
\[
\operatorname{rnid}(c),
\quad
\operatorname{cnid}(c),
\quad
\operatorname{rid}(c),
\quad
\operatorname{cid}(c)
\]
for the corresponding fields.

Let
$
\mathbb{C}_p,\,
\mathbb{C}_c,
\,
\mathbb{C}_r
$
denote the process, compute, and resource capability domains, respectively, and let
$
\mathbb{C}
=
\mathbb{C}_p
\cup
\mathbb{C}_c
\cup
\mathbb{C}_r.
$

The identifier pairs
$
\langle\mathit{rnid},\mathit{rid}\rangle
$
and
$
\langle\mathit{cnid},\mathit{cid}\rangle
$
are controller-scoped authority references. Neither $\mathit{rid}$ nor $\mathit{cid}$ is a globally unique capability identifier.

The value $0$ denotes an absent or inapplicable identifier and is never allocated as an $\mathit{rid}$ or $\mathit{cid}$.

\end{definition}

\begin{definition}[Capability-Layer Field Conventions]
\label{def:capability-fields}

A usable resource capability has the form
\[
c_r
=
\langle
s_r,r,v,\mathtt{usable},
\mathit{rnid},
\mathit{cnid},
\mathit{rid},
0
\rangle,
\]
where
\[
s_r\in\mathbb{RC},
\]
\[
\mathit{rnid}
=
\operatorname{NID}_r(s_r),
\]
and
\[
\mathit{rid}\neq 0.
\]

Every allocated usable resource capability is issued for a particular compute node and therefore satisfies
\[
\operatorname{cnid}(c_r)\neq 0.
\]

The controller-owned resource-tree root $c_0$ is the exception and satisfies
\[
\operatorname{cnid}(c_0)=0.
\]

A compute capability has the form
\[
c_c
=
\langle
s_c,r,v,t,
\mathit{rnid},
\mathit{cnid},
\mathit{rid},
\mathit{cid}
\rangle,
\]
where
\[
s_c\in\mathbb{CC},
\]
\[
\mathit{cnid}
=
\operatorname{NID}_c(s_c),
\]
and
\[
\mathit{rnid},
\mathit{rid},
\mathit{cid}
\neq 0.
\]

A process capability has the form
\[
c_p
=
\langle
p,r,v,t,
0,
\mathit{cnid},
0,
\mathit{cid}
\rangle,
\]
where
\[
p\in\mathbb{P},
\qquad
\mathit{cnid}\neq 0,
\qquad
\mathit{cid}\neq 0.
\]

Thus, process capabilities expose only compute-side linkage. They do not carry the resource-controller-local $\mathit{rid}$ or the resource-node identifier $\mathit{rnid}$.

This separation is a layering property rather than a secrecy property: SADRA does not rely on hiding linkage identifiers for security.

\end{definition}

\begin{definition}[Authority Attenuation]
\label{def:attenuation}

For capabilities $c,c'\in\mathbb{C}$, we write
\[
c'\sqsubseteq c
\]
if and only if
\[
\operatorname{res}(c')
\subseteq
\operatorname{res}(c)
\]
and
\[
\operatorname{perm}(c')
\subseteq
\operatorname{perm}(c).
\]

Thus, derived authority cannot increase either resource extent or permissions.

The $\mathtt{exclusive}$ permission has an additional preservation rule for the initial allocation chain: if strong isolation is requested, $\mathtt{exclusive}$ must remain present in the corresponding resource, root-anchored compute, and process capabilities. Ordinary attenuation cannot clear it along that initial chain.

\end{definition}

\begin{definition}[Exclusive and Delegable Authority]
\label{def:delegable}

For any usable capability $c$, define
\[
\mathrm{Exclusive}(c)
\iff
\mathtt{exclusive}
\in
\operatorname{perm}(c).
\]

We define
\[
\begin{aligned}
\mathrm{Delegable}(c)
\iff
&\mathtt{delegate}
\in
\operatorname{perm}(c)\\
&\land\,
\mathtt{exclusive}
\notin
\operatorname{perm}(c).
\end{aligned}
\]

Thus, the presence of $\mathtt{delegate}$ is necessary but not sufficient for delegation. Exclusive authority is never delegable.

For a non-exclusive delegated capability, the delegator determines whether the recipient may re-delegate by including or omitting $\mathtt{delegate}$ from the delegated permission set.

\end{definition}

\begin{definition}[Trusted Request Context]
\label{def:request-context}

Let
\[
\mathcal{Q}
\]
be the set of requests processed by SADRA, and let
\[
\mathcal{Q}_{\mathsf{acc}}
\subseteq
\mathcal{Q}
\]
denote resource-access requests.

For
\[
req\in\mathcal{Q}_{\mathsf{acc}},
\]
define
\[
\operatorname{Op}(req)
\in
\mathbb{V}_{\mathsf{acc}}
\]
as the requested resource-access operation and
\[
\operatorname{Target}(req)
\subseteq
\mathbb{R}
\]
as the requested resource extent.

For a request presented by a process to a compute controller,
\[
\operatorname{Requester}(req)\in\mathbb{P}
\]
denotes the requesting process identity observed by the trusted compute-side request path.

For a request received by a resource controller,
\[
\operatorname{Origin}(req)\in\mathbb{N}_c
\]
denotes the originating compute-node identity established by the trusted resource-side request path.

$\operatorname{Requester}(req)$ and $\operatorname{Origin}(req)$ are trusted enforcement observations. They are not identities supplied by untrusted host software.

The requester observation applies to process-initiated operations authorized by a process capability or process revocation handle, including resource access, delegation, and handle-based revocation.

The origin observation applies to controller-to-controller operations whose authorization depends on the originating compute node, including inter-node delegation and remote revocation.

\end{definition}

\begin{definition}[Capability Authenticity]
\label{def:authenticity}

For a controller
\[
x\in
\mathbb{CC}
\cup
\mathbb{RC},
\]
we write
\[
\operatorname{Authentic}_x(c)
\]
when the capability representation presented to $x$ passes the controller's capability-authentication check.

Operational capability representations contain a MAC tag. The MAC is omitted from the logical capability tuple because it does not contribute authority semantics.

The authentication domain is controller-specific. A compute controller authenticates process capabilities and process revocation handles presented to it. A resource controller authenticates the resource-controller-created root-anchored compute capabilities and compute revocation handles presented to it.

For each controller, the MAC protects every capability field consumed by that controller's authorization decision. A field used only as local metadata by another controller need not belong to that controller's MAC domain.

Consequently, knowledge or fabrication of linkage identifiers alone does not create usable authority.

\end{definition}

\subsection{Capability Trees Model}
\label{app:formal_cap_trees}

\SADRA{} represents controller-resident capability relationships using resource and compute capability trees. Tree edges record immediate structural dependencies. For usable capabilities, these edges represent authority derivation. A handle edge in a compute tree records the revocation dependency created by inter-node delegation.

The relations $\preceq_r$ and $\preceq_c$ define the authority and linkage constraints that must hold along resource-tree and compute-tree derivations, respectively.

\begin{definition}[Tree Reachability]
\label{def:tree-reachability}

Let
\[
T=\langle V,E,\rho\rangle
\]
be a rooted directed tree. We write
\[
\operatorname{Reach}_T(c,c')
\]
if and only if there exists a directed path from $c$ to $c'$ in $T$.

Equivalently, $\operatorname{Reach}_T$ is the reflexive-transitive closure of the edge relation $E$.
\end{definition}

\paragraph{Resource Capability Tree.}

\begin{definition}[Resource Delegation Relation]
\label{def:preceq_r}

For
\[
c,c'\in\mathbb{C}_r,
\]
we write
\[
c'\preceq_r c
\]
if and only if:

\begin{enumerate}
    \item
    $
    \operatorname{res}(c')
    \subseteq
    \operatorname{res}(c);
    $

    \item
    $
    \operatorname{perm}(c')
    \subseteq
    \operatorname{perm}(c);
    $

    \item
    $
    \mathtt{delegate}
    \in
    \operatorname{perm}(c);
    $
    
    \item
    $
    \mathtt{exclusive}
    \notin
    \operatorname{perm}(c);
    $

    \item
    $
    \operatorname{type}(c)
    =
    \mathtt{usable};
    $

    \item
    $
    \operatorname{type}(c')
    =
    \mathtt{usable};
    $
    and

    \item
    $
    \operatorname{subj}(c')
    =
    \operatorname{subj}(c).
    $
\end{enumerate}

Thus, resource-tree delegation can only attenuate resource extent and permissions. The parent must carry delegation authority and must not represent an exclusive allocation. An exclusive resource capability therefore cannot have a resource capability derived from it.

Because resource capabilities held by the same resource controller carry that controller's resource-node identifier,
\[
\operatorname{subj}(c')
=
\operatorname{subj}(c)
\]
also implies
\[
\operatorname{rnid}(c')
=
\operatorname{rnid}(c).
\]

The compute-node identifiers need not be equal. In particular, inter-node delegation may derive a resource capability for a destination compute node different from the compute node for which its parent was issued.

The relation $\preceq_r$ constrains a permitted derivation but does not itself allocate a new resource-controller-local authority identifier or define a parent--child edge.
\end{definition}

\begin{definition}[Resource Capability Tree]
\label{def:res-tree}

For a resource controller
\[
rc\in\mathbb{RC},
\]
let
\[
c_0\in\mathbb{C}_r
\]
be its controller-owned root resource capability satisfying
\[
\operatorname{subj}(c_0)=rc,
\]
\[
\operatorname{type}(c_0)=\mathtt{usable},
\]
and
\[
\operatorname{cnid}(c_0)=0.
\]

The resource capability tree is
\[
T_r(c_0)
=
\langle
V_r,E_r,c_0
\rangle,
\]
where
\[
V_r\subseteq\mathbb{C}_r,
\qquad
c_0\in V_r,
\]
and
\[
E_r\subseteq V_r\times V_r.
\]

An edge
\[
(c,c')\in E_r
\]
represents an immediate resource-authority derivation from $c$ to $c'$.
\end{definition}

\begin{definition}[Well-Formed Resource Capability Tree]
\label{def:wf-res-tree}

A resource capability tree
\[
T_r(c_0)=\langle V_r,E_r,c_0\rangle
\]
maintained by resource controller $rc$ is well-formed if and only if:

\begin{enumerate}
    \item $T_r(c_0)$ is acyclic;

    \item $c_0$ has no incoming edge;

    \item every
          $
          c'\in V_r\setminus\{c_0\}
          $
          has exactly one parent in $V_r$;

    \item every
          $
          c\in V_r
          $
          is reachable from $c_0$;

    \item every
          $
          c\in V_r
          $
          satisfies
          $
          \operatorname{subj}(c)=rc;
          $

    \item every allocated resource capability
          $
          c\in V_r\setminus\{c_0\}
          $
          satisfies
          $
          \operatorname{cnid}(c)\neq 0;
          $

    \item resource-controller-local authority identifiers are unique
          among resource-tree entries:
          \[
          \forall c,c'\in V_r,\quad
          \operatorname{rid}(c)
          =
          \operatorname{rid}(c')
          \Rightarrow
          c=c';
          \]
          and

    \item for every
          $
          (c,c')\in E_r,\;
          $
          $
          c'\preceq_r c.
          $
\end{enumerate}

\end{definition}

\paragraph{Compute Capability Tree.}

\begin{definition}[Compute Delegation Relation]
\label{def:preceq_c}

For
\[
c,c'\in\mathbb{C}_c,
\]
we write
\[
c'\preceq_c c
\]
if and only if:

\begin{enumerate}
    \item
    $
    \operatorname{res}(c')
    \subseteq
    \operatorname{res}(c);
    $

    \item
    $
    \operatorname{perm}(c')
    \subseteq
    \operatorname{perm}(c);
    $

    \item
    $
    \mathtt{delegate}
    \in
    \operatorname{perm}(c);
    $
    
    \item
    $
    \mathtt{exclusive}
    \notin
    \operatorname{perm}(c);
    $

    \item
    $
    \operatorname{type}(c)
    =
    \mathtt{usable};
    $

    \item
    $
    \operatorname{type}(c')
    \in
    \{
        \mathtt{usable},
        \mathtt{handle}
    \};
    $

    \item
    $
    \operatorname{subj}(c')
    =
    \operatorname{subj}(c);
    $

    \item
    $
    \operatorname{rnid}(c')
    =
    \operatorname{rnid}(c);
    $
    and

    \item if
          $
          \operatorname{type}(c')
          =
          \mathtt{usable},
          $
          then
          $
          \operatorname{rid}(c')
          =
          \operatorname{rid}(c).
          $
\end{enumerate}

Thus, a usable compute capability derived within a compute tree retains
the resource-side linkage of its parent:
\[
\left\langle
\operatorname{rnid}(c'),
\operatorname{rid}(c')
\right\rangle
=
\left\langle
\operatorname{rnid}(c),
\operatorname{rid}(c)
\right\rangle.
\]

It nevertheless receives a distinct compute-controller-local authority identifier when created.

A compute handle created during inter-node delegation has different semantics. It remains on the same resource node,
\[
\operatorname{rnid}(c')
=
\operatorname{rnid}(c),
\]
but its $\mathit{rid}$ identifies the newly derived resource capability whose revocation the handle controls. Therefore, for a handle child,
\[
\operatorname{rid}(c')
\]
need not equal
\[
\operatorname{rid}(c).
\]

Because compute-tree nodes held by the same compute controller carry that controller's compute-node identifier,
\[
\operatorname{subj}(c')
=
\operatorname{subj}(c)
\]
also implies
\[
\operatorname{cnid}(c')
=
\operatorname{cnid}(c).
\]

The relation $\preceq_c$ constrains structural derivation within a compute tree. Allocation of a new compute-controller-local authority identifier is defined by the corresponding capability-creation transition.

The parent must carry delegation authority and must not represent an exclusive allocation. Thus, a root-anchored or derived compute capability carrying $\mathtt{exclusive}$ cannot acquire a child through intra-node or inter-node delegation.

This restriction does not prevent the compute controller from creating the process capability corresponding to an exclusive compute capability. Process-capability creation establishes the process-to-compute linkage rather than a child edge in $T_c(cc)$ and therefore is not governed by $\preceq_c$.
\end{definition}

\begin{definition}[Compute Capability Tree]
\label{def:compute-tree}

For a compute controller
\[
cc\in\mathbb{CC},
\]
the compute capability tree is
\[
T_c(cc)
=
\langle
V_c,E_c,\bot
\rangle,
\]
where
\[
V_c\subseteq\mathbb{C}_c
\]
and
\[
E_c
\subseteq
(V_c\cup\{\bot\})\times V_c.
\]

The symbol
\[
\bot\notin\mathbb{C}
\]
denotes a distinguished virtual root. It carries no authority and does not participate in $\preceq_c$.

An edge
\[
(\bot,c)\in E_c
\]
is structural only and identifies a compute capability directly attached to the virtual root.

An edge
\[
(c,c')\in E_c,
\qquad
c\neq\bot,
\]
represents an immediate structural derivation relationship.

For a usable child, the edge represents authority delegation. For a handle child, it records the revocation dependency created when authority is delegated to another compute node.
\end{definition}

\begin{definition}[Root-Anchored Compute Capability]
\label{def:root-anchored}

Let
\[
T_c(cc)
=
\langle
V_c,E_c,\bot
\rangle
\]
be the compute capability tree maintained by compute controller
\[
cc\in\mathbb{CC}.
\]

A compute capability
\[
c_c\in\mathbb{C}_c
\]
is root-anchored at $cc$ if
\[
\operatorname{RootAnchored}_{cc}(c_c)
\iff
c_c\in V_c
\land
(\bot,c_c)\in E_c.
\]

Root anchoring is a structural property. Usability of root-anchored capabilities follows from compute-tree well-formedness.
\end{definition}

\begin{definition}[Well-Formed Compute Capability Tree]
\label{def:wf-comp-tree}

A compute capability tree
\[
T_c(cc)
=
\langle
V_c,E_c,\bot
\rangle
\]
is well-formed if and only if:

\begin{enumerate}
    \item $T_c(cc)$ is acyclic;

    \item $\bot$ has no incoming edge;

    \item every
          \[
          c'\in V_c
          \]
          has exactly one parent in
          \[
          V_c\cup\{\bot\};
          \]

    \item every
          \[
          c\in V_c
          \]
          is reachable from $\bot$;

    \item every
          \[
          c\in V_c
          \]
          satisfies
          \[
          \operatorname{subj}(c)=cc;
          \]

    \item compute-controller-local authority identifiers are unique among compute-tree entries:
          \[
          \forall c,c'\in V_c,\quad
          \operatorname{cid}(c)
          =
          \operatorname{cid}(c')
          \Rightarrow
          c=c';
          \]

    \item for every
          \[
          (\bot,c)\in E_c,
          \]
          \[
          \operatorname{type}(c)
          =
          \mathtt{usable};
          \]
          and

    \item for every
          \[
          (c,c')\in E_c
          \quad\text{with}\quad
          c\neq\bot,
          \]
          \[
          c'\preceq_c c.
          \]
\end{enumerate}

\end{definition}

\begin{lemma}[Compute Tree Root Reachability]
\label{lem:compute-root-reachability}

For every
\[
c_c\in V_c
\]
in a well-formed compute capability tree
\[
T_c(cc),
\]
there exists a unique root-anchored compute capability
\[
c_c^{\mathsf{ra}}\in V_c
\]
such that
\[
\operatorname{RootAnchored}_{cc}
(c_c^{\mathsf{ra}})
\]
and
\[
\operatorname{Reach}_{T_c(cc)}
(c_c^{\mathsf{ra}},c_c).
\]
\end{lemma}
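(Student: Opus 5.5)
The plan is to use the unique-parent and reachability clauses of Definition~\ref{def:wf-comp-tree} to follow the parent path from $\bot$ to $c_c$. The root-anchored capability we want is the first capability on that path below $\bot$.

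\textbf{Existence.} Fix $c_c\in V_c$. By clause~4 there is a directed path $\bot=u_0\to u_1\to\cdots\to u_n=c_c$ in $E_c$. Since $c_c\neq\bot$ (because $\bot\notin\mathbb{C}$), we have $n\ge 1$. Set $c_c^{\mathsf{ra}}:=u_1$. Then $(\bot,u_1)\in E_c$, and $u_1\in V_c$ because $E_c\subseteq(V_c\cup\{\bot\})\times V_c$; hence $\operatorname{RootAnchored}_{cc}(u_1)$ holds. The suffix $u_1\to\cdots\to u_n$ is a directed path, possibly of length zero, so $\operatorname{Reach}_{T_c(cc)}(u_1,c_c)$ holds by Definition~\ref{def:tree-reachability}.

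\textbf{Uniqueness.} This is the only real step. Let $a,b\in V_c$ both be root-anchored with $\operatorname{Reach}(a,c_c)$ and $\operatorname{Reach}(b,c_c)$. Prepend the edges $(\bot,a)$ and $(\bot,b)$ to obtain two directed paths from $\bot$ to $c_c$. We show that in a well-formed tree the directed path from $\bot$ to any node is unique. Argue by induction backward from the endpoint: the last edges of both paths enter $c_c$, and by clause~3 $c_c$ has exactly one parent, so the predecessors coincide. Repeat this step. By acyclicity (clause~1), neither path repeats vertices, so both are simple and of finite length. The backward walk must therefore arrive at $\bot$ at the same step on both paths. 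It cannot stop earlier on one path, because a node in $V_c$ has exactly one parent, while $\bot$ has no incoming edge by clause~2. Hence the two paths are identical, and their second vertices agree, giving $a=b$.

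The main obstacle is keeping the backward induction honest. We must make sure the predecessor-matching never runs past $\bot$ on one path while the other path continues. Clause~2 rules this out, since no vertex precedes $\bot$. Clause~1 makes the paths finite and simple. No property of $\preceq_c$ is needed; the argument is purely structural.
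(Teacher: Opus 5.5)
Your proposal is correct and follows the paper's proof. Existence comes from taking the first capability after $\bot$ on the path given by the reachability clause, and uniqueness holds because two distinct root-anchored ancestors would give two distinct $\bot$-to-$c_c$ paths, contradicting the unique-parent property; your backward induction, using clause~2 to rule out one path reaching $\bot$ early, just makes explicit the path-uniqueness step the paper asserts in one sentence.
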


\begin{proof}

By Definition~\ref{def:wf-comp-tree}, every
\[
c_c\in V_c
\]
is reachable from $\bot$. Moreover, every capability in $V_c$ has exactly one parent in
\[
V_c\cup\{\bot\},
\]
and the tree is acyclic. Hence there is a unique path from $\bot$ to $c_c$.

Let
\[
c_c^{\mathsf{ra}}
\]
be the first capability on this path after $\bot$. Then
\[
(\bot,c_c^{\mathsf{ra}})\in E_c,
\]
and therefore
\[
\operatorname{RootAnchored}_{cc}
(c_c^{\mathsf{ra}})
\]
by Definition~\ref{def:root-anchored}.

Since $c_c^{\mathsf{ra}}$ lies on the path from $\bot$ to $c_c$,
\[
\operatorname{Reach}_{T_c(cc)}
(c_c^{\mathsf{ra}},c_c)
\]
holds.

For uniqueness, suppose another root-anchored compute capability
\[
\hat{c}_c^{\mathsf{ra}}
\neq
c_c^{\mathsf{ra}}
\]
also satisfies
\[
\operatorname{Reach}_{T_c(cc)}
(\hat{c}_c^{\mathsf{ra}},c_c).
\]

Then there would be two distinct paths from $\bot$ to $c_c$, one beginning with $c_c^{\mathsf{ra}}$ and one beginning with $\hat{c}_c^{\mathsf{ra}}$, contradicting the unique-parent property of the well-formed compute capability tree.

Therefore $c_c^{\mathsf{ra}}$ is unique.
\end{proof}

\begin{lemma}[Resource-Link Inheritance]
\label{lem:resource-link-inheritance}

Let
\[
c_c,c_c^{\mathsf{ra}}\in V_c
\]
be usable compute capabilities in a well-formed compute capability tree $T_c(cc)$ such that
\[
\operatorname{RootAnchored}_{cc}
(c_c^{\mathsf{ra}})
\]
and
\[
\operatorname{Reach}_{T_c(cc)}
(c_c^{\mathsf{ra}},c_c).
\]

Then
\[
\operatorname{rnid}(c_c)
=
\operatorname{rnid}(c_c^{\mathsf{ra}})
\]
and
\[
\operatorname{rid}(c_c)
=
\operatorname{rid}(c_c^{\mathsf{ra}}).
\]
\end{lemma}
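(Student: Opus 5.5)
The plan is to argue by induction along the unique directed path from $c_c^{\mathsf{ra}}$ to $c_c$ in $T_c(cc)$. The tree is well-formed, and $\operatorname{Reach}_{T_c(cc)}(c_c^{\mathsf{ra}},c_c)$ holds. So by Definition~\ref{def:tree-reachability} there is a path
\[
c_c^{\mathsf{ra}}=d_0,\;d_1,\;\dots,\;d_n=c_c
\]
with $(d_{i-1},d_i)\in E_c$ for $1\le i\le n$. It is unique by the unique-parent and acyclicity conditions of Definition~\ref{def:wf-comp-tree}, as in the proof of Lemma~\ref{lem:compute-root-reachability}. If $n=0$, the two capabilities coincide and there is nothing to prove.

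First I will observe that no edge on this path starts at $\bot$ or ends at a handle. Since $d_0\in V_c$ and every $d_i\in V_c$, each edge $(d_{i-1},d_i)$ has a parent different from $\bot$. Condition~8 of Definition~\ref{def:wf-comp-tree} therefore gives $d_i\preceq_c d_{i-1}$.

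The main obstacle is that $\preceq_c$ forces equality of $\operatorname{rid}$ only when the child is usable (condition~9 of Definition~\ref{def:preceq_c}); a handle child may carry a different $\mathit{rid}$. I will handle this by showing that every $d_i$ on the path is usable:
\begin{itemize}
\item For $i<n$, the node $d_i$ is the parent of the edge $(d_i,d_{i+1})$. Condition~5 of Definition~\ref{def:preceq_c} requires $\operatorname{type}(d_i)=\mathtt{usable}$.
\item For $i=n$, $d_n=c_c$ is usable by hypothesis.
\end{itemize}
Thus handles can appear only as leaves, so they never lie in the interior of such a path.

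With every child on the path usable, each edge yields $\operatorname{rnid}(d_i)=\operatorname{rnid}(d_{i-1})$ by condition~8 of Definition~\ref{def:preceq_c}, and $\operatorname{rid}(d_i)=\operatorname{rid}(d_{i-1})$ by condition~9. A straightforward induction on $i$, chaining these equalities, then gives $\operatorname{rnid}(d_n)=\operatorname{rnid}(d_0)$ and $\operatorname{rid}(d_n)=\operatorname{rid}(d_0)$, which is exactly the claim of Lemma~\ref{lem:resource-link-inheritance}.
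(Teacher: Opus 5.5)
Your proof is correct and takes essentially the same route as the paper: it follows the path from $c_c^{\mathsf{ra}}$ to $c_c$, notes that every node on it is usable because parents must be usable under Definition~\ref{def:preceq_c} and $c_c$ is usable by hypothesis, and chains the $\mathit{rnid}$ and $\mathit{rid}$ equalities from conditions~8 and~9. Your explicit check that no path edge starts at $\bot$, so that well-formedness condition~8 applies, is a detail the paper leaves implicit.
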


\begin{proof}

If
\[
c_c=c_c^{\mathsf{ra}},
\]
the result is immediate.

Otherwise, let
\[
c_c^{\mathsf{ra}}
=
d_0,d_1,\ldots,d_k
=
c_c
\]
be the unique path from $c_c^{\mathsf{ra}}$ to $c_c$.

Because $c_c$ is usable, every node on this derivation path is usable: a handle cannot serve as the parent of another compute-tree node under Definition~\ref{def:preceq_c}.

For each edge
\[
(d_i,d_{i+1}),
\]
Definition~\ref{def:preceq_c} therefore gives
\[
\operatorname{rnid}(d_{i+1})
=
\operatorname{rnid}(d_i)
\]
and
\[
\operatorname{rid}(d_{i+1})
=
\operatorname{rid}(d_i).
\]

Applying these equalities along the path yields
\[
\operatorname{rnid}(c_c)
=
\operatorname{rnid}(c_c^{\mathsf{ra}})
\]
and
\[
\operatorname{rid}(c_c)
=
\operatorname{rid}(c_c^{\mathsf{ra}}).
\]
\end{proof}

\begin{definition}[Fence-Anchor Domains]
\label{def:fence-anchor-domains}

Fences are installed on usable authority capabilities rather than on handle capabilities.

The resource fence-anchor domain is
\[
\mathbb{A}_r
=
\left\{
c\in\mathbb{C}_r
\;\middle|\;
\operatorname{type}(c)=\mathtt{usable}
\right\},
\]
and the compute fence-anchor domain is
\[
\mathbb{A}_c
=
\left\{
c\in\mathbb{C}_c
\;\middle|\;
\operatorname{type}(c)=\mathtt{usable}
\right\}.
\]

Handle capabilities may exist in compute capability trees but cannot serve as fence anchors. The virtual root $\bot$ is not a fence anchor.
\end{definition}

\subsection{System State and Capability Linking}
\label{app:formal_system_state}

The capability trees define the structural relationships among controller-resident capabilities. The linkage fields defined in \Cref{def:capability} identify corresponding authority across capability layers. The pair $\langle\mathit{cnid},\mathit{cid}\rangle$ links a process-visible capability to compute-side authority, while $\langle\mathit{rnid},\mathit{rid}\rangle$ links compute-side authority
to resource-side authority.

These linkage relations are predicates over capability objects rather than mutable components of the system state. They describe which authority a capability refers to, independently of whether that authority is currently live or active. Capability activity, authenticity, and request-specific authorization are defined separately in
\Cref{app:cap-validity}.

\begin{definition}[System State]
\label{def:system-state}

A system state is a tuple
\[
\Sigma =
(
\Sigma_P,
\Sigma_{CC},
\Sigma_{RC},
E_c,
E_r,
\operatorname{ctr}_{CC},
\operatorname{ctr}_{RC},
F_{CC},
F_{RC}
),
\]
where:

\begin{itemize}
    \item
    $\Sigma_P : \mathbb{P}\rightarrow 2^{\mathbb{C}_p}$ maps each process to the process capabilities currently associated with that process;

    \item
    $\Sigma_{CC} : \mathbb{CC}\rightarrow 2^{\mathbb{C}_c}$ maps each compute controller to the compute capabilities currently stored locally;

    \item
    $\Sigma_{RC} : \mathbb{RC}\rightarrow 2^{\mathbb{C}_r}$ maps each resource controller to the resource capabilities currently stored locally;

    \item
    $E_c$ and $E_r$ denote the current edge relations of the compute and resource capability trees, respectively;

    \item
    $\operatorname{ctr}_{CC} :
    \mathbb{CC}\rightarrow\mathbb{N}_{>0}$ maps each compute controller to its next compute-controller-local authority identifier;

    \item
    $\operatorname{ctr}_{RC} :
    \mathbb{RC}\rightarrow\mathbb{N}_{>0}$ maps each resource controller to its next resource-controller-local authority identifier;

    \item
    $F_{CC}$ and $F_{RC}$ denote the active fence sets maintained by compute and resource controllers.
\end{itemize}

For every process $p$, compute controller $cc$, and resource controller $rc$, stored capability state is consistent with capability subjects:
\[
c_p\in\Sigma_P(p)
\Rightarrow
\operatorname{subj}(c_p)=p,
\]
\[
c_c\in\Sigma_{CC}(cc)
\Rightarrow
\operatorname{subj}(c_c)=cc,
\]
and
\[
c_r\in\Sigma_{RC}(rc)
\Rightarrow
\operatorname{subj}(c_r)=rc.
\]

The vertex sets of the controller capability trees are derived from the corresponding controller-resident state:
\[
V_c(cc)
\equiv
\Sigma_{CC}(cc)
\]
and
\[
V_r(rc)
\equiv
\Sigma_{RC}(rc).
\]
Thus, $V_c$ and $V_r$ are not additional mutable components of
$\Sigma$.

A compute or resource capability is \emph{live} while it remains in the corresponding controller-resident capability tree. Removal from that tree therefore makes the capability no longer live. Liveness alone does not imply that the capability carries currently usable authority: a live capability may be covered by an active fence.

A process may retain a stale capability token in untrusted memory after its controller-side authority has been revoked or reclaimed. Its linkage fields remain well defined, but the token cannot authorize a request unless the corresponding controller-side authority is active and the request satisfies the authenticity and request-binding checks defined in \Cref{app:cap-validity}.

The local authority-identifier counters are part of the trusted persistent controller state. They are monotonically increasing and never wrap. The value $0$ is reserved and is never allocated. Before authority carrying a newly allocated local identifier becomes usable, the issuing controller durably advances the corresponding counter. Consequently, a previously allocated $\mathit{rid}$ or $\mathit{cid}$ is never reassigned to different authority at the same controller.

\end{definition}

\begin{definition}[Process-to-Compute Linking Relation]
\label{def:process-compute-link}

For
\[
c_p\in\mathbb{C}_p
\qquad\text{and}\qquad
c_c\in\mathbb{C}_c,
\]
we write
\[
c_p\mapsto c_c
\]
if and only if
\[
\begin{aligned}
&\operatorname{type}(c_p)=\mathtt{usable}
\\
&\land\;
\operatorname{type}(c_c)=\mathtt{usable}
\\
&\land\;
\operatorname{cnid}(c_p)
=
\operatorname{cnid}(c_c)
\\
&\land\;
\operatorname{cid}(c_p)
=
\operatorname{cid}(c_c)
\\
&\land\;
\operatorname{res}(c_p)
\subseteq
\operatorname{res}(c_c)
\\
&\land\;
\operatorname{perm}(c_p)
\subseteq
\operatorname{perm}(c_c).
\end{aligned}
\]

Thus, the pair
\[
\left\langle
\operatorname{cnid}(c_p),
\operatorname{cid}(c_p)
\right\rangle
\]
identifies the compute-side authority referenced by the process capability. The resource and permission conditions enforce attenuation of the authority exposed to the process.

The relation is independent of membership in $\Sigma_P$ or $\Sigma_{CC}$. Hence, a presented, replayed, stale, or fabricated token always has a well-defined linkage predicate. Whether that token can authorize a request is determined separately by controller-side activity, authenticity, requester binding, and the requested operation.

Among live entries of a well-formed compute capability tree, the target of a process-to-compute link is unique because compute-controller-local authority identifiers are unique within that tree.

Process-held handle capabilities do not participate in this usable authorization relation.

\end{definition}

\begin{definition}[Compute-to-Resource Linking Relation]
\label{def:compute-resource-link}

For
\[
c_c\in\mathbb{C}_c
\qquad\text{and}\qquad
c_r\in\mathbb{C}_r,
\]
we write
\[
c_c\mapsto c_r
\]
if and only if
\[
\begin{aligned}
&\operatorname{type}(c_c)=\mathtt{usable}
\\
&\land\;
\operatorname{type}(c_r)=\mathtt{usable}
\\
&\land\;
\operatorname{rnid}(c_c)
=
\operatorname{rnid}(c_r)
\\
&\land\;
\operatorname{rid}(c_c)
=
\operatorname{rid}(c_r)
\\
&\land\;
\operatorname{cnid}(c_c)
=
\operatorname{cnid}(c_r)
\\
&\land\;
\operatorname{res}(c_c)
\subseteq
\operatorname{res}(c_r)
\\
&\land\;
\operatorname{perm}(c_c)
\subseteq
\operatorname{perm}(c_r).
\end{aligned}
\]

The pair
\[
\left\langle
\operatorname{rnid}(c_c),
\operatorname{rid}(c_c)
\right\rangle
\]
identifies the resource-side authority from which $c_c$ derives. Equality of the $\mathit{cnid}$ fields records that this resource-side authority was issued for the compute node represented by $c_c$.

The relation expresses an authority association rather than activity or liveness. It therefore does not require either capability to belong to the current controller-resident state.

For a non-root usable compute capability, the resource-side linkage is inherited from its root-anchored ancestor. By \Cref{lem:resource-link-inheritance}, if $c_c^{\mathsf{ra}}$ is the root-anchored ancestor of usable capability $c_c$, then
\[
\operatorname{rnid}(c_c)
=
\operatorname{rnid}(c_c^{\mathsf{ra}})
\]
and
\[
\operatorname{rid}(c_c)
=
\operatorname{rid}(c_c^{\mathsf{ra}}).
\]

Thus, a non-root usable compute capability remains associated with the same resource-side authority as its root-anchored ancestor, although the non-root capability may carry attenuated resource extent and permissions.

Compute handle capabilities do not participate in this usable authorization relation.

\end{definition}

\begin{definition}[Root-Anchored Capability Selection]
\label{def:root-anchor-selection}

For every usable compute capability
\[
c_c\in V_c(cc)
\]
in a well-formed compute capability tree $T_c(cc)$, the root-anchored capability associated with $c_c$ is the unique capability
\[
c_c^{\mathsf{ra}}\in V_c(cc)
\]
satisfying
\[
\operatorname{RootAnchored}_{cc}
(c_c^{\mathsf{ra}})
\]
and
\[
\operatorname{Reach}_{T_c(cc)}
(c_c^{\mathsf{ra}},c_c).
\]

Its existence and uniqueness follow from \Cref{lem:compute-root-reachability}.

By \Cref{lem:resource-link-inheritance},
\[
\operatorname{rnid}(c_c)
=
\operatorname{rnid}(c_c^{\mathsf{ra}})
\]
and
\[
\operatorname{rid}(c_c)
=
\operatorname{rid}(c_c^{\mathsf{ra}}).
\]

The tree relation establishes the structural ancestry of the root-anchored capability, while the inherited resource-side linkage identifies the resource authority associated with that root.

\end{definition}

\begin{definition}[Capability Chain]
\label{def:capability-chain}

The capability chain represents the composition of usable authority across the three SADRA capability layers.

For a process capability $c_p$, compute capability $c_c$, and resource capability $c_r$, the authority chain has the form
\[
c_p
\mapsto
c_c
\mapsto
c_r.
\]

The first relation identifies the compute-side authority referenced by $c_p$. The second relation identifies the resource-side authority from which $c_c$ derives.

For a non-root compute capability, the second relation expresses its resource-side authority association through the inherited $\langle\mathit{rnid},\mathit{rid}\rangle$ linkage. The actual root-anchored capability presented to the resource controller is
determined during resource-access authorization and is defined separately in \Cref{app:cap-validity}.

Thus, the capability chain describes authority composition rather than the sequence of capability representations presented at the two enforcement boundaries.

Handle capabilities do not form part of the usable authorization chain.

\end{definition}

\begin{definition}[Process-Handle-to-Compute Linking Relation]
\label{def:process-handle-link}

For
\[
h_p\in\mathbb{C}_p
\qquad\text{and}\qquad
c_a\in\mathbb{C}_c,
\]
we write
\[
h_p\xmapsto{H}c_a
\]
if and only if
\[
\begin{aligned}
&\operatorname{type}(h_p)=\mathtt{handle}
\\
&\land\;
\operatorname{cnid}(h_p)
=
\operatorname{cnid}(c_a)
\\
&\land\;
\operatorname{cid}(h_p)
=
\operatorname{cid}(c_a).
\end{aligned}
\]

The target $c_a$ may be either a usable compute capability or a compute handle.

For intra-node delegation,
\[
\operatorname{type}(c_a)=\mathtt{usable},
\]
and the relation identifies the delegated compute subtree controlled by the process handle.

For inter-node delegation,
\[
\operatorname{type}(c_a)=\mathtt{handle},
\]
and the relation identifies the source-side compute revocation handle.

The relation is derived solely from capability fields and does not require $h_p$ or $c_a$ to be live. Among live compute-tree entries at the controller identified by $\operatorname{cnid}(h_p)$, its target is unique because compute-controller-local authority identifiers are unique within that tree.

\end{definition}

\begin{definition}[Compute-Handle-to-Resource Linking Relation]
\label{def:compute-handle-link}

For
\[
h_c\in\mathbb{C}_c
\qquad\text{and}\qquad
c_r\in\mathbb{C}_r,
\]
we write
\[
h_c\xmapsto{R}c_r
\]
if and only if
\[
\begin{aligned}
&\operatorname{type}(h_c)=\mathtt{handle}
\\
&\land\;
\operatorname{type}(c_r)=\mathtt{usable}
\\
&\land\;
\operatorname{rnid}(h_c)
=
\operatorname{rnid}(c_r)
\\
&\land\;
\operatorname{rid}(h_c)
=
\operatorname{rid}(c_r).
\end{aligned}
\]

The pair
\[
\left\langle
\operatorname{rnid}(h_c),
\operatorname{rid}(h_c)
\right\rangle
\]
identifies the resource capability whose revocation the compute handle controls.

The $\mathit{cnid}$ fields are intentionally not required to match. The compute handle belongs to the delegating compute node, while the resource capability created by inter-node delegation is issued for the destination compute node.

The relation is independent of current resource-controller state. Among live resource capabilities at the resource controller identified by $\operatorname{rnid}(h_c)$, the target is unique because resource-controller-local authority identifiers are unique within that resource tree.

\end{definition}

\begin{definition}[Revocation-Control Path]
\label{def:revocation-control}

Revocation-control relationships are separate from usable authorization relationships.

For intra-node delegation, a process-held handle identifies the usable compute capability created for the delegation:
\[
h_p
\xmapsto{H}
c_c^{\mathsf{del}}.
\]

For inter-node delegation, the delegating process's handle identifies the compute revocation handle retained by the source compute controller, which in turn identifies the resource capability created for the delegation:
\[
h_p^{\mathsf{src}}
\xmapsto{H}
h_c^{\mathsf{src}}
\xmapsto{R}
c_r^{\mathsf{del}}.
\]

The corresponding usable authority at the destination follows
\[
c_p^{\mathsf{dst}}
\mapsto
c_c^{\mathsf{dst}}
\mapsto
c_r^{\mathsf{del}}.
\]

Thus, authorization and revocation control refer to the same delegated authority but follow distinct capability paths. Process and compute handles convey revocation control only and do not participate in the usable authorization chain.

\end{definition}

\begin{definition}[Fence State]
\label{def:fence-state}

The fence state is represented by
\[
F_{CC} :
\mathbb{CC}\rightarrow2^{\mathbb{A}_c}
\]
and
\[
F_{RC} :
\mathbb{RC}\rightarrow2^{\mathbb{A}_r},
\]
where $\mathbb{A}_c$ and $\mathbb{A}_r$ are the compute and resource fence-anchor domains defined in \Cref{def:fence-anchor-domains}.

A fence identifies a usable capability at which authority becomes inactive at that controller. Descendants need not appear individually in a fence set because the activity check determines whether the capability path contains an active fence ancestor.

Handle capabilities cannot serve as fence anchors, and the virtual root $\bot$ is not a fence anchor.

\end{definition}

\subsection{Capability Activity and Authorization}
\label{app:cap-validity}

This subsection distinguishes local capability activity from end-to-end resource-access authorization.

A controller-resident capability is \emph{live} if it remains in the controller's capability tree. It is \emph{active} if it is live, usable, and unaffected by the controller's current fence state. Activity is therefore controller-local: compute-side activity depends only on compute-controller state, while resource-side activity depends only on resource-controller state.

Activity does not by itself imply that a resource-access request is authorized. A request must first satisfy the compute-side acceptance predicate, after which the compute controller resolves and forwards the corresponding root-anchored compute capability. The resource controller then performs an independent resource-side acceptance check. End-to-end authorization requires both stages to accept the request.

This distinction permits the two enforcement points to observe different revocation state. In particular, a process capability may continue to resolve to active compute-side authority after the corresponding resource-side authority has been fenced. Such a request may pass the compute-side check but is rejected at the resource boundary.

\paragraph{Local Capability Activity.}

\begin{definition}[Fence Path Clearance]
\label{def:fence-path-validity}

Let $c_r$ be a resource capability in resource capability tree
\[
T_r(c_0)=\langle V_r,E_r,c_0\rangle
\]
maintained by resource controller $rc$. We define
\[
\mathrm{NoFence}_r(c_r,\Sigma)
\iff
\forall f\in F_{RC}(rc),\;
\neg\operatorname{Reach}_{T_r(c_0)}(f,c_r).
\]

Let $c_c$ be a compute capability in compute capability tree
\[
T_c(cc)=\langle V_c,E_c,\bot\rangle
\]
maintained by compute controller $cc$. We define
\[
\mathrm{NoFence}_c(c_c,\Sigma)
\iff
\forall f\in F_{CC}(cc),\;
\neg\operatorname{Reach}_{T_c(cc)}(f,c_c).
\]

A live capability is clear of the local fence state if no active fence anchor is an ancestor of that capability in the corresponding capability tree.

\end{definition}

\begin{definition}[Permission Predicate]
\label{def:permission-predicate}

For a capability $c\in\mathbb{C}$ and permission
$op\in\mathbb{V}$, define
\[
\mathrm{Permitted}(c,op)
\iff
op\in\operatorname{perm}(c).
\]

\end{definition}

\begin{definition}[Resource Capability Activity]
\label{def:resource-validity}

Let
\[
c_r\in\mathbb{C}_r
\]
and let
\[
rc=\operatorname{subj}(c_r).
\]

The resource capability is active in system state $\Sigma$ if
\[
\begin{aligned}
\mathrm{Active}_r(c_r,\Sigma)
\iff\;&
c_r\in V_r(rc)
\\
&\land\;
\operatorname{type}(c_r)=\mathtt{usable}
\\
&\land\;
\mathrm{NoFence}_r(c_r,\Sigma).
\end{aligned}
\]

Thus, $\mathrm{Active}_r$ depends only on resource-controller state.

\end{definition}

\begin{definition}[Compute Capability Activity]
\label{def:compute-validity}

Let
\[
c_c\in\mathbb{C}_c
\]
and let
\[
cc=\operatorname{subj}(c_c).
\]

The compute capability is active in system state $\Sigma$ if
\[
\begin{aligned}
\mathrm{Active}_c(c_c,\Sigma)
\iff\;&
c_c\in V_c(cc)
\\
&\land\;
\operatorname{type}(c_c)=\mathtt{usable}
\\
&\land\;
\mathrm{NoFence}_c(c_c,\Sigma).
\end{aligned}
\]

The predicate deliberately does not depend on resource-controller state. Consequently, a compute capability may remain active after its corresponding resource-side authority has become inactive at the resource controller.

\end{definition}

\begin{definition}[Active Process-to-Compute Resolution]
\label{def:live-process-compute}

For
\[
c_p\in\mathbb{C}_p
\qquad\text{and}\qquad
c_c\in\mathbb{C}_c,
\]
we write
\[
c_p\xmapsto{\Sigma}c_c
\]
if and only if
\[
c_p\mapsto c_c
\land
\mathrm{Active}_c(c_c,\Sigma).
\]

The relation therefore means that the process capability refers to currently active compute-side authority.

Because compute-controller-local authority identifiers are unique among live entries of a well-formed compute capability tree, a process capability resolves to at most one active compute capability in any system state.

\end{definition}

\begin{definition}[Process Capability Activity]
\label{def:process-validity}

A process capability
\[
c_p\in\mathbb{C}_p
\]
is compute-locally active in system state $\Sigma$ if
\[
\begin{aligned}
\mathrm{Active}_p(c_p,\Sigma)
\iff\;&
\operatorname{type}(c_p)=\mathtt{usable}
\\
&\land\;
\exists c_c\in\mathbb{C}_c:
c_p\xmapsto{\Sigma}c_c.
\end{aligned}
\]

Process-capability activity is a local property. It states that the process capability resolves to active compute-side authority; it does not state that a resource-access request using the capability will pass resource-side authorization.

Thus,
\[
\mathrm{Active}_p(c_p,\Sigma)
\]
may hold even when the resource authority associated with its compute capability is no longer active at the resource controller.

\end{definition}

\begin{lemma}[Fence-Induced Inactivity]
\label{lem:monotonic-revocation}

Let $T_x$ be a capability tree maintained by controller $x_0$, where $x\in\{r,c\}$. Let $f$ be an active fence anchor at $x_0$. If
\[
\operatorname{Reach}_{T_x}(f,c),
\]
then
\[
\neg\mathrm{Active}_x(c,\Sigma).
\]

\end{lemma}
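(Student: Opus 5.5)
The argument is a direct unfolding of the activity definitions, split on $x\in\{r,c\}$. Since $\mathrm{Active}_x(c,\Sigma)$ is a conjunction whose last conjunct is $\mathrm{NoFence}_x(c,\Sigma)$ (\Cref{def:resource-validity} for $x=r$, \Cref{def:compute-validity} for $x=c$), it suffices to show that $\mathrm{NoFence}_x(c,\Sigma)$ fails. Liveness and usability of $c$ are then irrelevant.

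\textbf{Resource case.} Let $x=r$, with $T_r(c_0)$ the resource tree of $x_0=rc$. The hypothesis that $f$ is an active fence anchor at $rc$ means $f\in F_{RC}(rc)$ by \Cref{def:fence-state}. By \Cref{def:fence-path-validity}, $\mathrm{NoFence}_r(c,\Sigma)$ requires $\neg\operatorname{Reach}_{T_r(c_0)}(f',c)$ for every $f'\in F_{RC}(rc)$. Instantiating $f'=f$ contradicts the assumed $\operatorname{Reach}_{T_r(c_0)}(f,c)$. Hence $\mathrm{NoFence}_r(c,\Sigma)$ is false, and so is $\mathrm{Active}_r(c,\Sigma)$.

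\textbf{Compute case.} Let $x=c$, with $T_c(cc)$ and $f\in F_{CC}(cc)$. The argument is identical using the compute clause of \Cref{def:fence-path-validity}. Because $\operatorname{Reach}$ is reflexive (\Cref{def:tree-reachability}), the case $c=f$ is covered automatically: a fenced anchor is itself inactive. Strict descendants are covered through the transitive closure.

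\textbf{Main obstacle: indexing.} The only delicate point is making sure the tree in the hypothesis is the same tree consulted by the activity predicate. $\mathrm{Active}_x$ is evaluated at $\operatorname{subj}(c)$, while the fence lives at $x_0$, so we must argue that these coincide. If $c$ is not live, $\mathrm{Active}_x$ fails already by the membership conjunct. If $c$ is live in $T_x$, well-formedness condition 5 (\Cref{def:wf-res-tree} and \Cref{def:wf-comp-tree}), together with the subject-consistency clause of \Cref{def:system-state}, gives $\operatorname{subj}(c)=x_0$. Thus the fence set $F_{RC}(x_0)$ or $F_{CC}(x_0)$ quantified in $\mathrm{NoFence}_x$ is exactly the one containing $f$. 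I would state this identification explicitly before the two cases. With that in place, the lemma is a one-line consequence of the definitions.
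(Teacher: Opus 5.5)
Your proof is correct and follows the paper's argument: $f$ lies in the controller's fence set and $\operatorname{Reach}_{T_x}(f,c)$ holds, so $\mathrm{NoFence}_x(c,\Sigma)$ fails, and it is a conjunct of $\mathrm{Active}_x(c,\Sigma)$. Your explicit identification $\operatorname{subj}(c)=x_0$ is a detail the paper leaves implicit. It already follows from $c\in V_x(x_0)$, which reachability gives, together with the subject-consistency clause, so your ``not live'' case is vacuous.
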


\begin{proof}

Since $f$ is an active fence anchor and
\[
\operatorname{Reach}_{T_x}(f,c),
\]
\Cref{def:fence-path-validity} implies
\[
\neg\mathrm{NoFence}_x(c,\Sigma).
\]

The corresponding activity predicate requires
$\mathrm{NoFence}_x(c,\Sigma)$. Therefore
\[
\neg\mathrm{Active}_x(c,\Sigma).
\]
\end{proof}

\paragraph{Two-Stage Resource-Access Authorization.}

\begin{definition}[Compute-Side Acceptance]
\label{def:compute-accept}

Let
\[
req\in\mathcal{Q}_{\mathsf{acc}}
\]
be a resource-access request presented to compute controller
\[
cc\in\mathbb{CC}.
\]

For process capability
\[
c_p\in\mathbb{C}_p
\]
and compute capability
\[
c_c\in\mathbb{C}_c,
\]
the compute controller accepts the request if
\[
\begin{aligned}
\mathrm{ComputeAccept}_{cc}
(req,c_p,& c_c,\Sigma)
\iff\;\\
&c_p\xmapsto{\Sigma}c_c
\\
&\land\;
\operatorname{cnid}(c_c)
=
\operatorname{NID}_c(cc)
\\
&\land\;
\operatorname{Authentic}_{cc}(c_p)
\\
&\land\;
\operatorname{subj}(c_p)
=
\operatorname{Requester}(req)
\\
&\land\;
\operatorname{Target}(req)
\subseteq
\operatorname{res}(c_p)
\\
&\land\;
\mathrm{Permitted}
\bigl(c_p,\operatorname{Op}(req)\bigr).
\end{aligned}
\]

Because
\[
c_p\mapsto c_c
\]
requires
\[
\operatorname{res}(c_p)
\subseteq
\operatorname{res}(c_c)
\]
and
\[
\operatorname{perm}(c_p)
\subseteq
\operatorname{perm}(c_c),
\]
compute-side acceptance also implies
\[
\operatorname{Target}(req)
\subseteq
\operatorname{res}(c_c)
\]
and
\[
\mathrm{Permitted}
\bigl(c_c,\operatorname{Op}(req)\bigr).
\]

The predicate depends only on the presented process capability, trusted compute-side request context, and current compute-controller state. It does not inspect resource-controller activity or fence state.

\end{definition}

\begin{definition}[Resource-Access Capability Resolution]
\label{def:resource-access-resolution}

Suppose
\[
\mathrm{ComputeAccept}_{cc}
(req,c_p,c_c,\Sigma)
\]
holds.

The compute controller resolves the corresponding root-anchored compute capability
\[
c_c^{\mathsf{ra}}
\]
according to \Cref{def:root-anchor-selection}. We write
\[
c_c
\leadsto
c_c^{\mathsf{ra}}
\]
if
\[
\operatorname{RootAnchored}_{cc}
(c_c^{\mathsf{ra}})
\]
and
\[
\operatorname{Reach}_{T_c(cc)}
(c_c^{\mathsf{ra}},c_c)
\]
hold.

By \Cref{lem:resource-link-inheritance},
\[
\operatorname{rnid}(c_c)
=
\operatorname{rnid}(c_c^{\mathsf{ra}})
\]
and
\[
\operatorname{rid}(c_c)
=
\operatorname{rid}(c_c^{\mathsf{ra}}).
\]

Thus, $c_c$ and $c_c^{\mathsf{ra}}$ identify the same resource-side authority, although the non-root compute capability may contain more attenuated resource extent or permissions.

The compute controller forwards the resource-controller-created root-anchored capability $c_c^{\mathsf{ra}}$, rather than $c_p$ or a non-root compute capability, to the resource controller.

The operational path after compute-side acceptance is therefore
\[
c_p
\mapsto
c_c
\leadsto
c_c^{\mathsf{ra}}
\mapsto
c_r.
\]

The relation $\leadsto$ denotes local root-capability resolution rather than an additional authority-delegation relation.

\end{definition}

\begin{definition}[Resource-Side Acceptance]
\label{def:resource-accept}

Let
\[
req\in\mathcal{Q}_{\mathsf{acc}}
\]
be a resource-access request forwarded to resource controller
\[
rc\in\mathbb{RC}
\]
with root-anchored compute capability
\[
c_c^{\mathsf{ra}}.
\]

Let
\[
c_r\in\mathbb{C}_r
\]
be the resource capability identified by the resource-side linkage of $c_c^{\mathsf{ra}}$.

The resource controller accepts the request if
\[
\begin{aligned}
\mathrm{ResourceAccept}_{rc}
(req,c_c^{\mathsf{ra}},& c_r,\Sigma)
\iff\;\\
&c_c^{\mathsf{ra}}
\mapsto
c_r
\\
&\land\;
\mathrm{Active}_r(c_r,\Sigma)
\\
&\land\;
\operatorname{rnid}(c_r)
=
\operatorname{NID}_r(rc)
\\
&\land\;
\operatorname{Authentic}_{rc}
(c_c^{\mathsf{ra}})
\\
&\land\;
\operatorname{Origin}(req)
=
\operatorname{cnid}(c_r)
\\
&\land\;
\operatorname{Target}(req)
\subseteq
\operatorname{res}(c_c^{\mathsf{ra}})
\\
&\land\;
\mathrm{Permitted}
\bigl(
c_c^{\mathsf{ra}},
\operatorname{Op}(req)
\bigr).
\end{aligned}
\]

Since
\[
c_c^{\mathsf{ra}}\mapsto c_r
\]
requires
\[
\operatorname{res}(c_c^{\mathsf{ra}})
\subseteq
\operatorname{res}(c_r)
\]
and
\[
\operatorname{perm}(c_c^{\mathsf{ra}})
\subseteq
\operatorname{perm}(c_r),
\]
resource-side acceptance also implies
\[
\operatorname{Target}(req)
\subseteq
\operatorname{res}(c_r)
\]
and
\[
\mathrm{Permitted}
\bigl(c_r,\operatorname{Op}(req)\bigr).
\]

The origin check
\[
\operatorname{Origin}(req)
=
\operatorname{cnid}(c_r)
\]
binds ordinary resource access to the compute node for which the resource authority was issued. Here $\operatorname{Origin}(req)$ is the trusted compute-node origin established by the resource-side request path as defined in \Cref{def:request-context}.

Resource-side acceptance depends only on the forwarded resource-controller-authenticated capability, trusted resource-side request context, and current resource-controller state. It does not depend on current process or compute-tree state at the originating compute controller.

\end{definition}

\begin{lemma}[Authoritative Resource-Side Revocation]
\label{lem:cross-layer-revocation}

Let
\[
c_c^{\mathsf{ra}}\mapsto c_r.
\]

If
\[
\neg\mathrm{Active}_r(c_r,\Sigma),
\]
then for every
\[
req\in\mathcal{Q}_{\mathsf{acc}},
\]
\[
\neg
\mathrm{ResourceAccept}_{rc}
(req,c_c^{\mathsf{ra}},c_r,\Sigma),
\]
where
\[
rc=\operatorname{subj}(c_r).
\]

This conclusion does not require either
\[
\neg\mathrm{Active}_c(c_c,\Sigma)
\]
or
\[
\neg\mathrm{Active}_p(c_p,\Sigma)
\]
at the compute controller.

\end{lemma}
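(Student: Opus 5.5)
The argument is a direct unfolding of \Cref{def:resource-accept}. Fix an arbitrary $req\in\mathcal{Q}_{\mathsf{acc}}$ and set $rc=\operatorname{subj}(c_r)$. We argue by contradiction. Suppose $\mathrm{ResourceAccept}_{rc}(req,c_c^{\mathsf{ra}},c_r,\Sigma)$ held. The defining conjunction of \Cref{def:resource-accept} contains the conjunct $\mathrm{Active}_r(c_r,\Sigma)$ explicitly. This contradicts the hypothesis $\neg\mathrm{Active}_r(c_r,\Sigma)$. Since $req$ was arbitrary, the universally quantified conclusion follows.

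The only point needing care is that the resource capability used in the acceptance predicate is exactly the $c_r$ of the hypothesis. This keeps the controller from validating against some other, still-active resource capability. In \Cref{def:resource-accept}, $c_r$ is the resource capability identified by the resource-side linkage $\langle\operatorname{rnid},\operatorname{rid}\rangle$ of $c_c^{\mathsf{ra}}$. I will argue uniqueness as follows. The hypothesis $c_c^{\mathsf{ra}}\mapsto c_r$ fixes $\operatorname{rnid}(c_r)=\operatorname{rnid}(c_c^{\mathsf{ra}})$ and $\operatorname{rid}(c_r)=\operatorname{rid}(c_c^{\mathsf{ra}})$. Clause~7 of \Cref{def:wf-res-tree} makes $\operatorname{rid}$ unique among live entries of the resource tree at $rc$. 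Hence at most one live resource capability at $rc$ matches this linkage. If no live entry matches, then $c_r\notin V_r(rc)$, and $\mathrm{Active}_r$ fails anyway. Either way, no alternative active capability can be substituted. By the non-reuse property of the persistent identifier counters in \Cref{def:system-state}, a reclaimed $\mathit{rid}$ also cannot be reassigned to a fresh active capability. I expect this to be the only step requiring thought, and it reduces to citing those two facts.

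Finally, I will justify the remark that no compute-side inactivity is needed. The predicate $\mathrm{ResourceAccept}_{rc}$ mentions only the forwarded $c_c^{\mathsf{ra}}$, the trusted context $\operatorname{Origin}(req)$, and resource-controller state, namely $V_r(rc)$ and $F_{RC}(rc)$. It does not mention $\mathrm{Active}_c$, $\mathrm{Active}_p$, $F_{CC}$, or $\Sigma_{CC}$. The contradiction above therefore never uses any assumption about compute-controller state. The conclusion holds even when $\mathrm{ComputeAccept}_{cc}(req,c_p,c_c,\Sigma)$ is true.

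As a corollary, one can combine the lemma with \Cref{lem:monotonic-revocation}. If an active resource fence $f\in F_{RC}(rc)$ satisfies $\operatorname{Reach}_{T_r}(f,c_r)$, then $c_r$ is inactive, so every dependent request is rejected at the resource boundary. This is the revocation-soundness use of the lemma.
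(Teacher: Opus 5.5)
Your proof is correct and is the paper's argument: $\mathrm{ResourceAccept}_{rc}$ contains $\mathrm{Active}_r(c_r,\Sigma)$ as an explicit conjunct, and no compute-side predicate appears in it. Your uniqueness paragraph is unnecessary, because $c_r$ is a fixed argument of $\mathrm{ResourceAccept}_{rc}$ in the statement. As written, it would also need $c_r$ to be the live capability identified by the linkage, not an arbitrary tuple satisfying $c_c^{\mathsf{ra}}\mapsto c_r$.
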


\begin{proof}

By \Cref{def:resource-accept}, resource-side acceptance requires
\[
\mathrm{Active}_r(c_r,\Sigma).
\]

Therefore,
\[
\neg\mathrm{Active}_r(c_r,\Sigma)
\]
implies
\[
\neg
\mathrm{ResourceAccept}_{rc}
(req,c_c^{\mathsf{ra}},c_r,\Sigma).
\]

The predicates $\mathrm{Active}_c$ and $\mathrm{Active}_p$ depend only on compute-controller state and therefore need not change when the resource-side authority becomes inactive.
\end{proof}

\begin{definition}[End-to-End Authorization]
\label{def:authorization}

For
\[
req\in\mathcal{Q}_{\mathsf{acc}},
\]
end-to-end resource access is authorized if and only if
\[
\begin{aligned}
\mathrm{Allow}(&req,\Sigma)
\iff\\
&\exists\;
cc\in\mathbb{CC},
rc\in\mathbb{RC},
c_p\in\mathbb{C}_p,
c_c,c_c^{\mathsf{ra}}\in\mathbb{C}_c,
c_r\in\mathbb{C}_r:
\\
&\:
\mathrm{ComputeAccept}_{cc}
(req,c_p,c_c,\Sigma)
\\
&\:\land\;
c_c\leadsto c_c^{\mathsf{ra}}
\\
&\:\land\;
\mathrm{ResourceAccept}_{rc}
(req,c_c^{\mathsf{ra}},c_r,\Sigma).
\end{aligned}
\]

Thus, an access is allowed only when both enforcement stages accept the request. The compute controller validates process-bound local authority before fabric admission, while the resource controller independently validates the forwarded root-anchored authority against current resource-side state immediately before resource access.

A request may therefore satisfy
\[
\mathrm{ComputeAccept}_{cc}
(req,c_p,c_c,\Sigma)
\]
while failing
\[
\mathrm{ResourceAccept}_{rc}
(req,c_c^{\mathsf{ra}},c_r,\Sigma).
\]
In that case,
\[
\neg\mathrm{Allow}(req,\Sigma).
\]

\end{definition}

\TWOSTENF*

By Definition~\ref{def:authorization}, the invariant is realized by
\[
\begin{aligned}
\mathrm{Allow}(req,\Sigma)
\iff
&\mathrm{ComputeAccept}_{cc}(req,c_p,c_c,\Sigma)\\
&\,\land
c_c\leadsto c_c^{\mathsf{ra}}\\
&\,\land
\mathrm{ResourceAccept}_{rc}
(req,c_c^{\mathsf{ra}},c_r,\Sigma).\\
\end{aligned}
\]

\subsection{Capability-Related Operations}
\label{app:cap-ops}

Capability-related operations are modeled as transitions over capability trees, controller-local authority identifiers, fence state, and capability ownership state. Cross-layer authorization and revocation-control relationships follow from the linkage fields defined in \Cref{app:formal_system_state}; they are not maintained as independent mutable relations.

Each transition preserves the capability-tree well-formedness conditions defined in \Cref{app:formal_cap_trees}. New controller-local authority identifiers are obtained from the persistent counters defined in \Cref{def:system-state}. A controller durably advances the corresponding counter before authority carrying the new identifier becomes usable, and an allocated identifier is never reassigned to different authority.

\begin{definition}[Resource Initialization]
\label{def:resource-init}

Let
\[
rc\in\mathbb{RC}
\]
be a resource controller managing
\[
\mathbb{R}_{rc}.
\]

Let
\[
rid_0
=
\operatorname{ctr}_{RC}(rc).
\]

The resource controller first durably advances
\[
\operatorname{ctr}'_{RC}(rc)
=
rid_0+1
\]
and then creates the controller-owned root capability
\[
c_0
=
\left\langle
rc,
\mathbb{R}_{rc},
\mathbb{V}\setminus\{\mathtt{exclusive}\},
\mathtt{usable},
\operatorname{NID}_r(rc),
0,
rid_0,
0
\right\rangle.
\]

The initial resource capability tree is
\[
T_r(c_0)
=
\left\langle
\{c_0\},
\emptyset,
c_0
\right\rangle.
\]

The root capability is controller-owned rather than allocated to a compute node and therefore satisfies
\[
\operatorname{cnid}(c_0)=0.
\]

\end{definition}

\begin{definition}[Allocation Disjointness]
\label{def:allocation-disjointness}

Let
\[
T_r(c_0)
=
\langle
V_r,E_r,c_0
\rangle
\]
be a resource capability tree.

Resource capabilities produced by independent allocation operations are direct children of $c_0$. A candidate allocated resource capability $c_r$ satisfies allocation disjointness if
\[
\begin{aligned}
&\mathrm{AllocationDisjoint}(c_r,T_r)
\iff\\
&\qquad\qquad\forall a\in V_r:
(c_0,a)\in E_r
\Rightarrow
\operatorname{res}(c_r)
\cap
\operatorname{res}(a)
=
\emptyset.
\end{aligned}
\]

Thus, simultaneously live allocations always designate disjoint resource extents, independently of whether strong isolation was requested.

Because every capability derived from an allocation root is attenuated to a subset of that root's extent, disjointness of allocation roots also implies disjointness between authority derived from different allocations.

\end{definition}

\begin{definition}[Capability Allocation]
\label{def:allocation}

Capability allocation establishes an initial authority chain consisting of a resource capability, a root-anchored compute capability, and a process capability.

Let
\[
rc\in\mathbb{RC},
\qquad
cc\in\mathbb{CC},
\qquad
p\in\mathbb{P}.
\]

Let
\[
req_{\mathsf{alloc}}
\]
be an admitted allocation request specifying a requested resource extent $r_{\mathsf{req}}$, an ordinary requested permission set
\[
v_{\mathsf{req}}
\subseteq
\mathbb{V}\setminus\{\mathtt{exclusive}\},
\]
and whether strong isolation is requested.

SADRA constructs the resource permission set as
\[
v_r
=
\begin{cases}
v_{\mathsf{req}}
\cup
\{\mathtt{exclusive}\},
&
\text{if strong isolation is requested},
\\[1mm]
v_{\mathsf{req}},
&
\text{otherwise}.
\end{cases}
\]

Thus, $\mathtt{exclusive}$ is set by SADRA as a consequence of the strong-isolation request rather than introduced by delegation.

Let
\[
rid_r
=
\operatorname{ctr}_{RC}(rc).
\]
Before the allocated authority becomes usable, the resource controller durably advances
\[
\operatorname{ctr}'_{RC}(rc)
=
rid_r+1.
\]

The resource controller creates
\[
c_r
=
\left\langle
rc,
r_r,
v_r,
\mathtt{usable},
\operatorname{NID}_r(rc),
\operatorname{NID}_c(cc),
rid_r,
0
\right\rangle,
\]
where
\[
r_r
\subseteq
r_{\mathsf{req}}
\cap
\mathbb{R}_{rc},
\]
\[
\operatorname{res}(c_r)
\subseteq
\operatorname{res}(c_0),
\]
and
\[
\mathrm{AllocationDisjoint}
(c_r,T_r(c_0)).
\]

Allocation is not resource delegation. In particular, it does not require
\[
c_r\preceq_r c_0.
\]
This distinction permits initial allocation to introduce $\mathtt{exclusive}$ authority even though exclusive authority cannot subsequently be delegated.

The resource tree becomes
\[
T'_r(c_0)
=
\left\langle
V_r(rc)\cup\{c_r\},
E_r(rc)\cup\{(c_0,c_r)\},
c_0
\right\rangle.
\]

The corresponding compute controller allocates
\[
cid_c
=
\operatorname{ctr}_{CC}(cc)
\]
and durably advances
\[
\operatorname{ctr}'_{CC}(cc)
=
cid_c+1
\]
before the resulting compute authority becomes usable.

The corresponding root-anchored compute capability is
\[
c_c
=
\left\langle
cc,
r_c,
v_c,
\mathtt{usable},
\operatorname{NID}_r(rc),
\operatorname{NID}_c(cc),
rid_r,
cid_c
\right\rangle,
\]
where
\[
\operatorname{res}(c_c)
\subseteq
\operatorname{res}(c_r),
\]
\[
\operatorname{perm}(c_c)
\subseteq
\operatorname{perm}(c_r),
\]
and
\[
\mathrm{Exclusive}(c_c)
\iff
\mathrm{Exclusive}(c_r).
\]

It is inserted directly below the structural compute-tree root:
\[
T'_c(cc)
=
\left\langle
V_c(cc)\cup\{c_c\},
E_c(cc)\cup\{(\bot,c_c)\},
\bot
\right\rangle.
\]

The compute controller creates the corresponding process capability
\[
c_p
=
\left\langle
p,
r_p,
v_p,
\mathtt{usable},
0,
\operatorname{NID}_c(cc),
0,
cid_c
\right\rangle,
\]
where
\[
\operatorname{res}(c_p)
\subseteq
\operatorname{res}(c_c),
\]
\[
\operatorname{perm}(c_p)
\subseteq
\operatorname{perm}(c_c),
\]
and
\[
\mathrm{Exclusive}(c_p)
\iff
\mathrm{Exclusive}(c_c).
\]

Hence, when strong isolation is requested,
\[
\mathtt{exclusive}
\in
\operatorname{perm}(c_r)
\cap
\operatorname{perm}(c_c)
\cap
\operatorname{perm}(c_p).
\]

Creating $c_p$ does not create a child in either capability tree and is not a delegation operation. It therefore remains permitted when the allocated authority is exclusive.

The initial linkage is
\[
c_p
\mapsto
c_c
\mapsto
c_r.
\]

The state is updated with
\[
c_r\in\Sigma'_{RC}(rc),
\qquad
c_c\in\Sigma'_{CC}(cc),
\qquad
c_p\in\Sigma'_P(p).
\]

The process capability is authenticated by the compute controller. The resource-facing representation of the root-anchored compute capability is authenticated by the resource controller.

Initial allocation creates no revocation handle.

\end{definition}

\begin{definition}[Intra-Node Delegation]
\label{def:intra-delegation}

Intra-node delegation derives attenuated authority for another process managed by the same compute controller. The delegating process retains its original usable authority.

Let
\[
req_{\mathsf{del}}^{P}
\]
be an intra-node delegation request presented to compute controller
\[
cc\in\mathbb{CC}.
\]

Let
\[
p
=
\operatorname{Requester}(req_{\mathsf{del}}^{P})
\]
denote the delegating process, and let
\[
p'\in\mathbb{P}
\]
be the receiving process.

Let
\[
c_p^{\mathsf{src}}\in\mathbb{C}_p
\]
be the process capability carried by the request, and let
\[
c_c^{\mathsf{src}}\in V_c(cc)
\]
be its associated compute capability.

The operation requires
\[
c_p^{\mathsf{src}}
\xmapsto{\Sigma}
c_c^{\mathsf{src}},
\]
\[
\operatorname{Authentic}_{cc}
(c_p^{\mathsf{src}}),
\]
\[
\operatorname{subj}(c_p^{\mathsf{src}})
=
\operatorname{Requester}(req_{\mathsf{del}}^{P}),
\]
and
\[
\mathrm{Delegable}(c_p^{\mathsf{src}}).
\]

Thus,
\[
\mathtt{delegate}
\in
\operatorname{perm}(c_p^{\mathsf{src}})
\]
and
\[
\mathtt{exclusive}
\notin
\operatorname{perm}(c_p^{\mathsf{src}}).
\]

Let
\[
cid_{\mathsf{del}}
=
\operatorname{ctr}_{CC}(cc).
\]

Before making the delegated authority usable, the compute controller durably advances
\[
\operatorname{ctr}'_{CC}(cc)
=
cid_{\mathsf{del}}+1.
\]

It creates the delegated usable compute capability
\[
c'_c
=
\left\langle
cc,
r'_c,
v'_c,
\mathtt{usable},
\operatorname{rnid}(c_c^{\mathsf{src}}),
\operatorname{cnid}(c_c^{\mathsf{src}}),
\operatorname{rid}(c_c^{\mathsf{src}}),
cid_{\mathsf{del}}
\right\rangle,
\]
where
\[
c'_c
\preceq_c
c_c^{\mathsf{src}},
\]
\[
\operatorname{res}(c'_c)
\subseteq
\operatorname{res}(c_p^{\mathsf{src}}),
\]
and
\[
\operatorname{perm}(c'_c)
\subseteq
\operatorname{perm}(c_p^{\mathsf{src}}).
\]

Because delegation requires non-exclusive source authority and delegated permissions are attenuated,
\[
\mathtt{exclusive}
\notin
\operatorname{perm}(c'_c).
\]
Thus, delegation cannot introduce the $\mathtt{exclusive}$ permission.

The compute tree becomes
\[
T'_c(cc)
=
\left\langle
V_c(cc)\cup\{c'_c\},
E_c(cc)\cup
\{(c_c^{\mathsf{src}},c'_c)\},
\bot
\right\rangle.
\]

The recipient receives
\[
c'_p
=
\left\langle
p',
r'_p,
v'_p,
\mathtt{usable},
0,
\operatorname{NID}_c(cc),
0,
cid_{\mathsf{del}}
\right\rangle,
\]
where
\[
\operatorname{res}(c'_p)
\subseteq
\operatorname{res}(c'_c)
\]
and
\[
\operatorname{perm}(c'_p)
\subseteq
\operatorname{perm}(c'_c).
\]

Thus,
\[
c'_p
\mapsto
c'_c.
\]

The delegator controls whether the recipient may subsequently re-delegate this authority through the delegated permission set. If
\[
\mathtt{delegate}
\in
\operatorname{perm}(c'_p),
\]
the recipient may re-delegate subject to the remaining delegation
conditions. If
\[
\mathtt{delegate}
\notin
\operatorname{perm}(c'_p),
\]
the recipient cannot re-delegate the capability.

The delegating process receives a separate process revocation handle
\[
h_p
=
\left\langle
p,
r'_c,
v'_c,
\mathtt{handle},
0,
\operatorname{NID}_c(cc),
0,
cid_{\mathsf{del}}
\right\rangle.
\]

The handle and delegated compute authority intentionally share the same compute-side linkage:
\[
h_p
\xmapsto{H}
c'_c.
\]

The handle conveys revocation control only and is not inserted into the compute capability tree.

The state is updated with
\[
c'_c\in\Sigma'_{CC}(cc),
\qquad
c'_p\in\Sigma'_P(p'),
\qquad
h_p\in\Sigma'_P(p).
\]

The original capabilities
$c_p^{\mathsf{src}}$ and $c_c^{\mathsf{src}}$ remain unchanged.

\end{definition}

\begin{definition}[Inter-Node Delegation]
\label{def:inter-delegation}

Inter-node delegation derives attenuated resource authority for a process managed by a different compute controller.

Let
\[
req_{\mathsf{del}}^{P}
\]
be an inter-node delegation request presented by a process to source compute controller
\[
cc\in\mathbb{CC}.
\]

Let
\[
p
=
\operatorname{Requester}(req_{\mathsf{del}}^{P})
\]
denote the delegating process.

Let
\[
cc'\in\mathbb{CC},
\qquad
cc'\neq cc,
\]
be the destination compute controller, and let
\[
p'\in\mathbb{P}
\]
be the receiving process.

Let
\[
c_p^{\mathsf{src}}\in\mathbb{C}_p
\]
be the process capability carried by
$req_{\mathsf{del}}^{P}$, and let
\[
c_c^{\mathsf{src}}\in V_c(cc)
\]
be its associated compute capability.

The source compute controller requires
\[
c_p^{\mathsf{src}}
\xmapsto{\Sigma}
c_c^{\mathsf{src}},
\]
\[
\operatorname{Authentic}_{cc}
(c_p^{\mathsf{src}}),
\]
\[
\operatorname{subj}(c_p^{\mathsf{src}})
=
\operatorname{Requester}(req_{\mathsf{del}}^{P}),
\]
and
\[
\mathrm{Delegable}(c_p^{\mathsf{src}}).
\]

Let
\[
c_c^{\mathsf{ra}}
\]
be the root-anchored compute capability associated with $c_c^{\mathsf{src}}$ according to \Cref{def:root-anchor-selection}, and let
\[
c_c^{\mathsf{ra}}
\mapsto
c_r.
\]

After source-side authorization, the compute controller sends a controller-to-controller delegation request
\[
req_{\mathsf{del}}^{R}
\]
to the authoritative resource controller $rc$.

The resource controller verifies
\[
\mathrm{Active}_r(c_r,\Sigma),
\]
\[
\operatorname{Authentic}_{rc}
(c_c^{\mathsf{ra}}),
\]
and
\[
\operatorname{Origin}(req_{\mathsf{del}}^{R})
=
\operatorname{cnid}(c_r)
=
\operatorname{NID}_c(cc).
\]

The distinction between $req_{\mathsf{del}}^{P}$ and $req_{\mathsf{del}}^{R}$ separates the trusted process identity observed at the source compute controller from the trusted compute-node origin observed at the resource controller.

The newly delegated resource authority must satisfy
\[
c'_r
\preceq_r
c_r.
\]

Consequently, resource-side delegation requires
\[
\mathtt{delegate}
\in
\operatorname{perm}(c_r)
\]
and
\[
\mathtt{exclusive}
\notin
\operatorname{perm}(c_r).
\]

Therefore, an initially allocated resource marked $\mathtt{exclusive}$ cannot be delegated to another compute node.

Let
\[
rid_{\mathsf{del}}
=
\operatorname{ctr}_{RC}(rc).
\]

Before making the new resource authority usable, the resource controller durably advances
\[
\operatorname{ctr}'_{RC}(rc)
=
rid_{\mathsf{del}}+1.
\]

It creates
\[
c'_r
=
\left\langle
rc,
r'_r,
v'_r,
\mathtt{usable},
\operatorname{NID}_r(rc),
\operatorname{NID}_c(cc'),
rid_{\mathsf{del}},
0
\right\rangle,
\]
where
\[
c'_r
\preceq_r
c_r,
\]
\[
\operatorname{res}(c'_r)
\subseteq
\operatorname{res}(c_p^{\mathsf{src}}),
\]
and
\[
\operatorname{perm}(c'_r)
\subseteq
\operatorname{perm}(c_p^{\mathsf{src}}).
\]

Since the source authority is non-exclusive and permissions are attenuated,
\[
\mathtt{exclusive}
\notin
\operatorname{perm}(c'_r).
\]

The resource tree becomes
\[
T'_r(c_0)
=
\left\langle
V_r(rc)\cup\{c'_r\},
E_r(rc)\cup\{(c_r,c'_r)\},
c_0
\right\rangle.
\]

The destination compute controller allocates
\[
cid_{\mathsf{dst}}
=
\operatorname{ctr}_{CC}(cc')
\]
and durably advances
\[
\operatorname{ctr}'_{CC}(cc')
=
cid_{\mathsf{dst}}+1.
\]

The corresponding destination root-anchored compute capability is
\[
c_c^{\mathsf{dst}}
=
\left\langle
cc',
r_{\mathsf{dst}},
v_{\mathsf{dst}},
\mathtt{usable},
\operatorname{NID}_r(rc),
\operatorname{NID}_c(cc'),
rid_{\mathsf{del}},
cid_{\mathsf{dst}}
\right\rangle,
\]
where
\[
\operatorname{res}(c_c^{\mathsf{dst}})
\subseteq
\operatorname{res}(c'_r)
\]
and
\[
\operatorname{perm}(c_c^{\mathsf{dst}})
\subseteq
\operatorname{perm}(c'_r).
\]

It is inserted directly below the destination compute-tree root:
\[
T'_c(cc')
=
\left\langle
V_c(cc')\cup\{c_c^{\mathsf{dst}}\},
E_c(cc')\cup
\{(\bot,c_c^{\mathsf{dst}})\},
\bot
\right\rangle.
\]

The receiving process obtains
\[
c_p^{\mathsf{dst}}
=
\left\langle
p',
r_{\mathsf{dst},p},
v_{\mathsf{dst},p},
\mathtt{usable},
0,
\operatorname{NID}_c(cc'),
0,
cid_{\mathsf{dst}}
\right\rangle,
\]
where
\[
\operatorname{res}(c_p^{\mathsf{dst}})
\subseteq
\operatorname{res}(c_c^{\mathsf{dst}})
\]
and
\[
\operatorname{perm}(c_p^{\mathsf{dst}})
\subseteq
\operatorname{perm}(c_c^{\mathsf{dst}}).
\]

The usable destination chain is therefore
\[
c_p^{\mathsf{dst}}
\mapsto
c_c^{\mathsf{dst}}
\mapsto
c'_r.
\]

As with intra-node delegation, the source delegator determines whether the destination may subsequently re-delegate the authority by including or omitting $\mathtt{delegate}$ from the delegated permission set.

The source compute controller allocates a fresh local identifier
\[
cid_h
=
\operatorname{ctr}_{CC}(cc)
\]
for the inter-node compute revocation handle and durably advances
\[
\operatorname{ctr}'_{CC}(cc)
=
cid_h+1.
\]

The source-side compute handle is
\[
h_c^{\mathsf{src}}
=
\left\langle
cc,
r'_r,
v'_r,
\mathtt{handle},
\operatorname{NID}_r(rc),
\operatorname{NID}_c(cc),
rid_{\mathsf{del}},
cid_h
\right\rangle.
\]

It is inserted below the usable source compute capability:
\[
T'_c(cc)
=
\left\langle
V_c(cc)\cup\{h_c^{\mathsf{src}}\},
E_c(cc)\cup
\{(c_c^{\mathsf{src}},h_c^{\mathsf{src}})\},
\bot
\right\rangle.
\]

The handle belongs structurally to the source compute subtree but carries the $\mathit{rid}$ of the newly delegated resource capability:
\[
\operatorname{rid}(h_c^{\mathsf{src}})
=
rid_{\mathsf{del}}.
\]

Hence,
\[
h_c^{\mathsf{src}}
\xmapsto{R}
c'_r.
\]

The delegating process receives
\[
h_p^{\mathsf{src}}
=
\left\langle
p,
r'_r,
v'_r,
\mathtt{handle},
0,
\operatorname{NID}_c(cc),
0,
cid_h
\right\rangle,
\]
giving
\[
h_p^{\mathsf{src}}
\xmapsto{H}
h_c^{\mathsf{src}}.
\]

The inter-node revocation-control path is therefore
\[
h_p^{\mathsf{src}}
\xmapsto{H}
h_c^{\mathsf{src}}
\xmapsto{R}
c'_r.
\]

The resource-facing destination root capability and the resource-facing fields of the source compute revocation handle are authenticated by the resource controller. The source-local $\mathit{cid}$ is used for compute-side handle resolution and need not be part of the resource-controller authentication domain because the resource controller does not use it for revocation authorization.

The state is updated with
\[
c'_r\in\Sigma'_{RC}(rc),
\]
\[
c_c^{\mathsf{dst}}\in\Sigma'_{CC}(cc'),
\qquad
c_p^{\mathsf{dst}}\in\Sigma'_P(p'),
\]
\[
h_c^{\mathsf{src}}\in\Sigma'_{CC}(cc),
\qquad
h_p^{\mathsf{src}}\in\Sigma'_P(p).
\]

Neither handle conveys usable resource-access authority.

\end{definition}

\begin{definition}[Escaped Inter-Node Revocation Targets]
\label{def:escaped-targets}

Let $c_a$ be a usable compute capability in
\[
T_c(cc)
=
\langle
V_c(cc),E_c(cc),\bot
\rangle.
\]

The inter-node compute handles structurally dependent on $c_a$ are
\[
\operatorname{Esc}(c_a,\Sigma)
=
\left\{
h_c\in V_c(cc)
\;\middle|\;
\begin{aligned}
&\operatorname{type}(h_c)=\mathtt{handle}
\\
&\land\;
\operatorname{Reach}_{T_c(cc)}(c_a,h_c)
\end{aligned}
\right\}.
\]

The corresponding live resource-side targets are
\[
\operatorname{EscR}(c_a,\Sigma)
=
\left\{
c_r
\;\middle|\;
\begin{aligned}
&\exists h_c\in
\operatorname{Esc}(c_a,\Sigma):
\\
&h_c\xmapsto{R}c_r
\land
c_r\in V_r(\operatorname{subj}(c_r))
\end{aligned}
\right\}.
\]

Each such handle represents authority that escaped the local compute subtree through an earlier inter-node delegation.

Because compute-tree derivation preserves $\mathit{rnid}$, all compute handles reachable from one usable compute subtree refer to capabilities maintained by the same authoritative resource controller, although the handles may contain different $\mathit{rid}$ values.

\end{definition}

\begin{definition}[Resource-Side Revocation Authorization]
\label{def:resource-revocation-authorization}

Let
\[
req_{\mathsf{rev}}^{R}
\]
be a controller-to-controller revocation request received by resource controller
\[
rc\in\mathbb{RC}
\]
with compute revocation handle
\[
h_c\in\mathbb{C}_c.
\]

For a live resource capability
\[
c_r\in V_r(rc),
\]
the request is authorized to target $c_r$ if
\[
\begin{aligned}
\mathrm{RevokeAccept}_{rc}
(req_{\mathsf{rev}}^{R},h_c,c_r,& \Sigma)
\iff\\
& h_c\xmapsto{R}c_r
\\
&\land\;
\operatorname{rnid}(c_r)
=
\operatorname{NID}_r(rc)
\\
&\land\;
\operatorname{Authentic}_{rc}(h_c)
\\
&\land\;
\operatorname{Origin}(req_{\mathsf{rev}}^{R})
=
\operatorname{cnid}(h_c).
\end{aligned}
\]

The compute handle is therefore the resource-side revocation credential. Knowledge of an $\mathit{rid}$ alone does not authorize installation of a resource fence.

The origin check binds use of the handle to the source compute node for which the revocation handle was created. The targeted resource capability may itself be issued to a different compute node, namely the destination of the inter-node delegation.

The predicate does not require $c_r$ to be active. Consequently, a repeated revocation request can be treated idempotently if the target is already fenced. If no live capability with the referenced $\langle\mathit{rnid},\mathit{rid}\rangle$ remains, the resource controller may acknowledge that the authority has already been reclaimed.

\end{definition}

\begin{definition}[Owner-Initiated Dis-allocation]
\label{def:disallocation}

Owner-initiated dis-allocation disables authority controlled directly by a controller. Structural removal occurs only after the corresponding fence takes effect.

For a resource controller $rc$, let
\[
T_r(c_0)
=
\langle
V_r(rc),E_r(rc),c_0
\rangle.
\]

A non-root resource capability
\[
c_r\in
V_r(rc)\setminus\{c_0\}
\]
is first fenced:
\[
F'_{RC}(rc)
=
F_{RC}(rc)
\cup
\{c_r\}.
\]

For every
\[
c\in V_r(rc),
\]
\[
\operatorname{Reach}_{T_r(c_0)}(c_r,c)
\Rightarrow
\neg\mathrm{Active}_r(c,\Sigma').
\]

For a compute controller $cc$, let
\[
T_c(cc)
=
\langle
V_c(cc),E_c(cc),\bot
\rangle.
\]

A compute capability may be owner-disallocated directly only if it is a usable root-anchored capability:
\[
c_c\in V_c(cc),
\]
\[
(\bot,c_c)\in E_c(cc),
\]
and
\[
\operatorname{type}(c_c)
=
\mathtt{usable}.
\]

The compute controller first installs
\[
F'_{CC}(cc)
=
F_{CC}(cc)
\cup
\{c_c\}.
\]

For every
\[
c\in V_c(cc),
\]
\[
\operatorname{Reach}_{T_c(cc)}(c_c,c)
\Rightarrow
\neg\mathrm{Active}_c(c,\Sigma').
\]

If
\[
\operatorname{Esc}(c_c,\Sigma)
\neq
\emptyset,
\]
the fenced compute subtree contains authority previously delegated to other compute nodes. For every
\[
h_c\in
\operatorname{Esc}(c_c,\Sigma)
\]
whose resource-side target has not already been secured, the compute controller sends a controller-to-controller revocation request
\[
req_{\mathsf{rev}}^{R}
\]
carrying $h_c$ to the resource controller identified by
\[
\operatorname{rnid}(h_c).
\]

The resource controller installs the corresponding resource fence only after
\[
\mathrm{RevokeAccept}_{rc}
(req_{\mathsf{rev}}^{R},h_c,c_r,\Sigma)
\]
holds for the matching live resource capability $c_r$.

The local compute fence takes effect independently of this propagation. Compute handles representing escaped authority remain available until their resource-side targets have been secured.

The presence of $\mathtt{exclusive}$ does not prevent dis-allocation or revocation. It prohibits delegation, not withdrawal of authority by its authorized owner or controller.

\end{definition}

\begin{definition}[Handle-Based Revocation]
\label{def:handle-revocation}

Handle-based revocation allows a delegating process to revoke authority that it previously delegated.

Let
\[
req_{\mathsf{rev}}^{P}
\]
be a handle-based revocation request presented to compute controller
\[
cc\in\mathbb{CC},
\]
and let
\[
p
=
\operatorname{Requester}(req_{\mathsf{rev}}^{P})
\]
denote the requesting process.

Let
\[
h_p\in\mathbb{C}_p
\]
be the process revocation handle carried by the request.

The compute controller requires
\[
\operatorname{type}(h_p)
=
\mathtt{handle},
\]
\[
\operatorname{Authentic}_{cc}(h_p),
\]
\[
\operatorname{subj}(h_p)
=
\operatorname{Requester}(req_{\mathsf{rev}}^{P}),
\]
and a live compute-side target
\[
h_p\xmapsto{H}c_a,
\qquad
c_a\in V_c(cc).
\]

Thus, an authentic handle presented by a different process does not authorize revocation: the handle subject must match the trusted process identity observed by the compute-side request path.

The operation then follows one of two paths.

\paragraph{Compute-local revocation.}

For an intra-node delegation,
\[
\operatorname{type}(c_a)
=
\mathtt{usable}.
\]

The compute controller first installs
\[
F'_{CC}(cc)
=
F_{CC}(cc)
\cup
\{c_a\}.
\]

For every
\[
c\in V_c(cc),
\]
\[
\operatorname{Reach}_{T_c(cc)}(c_a,c)
\Rightarrow
\neg\mathrm{Active}_c(c,\Sigma').
\]

Thus, local revocation takes effect when the compute fence is installed; structural reclamation is not required for denial.

If
\[
\operatorname{Esc}(c_a,\Sigma)
\neq
\emptyset,
\]
the subtree also contains inter-node delegations. For every
\[
h_c\in
\operatorname{Esc}(c_a,\Sigma)
\]
whose resource-side target has not yet been secured, the compute controller sends a controller-to-controller revocation request
\[
req_{\mathsf{rev}}^{R}
\]
carrying $h_c$ to the authoritative resource controller.

For every matching live resource capability $c_r$, the resource controller installs
\[
F'_{RC}(rc)
=
F_{RC}(rc)
\cup
\{c_r\}
\]
only after
\[
\mathrm{RevokeAccept}_{rc}
(req_{\mathsf{rev}}^{R},h_c,c_r,\Sigma)
\]
holds.

The compute handles remain in the fenced local subtree until their resource-side targets have been secured.

The process handle may be consumed after the local compute fence is durably established, because the corresponding compute handles remain in trusted controller state and preserve any outstanding remote revocation obligations:
\[
\Sigma'_P(p)
=
\Sigma_P(p)
\setminus
\{h_p\}.
\]

\paragraph{Resource-rooted revocation.}

For an inter-node delegation,
\[
c_a=h_c
\]
and
\[
\operatorname{type}(h_c)
=
\mathtt{handle}.
\]

The handle identifies the delegated resource authority through
\[
h_c
\xmapsto{R}
c_r.
\]

The source compute controller sends a controller-to-controller revocation request
\[
req_{\mathsf{rev}}^{R}
\]
carrying $h_c$ to the authoritative resource controller.

The resource controller identifies the source compute node from the trusted request path as
\[
\operatorname{Origin}(req_{\mathsf{rev}}^{R}).
\]

If the matching resource capability remains live and
\[
\mathrm{RevokeAccept}_{rc}
(req_{\mathsf{rev}}^{R},h_c,c_r,\Sigma)
\]
holds, the resource controller installs
\[
F'_{RC}(rc)
=
F_{RC}(rc)
\cup
\{c_r\}.
\]

For every resource capability $c$ in the affected resource subtree,
\[
\operatorname{Reach}_{T_r(c_0)}(c_r,c)
\Rightarrow
\neg\mathrm{Active}_r(c,\Sigma').
\]

By \Cref{lem:cross-layer-revocation}, accesses dependent on the fenced resource authority fail resource-side acceptance even if stale process or compute authority remains active at a destination compute controller.

No compute fence is installed on $h_c$. A compute handle conveys no usable authority and cannot serve as a fence anchor.

The distinction between $req_{\mathsf{rev}}^{P}$ and $req_{\mathsf{rev}}^{R}$ separates the trusted identity of the process exercising $h_p$ from the trusted origin of the compute controller exercising $h_c$.

The source process handle and compute handle are retained until the resource controller acknowledges that the target resource authority has been fenced or has already been reclaimed. After that acknowledgment, they may be retired:
\[
\Sigma'_{CC}(cc)
=
\Sigma_{CC}(cc)
\setminus
\{h_c\},
\]
\[
E'_c(cc)
=
E_c(cc)
\setminus
\left\{
(a,b)
\mid
a=h_c
\lor
b=h_c
\right\},
\]
and
\[
\Sigma'_P(p)
=
\Sigma_P(p)
\setminus
\{h_p\}.
\]

Because a compute handle cannot serve as a parent for compute delegation, direct retirement of the handle cannot orphan another compute-tree node.

Resource-rooted revocation takes effect for resource access when the resource controller installs the authoritative fence. Retirement of source handles and reclamation of stale destination-side state are subsequent cleanup actions.

\end{definition}

\begin{definition}[Resource-Side Security of a Compute Handle]
\label{def:handle-resource-secured}

Let
\[
h_c\in\mathbb{C}_c
\]
be a compute revocation handle, and let $rc$ be the unique resource controller satisfying
\[
\operatorname{NID}_r(rc)
=
\operatorname{rnid}(h_c).
\]

The resource-side authority identified by $h_c$ is secured in state $\Sigma$ if no matching live resource capability remains active:
\[
\begin{aligned}
&\mathrm{ResourceSecured}(h_c,\Sigma)
\iff\\
&\qquad\qquad\qquad \forall c_r\in V_r(rc):\,
h_c\xmapsto{R}c_r
\Rightarrow
\neg\mathrm{Active}_r(c_r,\Sigma).
\end{aligned}
\]

The predicate also holds when the corresponding resource capability has already been reclaimed. Since resource-controller-local identifiers are never reassigned, absence of a matching live capability cannot later cause the same handle to identify different authority.

$\mathrm{ResourceSecured}$ is a specification predicate over global state. A compute controller does not evaluate it by reading resource-controller state directly. Operationally, the condition is established by an authenticated acknowledgment from the authoritative resource controller after the target has been fenced or found already
reclaimed.

\end{definition}

\begin{definition}[Reclamation]
\label{def:reclamation}

Reclamation removes controller-resident capability state only after the corresponding usable authority has been disabled by fencing.

For a compute tree
\[
T_c(cc)
=
\langle
V_c(cc),E_c(cc),\bot
\rangle,
\]
let
\[
V_c^{\mathsf{rec}}
\subseteq
V_c(cc)
\]
be the nodes removed by one reclamation transition.

Every reclaimed compute-tree node must be covered by an active compute
fence:
\[
\forall c\in V_c^{\mathsf{rec}},\quad
\exists f\in F_{CC}(cc):
\operatorname{Reach}_{T_c(cc)}(f,c).
\]

The reclaimed set is descendant-closed:
\[
\forall c\in V_c^{\mathsf{rec}},
\forall d\in V_c(cc),\quad
\operatorname{Reach}_{T_c(cc)}(c,d)
\Rightarrow
d\in V_c^{\mathsf{rec}}.
\]

Every inter-node compute handle removed as part of reclamation must have a secured resource-side target:
\[
\forall h_c\in V_c^{\mathsf{rec}},\quad
\operatorname{type}(h_c)=\mathtt{handle}
\Rightarrow
\mathrm{ResourceSecured}(h_c,\Sigma).
\]

The compute-controller state becomes
\[
\Sigma'_{CC}(cc)
=
\Sigma_{CC}(cc)
\setminus
V_c^{\mathsf{rec}},
\]
and
\[
E'_c(cc)
=
E_c(cc)
\setminus
\left\{
(a,b)
\mid
a\in V_c^{\mathsf{rec}}
\lor
b\in V_c^{\mathsf{rec}}
\right\}.
\]

For a resource tree
\[
T_r(c_0)
=
\langle
V_r(rc),E_r(rc),c_0
\rangle,
\]
let
\[
V_r^{\mathsf{rec}}
\subseteq
V_r(rc)\setminus\{c_0\}.
\]

Every reclaimed resource capability must be covered by an active resource fence:
\[
\forall c\in V_r^{\mathsf{rec}},\quad
\exists f\in F_{RC}(rc):
\operatorname{Reach}_{T_r(c_0)}(f,c).
\]

The resource reclamation set is descendant-closed:
\[
\forall c\in V_r^{\mathsf{rec}},
\forall d\in V_r(rc),\quad
\operatorname{Reach}_{T_r(c_0)}(c,d)
\Rightarrow
d\in V_r^{\mathsf{rec}}.
\]

The resource-controller state becomes
\[
\Sigma'_{RC}(rc)
=
\Sigma_{RC}(rc)
\setminus
V_r^{\mathsf{rec}},
\]
and
\[
E'_r(rc)
=
E_r(rc)
\setminus
\left\{
(a,b)
\mid
a\in V_r^{\mathsf{rec}}
\lor
b\in V_r^{\mathsf{rec}}
\right\}.
\]

Process-held capability tokens need not be removed during reclamation. Once their corresponding controller-resident authority has been reclaimed, they no longer resolve to active authority and cannot authorize subsequent operations.

Reclamation never decrements or rewinds $\operatorname{ctr}_{CC}$ or $\operatorname{ctr}_{RC}$. Identifiers associated with reclaimed authority are therefore never reassigned.

A fence remains active until all capability-tree nodes covered by that fence have been structurally removed. For a fence anchor $f$, let 
\[
\operatorname{Desc}_{T_x}(f)
=
\left\{
c\in V_x
\mid
\operatorname{Reach}_{T_x}(f,c)
\right\}
\]
denote its subtree in the pre-transition state.

The fence may be removed only when
\[
\operatorname{Desc}_{T_x}(f)
\cap
V'_x
=
\emptyset.
\]

\end{definition}

For the following invariant, let
\[
F_x
=
F_{CC}(cc)
\]
when $x=c$, and
\[
F_x
=
F_{RC}(rc)
\]
when $x=r$.

\begin{invariant}[Fence Before Reclamation]
\label{inv:fence-before-reclamation}

For every reachable state, structural reclamation of usable authority is enabled only for capability-tree nodes covered by an active fence.

For every node removed by a reclamation transition,
\[
c\in V_x^{\mathsf{rec}}
\Rightarrow
\exists f\in F_x:
\operatorname{Reach}_{T_x}(f,c).
\]

Every reclamation set is descendant-closed, so removing an internal capability cannot leave an orphaned descendant in the capability tree.

For compute-tree reclamation, an inter-node compute revocation handle may be removed as part of a fenced subtree only after
\[
\mathrm{ResourceSecured}(h_c,\Sigma)
\]
holds. Thus, local reclamation cannot erase the retained reference to escaped inter-node authority before the corresponding resource-side authority has been disabled.

Direct retirement of an inter-node compute handle following an authoritative resource-side acknowledgment is not reclamation of usable authority and therefore does not require a compute fence.

A fence anchor may be removed only after every capability-tree node reachable from that anchor in the pre-transition tree has been removed.

\end{invariant}

\begin{remark}[Fence Effect Without Descendant Enumeration]
\label{rem:no-enumeration}

Installing a fence does not require enumeration of descendant capabilities to make affected authority inactive at the controller where the fence is installed.

The controller records the fence anchor in
\[
F_{CC}
\]
or
\[
F_{RC},
\]
and subsequent activity checks reject usable capabilities whose capability-tree path contains that anchor. The security effect of the local fence therefore does not wait for structural traversal or reclamation.

Cleanup may subsequently traverse the fenced subtree. If a fenced compute subtree contains inter-node compute handles, those handles must also be processed so that the corresponding resource-side authority is secured. This propagation is separate from the local fence effect and does not delay denial at the controller where the fence has already been installed.

\end{remark}